\DeclareFontShape{T1}{lmr}{bx}{sc} { <-> ssub * cmr/bx/sc }{}
\newcommand{\Tableofcontents}{
    \thispagestyle{empty}
    \pagenumbering{gobble}
    \clearpage
    \setcounter{tocdepth}{3}
    \tableofcontents
    \thispagestyle{empty}
    \clearpage
    \pagenumbering{arabic}
}
\newcommand{\func}[1][\relax]{\ensuremath{\mathcal{F}_{\mathsf{#1}}}\xspace}
\newcommand{\facast}{\func[a\mhyphen cast]}
\newcommand{\favss}{\func[a\mhyphen vss]}
\newcommand{\faba}{\func[a\mhyphen ba]}
\newcommand{\ftaba}{\func[trunc\mhyphen a\mhyphen ba]}
\newcommand{\fcaba}{\func[conc\mhyphen a\mhyphen ba]}
\newcommand{\faocc}{\func[a\mhyphen occ]}
\newcommand{\faole}{\func[a\mhyphen ole]}
\newcommand{\fasmt}{\func[a\mhyphen smt]}
\newcommand{\fedsec}{\func[ed\mhyphen sec]}
\newcommand{\prot}[1][\relax]{\ensuremath{\Pi_{\mathsf{#1}}}\xspace}
\newcommand{\paocc}{\prot[a\mhyphen occ]}
\newcommand{\pcaba}{\prot[conc\mhyphen a\mhyphen ba]}
\newcommand{\pacastp}{\prot[a\mhyphen cast\textsuperscript{+}]}
\newcommand{\pspread}{\prot[spread]}
\newcommand{\pselect}{\prot[select]}
\newcommand{\patc}{\prot[a\mhyphen tc]}
\newcommand{\zo}{\{0,1\}}
\renewcommand{\Pr}{{\mathrm {Pr}}}
\mathchardef\mhyphen="2D
\newcommand{\poly}{\mathsf{poly}}
\newcommand{\ie}  {i.e.,\xspace}
\newcommand{\etal}{\textit{et al.}}
\newcommand{\iith}[1]{$#1$\textsuperscript{th}\xspace}
\newcommand{\jth}{\iith{j}}
\newcommand{\kth}{\iith{k}}
\newcommand{\calE}{\ensuremath{\mathcal{E}}\xspace}
\newcommand\ZZ{\ensuremath{\mathbb Z}\xspace}
\newcommand\NN{\ensuremath{\mathbb N}\xspace}
\newcommand\FF{\ensuremath{\mathbb F}\xspace}
\newcommand{\vcommand}[1]{\ensuremath{\texttt{#1}}\xspace}
\newcommand{\revealcmd}{\vcommand{reveal}}
\newcommand{\inputcmd}{\vcommand{input}}
\newcommand{\outputcmd}{\vcommand{output}}
\newcommand{\sendcmd}{\vcommand{send}}
\newcommand{\delaycmd}{\vcommand{delay}}
\newcommand{\delaysetcmd}{\vcommand{delay-set}}
\newcommand{\leakagecmd}{\vcommand{leakage}}
\newcommand{\fetchcmd}{\vcommand{fetch}}
\newcommand{\fetchedcmd}{\vcommand{fetched}}
\newcommand{\replacecmd}{\vcommand{replace}}
\newcommand{\sharecmd}{\vcommand{share}}
\newcommand{\reccmd}{\vcommand{rec}}
\newcommand{\sharedcmd}{\vcommand{shared}}
\newcommand{\openedcmd}{\vcommand{opened}}
\newcommand{\earlycmd}{\vcommand{early}}
\newcommand{\setcmd}{\vcommand{set}}
\newcommand{\permutecmd}{\vcommand{permute}}
\newcommand{\sid}{\ensuremath{\mathsf{sid}}\xspace}
\newcommand{\MID}{\ensuremath{\mathsf{mid}}\xspace}
\newcommand{\continue}{\ensuremath{\mathsf{continue}}\xspace}
\newcommand{\type}{\ensuremath{\mathsf{type}}\xspace}
\newcommand{\naive}{na\"{i}ve\xspace}
\newcommand{\naively}{na\"{i}vely\xspace}
\newlength{\saveparindent}
\newlength{\saveparskip}
\newenvironment{tiret}{%
\begin{list}{\hspace{1pt}\rule[0.5ex]{6pt}{1pt}\hfill}{\labelwidth=15pt%
\labelsep=3pt \leftmargin=18pt \topsep=1pt%
\setlength{\listparindent}{\saveparindent}%
\setlength{\parsep}{\saveparskip}%
\setlength{\itemsep}{1pt}}}{\end{list}}
\newcounter{thm}
\newtheorem{theorem}[thm]{Theorem}
\newtheorem{lemma}{Lemma}
\newtheorem{claim}[lemma]{Claim}
\newtheorem{proposition}[lemma]{Proposition}
\theoremstyle{remark}
\newtheorem{remark}[lemma]{Remark}
\newcommand{\assign}{\ensuremath{\mathrel{\vcenter{\baselineskip0.5ex \lineskiplimit0pt \hbox{\scriptsize.}\hbox{\scriptsize.}}}=}}
\newcommand{\adv}{\mathcal{A}}
\renewcommand{\P}{\mathcal{P}}
\renewcommand{\sim}{\mathcal{S}}
\newcommand{\env}{\mathcal{Z}}
\renewcommand{\complement}{\mathrm{c}}
\newcommand{\exec}{\textsc{exec}}
\newcommand{\ideal}{\textsc{ideal}}
\newcommand{\eom}{\texttt{eom}}
\newcommand{\itr}{\mathsf{itr}}
\newcommand{\participated}{\mathsf{participated}}
\newcommand{\phase}{\mathsf{phase}}
\newcommand{\secret}{\mathsf{secret}}
\newcommand{\finished}{\mathsf{finished}}
\newcommand{\inputlbl}{\mathrm{input}}
\newcommand{\outputlbl}{\mathrm{output}}
\newcommand{\sharelbl}{\mathrm{share}}
\newcommand{\reclbl}{\mathrm{rec}}
\newcommand{\attachlbl}{\mathrm{attach}}
\newcommand{\readylbl}{\mathrm{ready}}
\newcommand{\termlbl}{\mathrm{term}}
\newcommand{\contlbl}{\mathrm{cont}}
\newcommand{\vectorlbl}{\mathrm{vector}}
\newcommand{\setlbl}{\mathrm{set}}
\newcommand{\electlbl}{\mathrm{elect}}
\newcommand{\lcm}{\mathrm{lcm}}
\newcommand{\rsample}{\xleftarrow{\mathrm{R}}}
\newcommand{\funcbox}[3][Functionality] {
    \fbox{
        \begin{minipage}{0.95\textwidth}
        {\footnotesize
            \begin{center}\textbf{#1} #2\end{center}
            #3
        }
        \end{minipage}
    }
}
\newcommand{\protbox}[2]{\funcbox[Protocol]{#1}{#2}}
\title{Concurrent Asynchronous Byzantine Agreement\\ in Expected-Constant Rounds, Revisited\thanks{This is the full version of a paper that appeared at \emph{TCC 2023}. The proceedings version is available at \url{https://doi.org/10.1007/978-3-031-48624-1_16}.}}
\author{Ran Cohen\thanks{Efi Arazi School of Computer Science, Reichman University. E-mail: \texttt{cohenran@runi.ac.il}.}
\and Pouyan Forghani\thanks{Department of Computer Science and Engineering, Texas A\&M University. E-mail: \texttt{\{pouyan.forghani,garay,\allowbreak rsp7\}@tamu.edu}.}
\and Juan Garay$^\ddagger$
\and Rutvik Patel$^\ddagger$
\and Vassilis Zikas\thanks{Department of Computer Science, Purdue University. E-mail: \texttt{vzikas@cs.purdue.edu}.}}
\begin{document}

\date{}

\maketitle
\thispagestyle{empty}

\begin{abstract}
It is well known that without randomization, Byzantine agreement (BA) requires a linear number of rounds in the synchronous setting, while it is flat out impossible in the asynchronous setting. The primitive which allows to bypass the above limitation is known as \emph{oblivious common coin} (OCC). It allows parties to agree with constant probability on a random coin, where agreement is oblivious, i.e., players are not aware whether or not agreement has been achieved.

The starting point of our work is the observation that no known protocol exists for information-theoretic multi-valued OCC---i.e., OCC where the coin might take a value from a domain of cardinality larger than 2---with optimal resiliency in the asynchronous setting (with eventual message delivery). This apparent hole in the literature is particularly problematic, as multi-valued OCC is implicitly or explicitly used in several constructions. (In fact, it is often falsely attributed to the asynchronous BA result by Canetti and Rabin [STOC ’93], which, however, only achieves {\bf binary} OCC and {\bf does not} translate to a multi-valued OCC protocol.)

In this paper, we present the first information-theoretic multi-valued OCC protocol in the asynchronous setting with optimal resiliency, i.e., tolerating $t<n/3$ corruptions, thereby filling this important gap. Further, our protocol efficiently implements OCC with an exponential-size domain, a property which is not even achieved by known constructions in the simpler, synchronous setting.

We then turn to the problem of round-preserving parallel composition of asynchronous BA. A protocol for this task was proposed by  Ben-Or and El-Yaniv [Distributed Computing ’03]. Their construction, however, is flawed in several ways: For starters, it relies on multi-valued OCC instantiated by Canetti and Rabin's result (which, as mentioned above, only provides binary OCC). This shortcoming can be repaired by plugging in our above multi-valued OCC construction. However, as we show, even with this fix it remains unclear whether the protocol of Ben-Or and El-Yaniv achieves its goal of expected-constant-round parallel asynchronous BA, as the proof is incorrect. Thus, as a second contribution, we provide a simpler, more modular protocol for the above task. Finally, and as a contribution of independent interest, we provide proofs in Canetti's Universal Composability framework; this makes our work the first one offering composability guarantees, which are important as BA is a core building block of secure multi-party computation protocols.
\end{abstract}

\Tableofcontents

\section{Introduction}
\label{sec:intro}

\emph{Byzantine agreement} (BA)~\cite{pease1980reaching,lamport1982byzantine} enables $n$ parties to reach agreement on one of their inputs in an adversarial setting, facing up to $t$ colluding and cheating parties. The core properties require all honest parties to eventually terminate with the same output (\emph{agreement}), which equals their input value in case they all begin with the same common input (\emph{validity}). BA and its closely related single-sender variant, \emph{broadcast}, are fundamental building blocks in the construction of cryptographic protocols, in particular for \emph{secure multi-party computation} (MPC)~\cite{yao1982protocols,goldreich1987how,ben-or1988completeness,chaum1988multiparty}, in which parties wish to privately compute a joint function over their inputs.

We consider a complete network of authenticated and private point-to-point (P2P) channels, which enables every pair of parties to communicate directly.
The central settings in which BA has been studied are:
\begin{description}
\item[The \emph{synchronous} setting.]
Here the protocol proceeds in a round-by-round fashion, and messages sent in a given round are guaranteed to be delivered by the start of the next round.
The round structure can be achieved given synchronized clocks and a known bound on message delivery, and enables the use of \emph{timeouts}. This clean abstraction allows for simpler analyses of protocols, but comes at the cost that parties must wait for the worst-case delay in each round before they can proceed.
\item[The \emph{asynchronous} setting.]
Here no assumptions are made on the clocks or on the bound of message delivery (other than that each message will be eventually delivered), and messages may be adversarially delayed by an arbitrary (yet finite) amount of time. The main challenge is that timeouts cannot be used, implying the inability to distinguish a slow honest party from a silent, noncooperative corrupted party. On the positive side, asynchronous protocols can advance as fast as the network allows, irrespectively of the worst-case delay, and each party can proceed to the next step as soon as it gets sufficiently many messages.
\end{description}

The \emph{feasibility} of BA is inherently related to the synchrony assumptions of the system. Synchronous BA with \emph{perfect} security is achievable with deterministic protocols for $t<n/3$ \cite{pease1980reaching,BGP89,garay1998fully}; this bound is tight in the plain model\footnote{That is, without setup assumptions and without imposing resource restrictions on the adversary.} even when considering weaker, computational security \cite{pease1980reaching,fischer1986easy,borcherding1996levels}, but can be overcome with setup assumptions, yielding computationally secure BA \cite{dolev1983authenticated}, and even information-theoretically secure BA \cite{pfitzmann1992unconditional}, for $t<n/2$.\footnote{The bound $t<n/2$ is tight for BA \cite{Fitzi03}; however, under the same setup assumptions, broadcast can be solved for any number of corruptions.}
On the other hand, deterministic asynchronous BA (A-BA) is impossible even facing a single crash failure \cite{fischer1985impossibility}, and randomized solutions are a necessity. Following Ben-Or \cite{ben-or1983another} and Rabin \cite{rabin1983randomized}, information-theoretic randomized A-BA has been achieved with $t<n/3$ corruptions \cite{bracha1987asynchronous,canetti1993fast,abraham2008almost}, which, as opposed to the synchronous case, is a tight bound even given setup and cryptographic assumptions \cite{DLS88,BT85}.

\paragraph{Round complexity of BA.}
In the synchronous setting, it is known that \emph{deterministic} BA requires $t+1$ rounds \cite{fischer1982lower,dolev1983authenticated}, a bound that is matched by early feasibility results \cite{pease1980reaching,dolev1983authenticated,garay1998fully}. It is also known that $t$-secure randomized BA for $t\in\Theta(n)$ cannot terminate in a \emph{strict} constant number of rounds \cite{CMS89,KY86,CPS19}; yet, \emph{expected} constant-round protocols have been constructed both for $t<n/3$ in the plain model \cite{feldman1997optimal} and for $t<n/2$ in the PKI model \cite{fitzi2003efficient,katz2009expected}.\footnote{Fitzi and Garay \cite{fitzi2003efficient} devised expected-constant-round BA for $t<n/2$ in the PKI model under number-theoretic assumptions. Katz and Koo \cite{katz2009expected} established a similar result from the minimal assumption of digital signatures, which yields an information-theoretic variant using pseudo-signatures \cite{pfitzmann1992unconditional}.} The latter protocols follow the approach of Rabin~\cite{rabin1983randomized} and rely on an \emph{oblivious common coin} (OCC), that is, a distributed coin-tossing protocol over a domain $V$ such that the output of every honest party is a common random value $v\in V$ with constant probability~$p$, but with probability $1-p$ is independently and adversarially chosen; the coin toss is ``oblivious'' since the parties cannot distinguish between a successful coin toss and an adversarial one.\footnote{This primitive is sometimes known as a ``weak'' common coin in the literature.}

Loosely speaking, round complexity in the asynchronous setting can be defined based on the expected number of times an honest party has to alternate between sending and receiving messages that are ``causally related''~\cite{coretti2016constant}. In an unpublished manuscript, Feldman~\cite{feldman1989asynchronous} generalized the expected-constant-round synchronous protocol of~\cite{feldman1988optimal} to asynchronous networks with $t<n/4$; at the core of the construction lies a {\bf binary} (asynchronous) OCC protocol which actually has resiliency $\min(t',\lceil n/3\rceil-1)$, where $t'$ is the corruption threshold of an \emph{asynchronous verifiable secret sharing} (A-VSS) scheme.\footnote{Feldman's A-VSS suffers from a negligible error probability. An \emph{errorless} A-VSS scheme for $t<n/4$ is given in~\cite{ben-or1993asynchronous} and used to construct a \emph{perfectly} secure asynchronous MPC protocol with resiliency $t<n/4$.} Canetti and Rabin~\cite{canetti1993fast} constructed A-VSS for $t<n/3$, thereby obtaining A-BA in expected-constant rounds for the optimal threshold $t<n/3$.

\paragraph{Concurrent BA.}
All of the above constructions are for solving a \emph{single} instance of BA. However, most applications, particularly MPC protocols, require composing multiple instances of BA: sequentially, in parallel, or concurrently. While the composition of synchronous, deterministic protocols is relatively simple (although care must be taken in the cryptographic setting \cite{LLR06}), composing expected-constant-round protocols with probabilistic termination is a much more challenging task.

Indeed, Ben-Or and El-Yaniv~\cite{BE88,ben-or2003resilient} observed that running $m$ instances of a probabilistic-termination protocol in parallel may incur a blow-up in the expected number of rounds until they all terminate. The technical reason is that the expectation of the maximum of $m$ independent, identically distributed random variables does not necessarily equal the maximum of their expectations. In particular, for expected-constant-round BA with a geometric termination probability (which is the case in all known protocols), the parallel composition of $m$ instances terminates after expected $\Theta(\log m)$ rounds \cite{cohen2019probabilistic}. Further, even when all parties start the protocol together, simultaneous termination is not guaranteed as the adversary can force some honest parties to terminate after the others; although this gap can be reduced to a single round, inaccurate sequential re-synchronization of $\ell$ instances of BA may lead to an exponential blow up in $\ell$ if not done with care \cite{LLR02}. Following \cite{ben-or2003resilient,LLR02,LLR06,katz2009expected}, recent works have shown how to compose synchronous BA in a round-preserving way with simulation-based security in Canetti's framework for universal composability (UC) \cite{canetti2020universally}, with optimal resiliency and \emph{perfect} security in the plain model for $t<n/3$ \cite{cohen2019probabilistic}, and with cryptographic security in the PKI model for $t<n/2$ \cite{cohen2021round}.

In the asynchronous setting, the parties are not assumed to begin the protocol at the same time, so intuitively, sequential composition is not problematic. However, running $m$ instances of expected-constant-round A-BA concurrently would yield a $\Theta(\log m)$ blowup as in the synchronous case. Ben-Or and El-Yaniv \cite{ben-or2003resilient} further showed how to execute $m$ instances of A-BA in expected-constant rounds and with optimal resiliency $t<n/3$; however, their solution is more complicated than the synchronous one. In addition, they only prove a property-based security definition of A-BA that does not necessarily address modern security requirements such as security under composition, or facing adaptive adversaries.

\subsection{Concurrent A-BA in Expected-Constant Rounds: Cracks in the Concrete}
The underlying idea behind the synchronous protocol of Ben-Or and El-Yaniv~\cite{ben-or2003resilient} for parallel BA is to execute each BA instance multiple times over the same inputs, but only for a constant number of rounds. For a suitable choice of parameters, this ensures that with high probability each party will have obtained at least one output value in each such batch. To coordinate these outputs, the parties then run an \emph{oblivious leader election} (OLE) protocol, which guarantees that with constant probability, a random leader is elected. In the event that the leader is honest and the parties obtained an output in each batch (which, again, occurs with constant probability) the parties will terminate; otherwise they repeat the process.

The same general technique underlies Ben-Or and El-Yaniv's asynchronous protocol, but great care is needed to deal with the low message dispersion inherent in asynchronous networks while maintaining optimal resiliency $t<n/3$.
A closer look at their security proof indeed raises a number of subtle issues. First, they point to Canetti and Rabin~\cite{canetti1993fast} for instantiating the (asynchronous) OLE primitive used in their construction (called \emph{A-Election()}). (Recall that Canetti and Rabin construct an OCC to obtain A-BA in expected-constant rounds; an $n$-valued OCC would indeed imply OLE.)
As it turns out, the OCC construction in~\cite[Sec.\ 8]{canetti1993fast} (as well as the more detailed versions in \cite[Sec.\ 5.7]{canetti1996studies} and \cite[Sec.\ 8]{CR98}) is only \emph{binary} (i.e., it only works for $V=\zo$), and it does not seem straightforward to generalize to larger (non-constant-sized) domains. 

Further, running $\log n$ executions of a binary OCC, with agreement probability $p$, in parallel to make it multi-valued yields only $p^{\log n}\in 1/\poly(n)$ probability of agreement, and as long as the coin is not \emph{perfectly} fair (i.e., guaranteeing that all honest parties output $0$ with probability $1/2$, and $1$ with probability $1/2$, which is impossible in the asynchronous setting~\cite{freitas2022distributed}), that would not imply OLE with constant success probability. We can also consider a different type of ``strong'' common coin: one which achieves arbitrarily small bias, possibly with guaranteed agreement. Such an $\epsilon$-biased binary coin would in fact achieve arbitrarily small bias even when run in parallel, and there are even known protocols for this task that simultaneously enjoy almost-sure termination and optimal resiliency ($t<n/3$)~\cite{DBLP:journals/dc/AbrahamDS22}, but these coins do not have constant round complexity, since at the very least there will be a dependence on $\epsilon$.\footnote{The protocol in~\cite{DBLP:journals/dc/AbrahamDS22} actually incurs an additional overhead of several factors of $n$, due to its use of ACS and almost-surely terminating A-BA.}

We note that Patra \etal~\cite{patra2014asynchronous} claim to construct a $(t+1)$-bit asynchronous OCC, but their main focus is on communication complexity, and the agreement probability of their protocol is no better than would be obtained by running $t+1$ executions of a binary OCC protocol (i.e., exponentially small). Techniques used to get OLE in the synchronous setting~\cite{feldman1988optimal,katz2009expected} do not seem to extend in asynchrony for $t<n/3$ (we elaborate on this in Section~\ref{sec:results}). Thus, to the best of our knowledge, no existing OCC construction is \emph{simultaneously} optimally resilient, multi-valued, and asynchronous, without relying on computational assumptions (in Section~\ref{sec:related} we discuss solutions in the cryptographic setting).

Second, there is a subtle issue in the logic of one of the proofs in~\cite{ben-or2003resilient}. This issue raises concerns about the validity of the proof claiming an expected-constant round complexity of one of the main subroutines---namely, the $\pselect$ subroutine, which handles the shortcomings of their message-distribution mechanism. Specifically, in the analysis of $\pselect$ it is claimed that if the leader is chosen from a certain set, the protocol will terminate. However, further examination reveals that there are scenarios in which the protocol may not terminate for certain leaders from that set. As a result, this issue casts doubt on the promised expected-constant round complexity of their concurrent A-BA protocol. (Refer to Appendix~\ref{app:be} for more details on this issue.)

Finally, the concurrent asynchronous (resp., synchronous) BA protocol in \cite{ben-or2003resilient} relies on \emph{multi-valued} asynchronous (resp., synchronous) BA in expected-constant rounds.\footnote{This is true even if one is interested only in \emph{binary} concurrent BA (i.e., when the input vectors consist of bits). Multi-valued BA is needed to agree on the leader's output vector.} Recall that running the binary protocols in~\cite{feldman1997optimal} or~\cite{canetti1993fast} $\omega(1)$ times in parallel would terminate in expected $\omega(1)$ rounds, so they cannot be \naively used for this task. In the synchronous setting, Turpin and Coan~\cite{turpin1984extending} extended binary BA to multi-valued BA for $t<n/3$ with an overhead of just two rounds.
Ben-Or and El-Yaniv claim that this technique can be adapted for asynchronous networks by using Bracha's ``A-Cast'' primitive~\cite{bracha1987asynchronous} for message distribution. However, a closer look reveals that although the Turpin-Coan extension works (with appropriate modifications) in the asynchronous setting for $t<n/5$, it {\bf provably does not work} when $t\ge n/5$, regardless of the specific choice of the underlying binary A-BA protocol and even when the adversary is limited to static corruptions (see Appendix~\ref{app:tc}).

More recently, an optimally resilient multi-valued A-BA protocol with expected-constant round complexity was proposed by Patra~\cite{patra2011error}, but it relies on the \emph{Agreement on a Common Subset} (ACS) protocol in~\cite{ben-or1994asynchronous}; however, as we explain below, this ACS protocol does not achieve expected-constant round complexity without some modifications, which require either expected-constant-round concurrent A-BA (this would be circular), or information-theoretic asynchronous OLE with optimal resiliency. Fortunately, Most\'{e}faoui and Raynal~\cite{mostefaoui2017signature} recently gave a black-box, constant-round reduction from multi-valued to binary A-BA for $t<n/3$,\footnote{They are also concerned with obtaining $O(n^2)$ message complexity. The novelty of their result, even without this more stringent requirement, does not seem to be acknowledged in the paper.} using just one invocation of the underlying binary protocol as in~\cite{turpin1984extending}.

We emphasize that the asynchronous protocol of Ben-Or and El-Yaniv \cite{BE88,ben-or2003resilient} lies, either explicitly or implicitly, at the core of virtually every round-efficient asynchronous MPC construction \cite{ben-or1993asynchronous,ben-or1994asynchronous,hirt2005cryptographic,beerliova-trubiniova2007simple,hirt2008asynchronous,cohen2016asynchronous,coretti2016constant,BZL20,LLMMT20}.
The concerns raised above regarding the result of
\cite{ben-or2003resilient} render this extensive follow-up work unsound.
In this paper, we revisit this seminal result and rectify these issues.

\subsection{Overview of Our Results}
\label{sec:results}

We now present an overview of our results, which are three-fold, including a detailed exposition of our techniques.

\paragraph{Multi-valued and asynchronous oblivious common coin.}

As a starting point, we look at the \emph{binary} asynchronous OCC protocol of Canetti and Rabin~\cite{canetti1993fast}. The idea (following the approach of \cite{feldman1997optimal} in the synchronous setting and \cite{feldman1989asynchronous} in the asynchronous setting) is that each party secret-shares a random vote for each party, using an optimally resilient A-VSS scheme. Each party accepts $t+1$ of the votes cast for him (at least one of which must have come from an honest party), and once it is determined that enough secrets have been fixed (based on several rounds of message exchange, using Bracha's A-Cast primitive~\cite{bracha1987asynchronous} for message distribution), the parties begin reconstructing the accepted votes. The sum or ``tally'' of these votes becomes the value associated with the corresponding party. After computing the values associated with an appropriate set of at least $n-t$ parties, an honest party outputs $0$ if at least one of those tallies is $0$, and outputs~$1$ otherwise. Note that each tally must be uniformly random, and the properties of A-Cast guarantee that no two honest parties can disagree on the value of any given tally (although up to $t$ tallies may be ``missing'' in an honest party's local view); moreover, Canetti and Rabin show using a counting argument  that at least $n-2t>n/3$ of the tallies are known to all honest parties (i.e., common to their local views).\footnote{An enhanced version of this counting argument, using the ``Gather'' protocol of Abraham \etal~\cite{abraham2023reaching}, can be used to increase the size of this intersection to $n-t>2n/3$.} This can be used to show that with probability at least $1/4$ all honest parties output $0$, and similarly for $1$.

In the conference version of Feldman and Micali's paper~\cite{feldman1988optimal}, there is a brief remark suggesting that the synchronous version of the above protocol can be modified to obtain (oblivious) leader election.\footnote{This claim no longer appears in the ICALP~\cite{feldman1989optimal} or journal~\cite{feldman1997optimal} versions of the paper, or in Feldman's thesis~\cite{feldman1988thesis}.} Rather than outputting a bit, parties output the index of the party whose tally is minimum; when the domain of secrets is large enough, with constant probability the same \emph{honest} party is chosen. This approach was fully materialized by Katz and Koo~\cite{katz2009expected} for $t<n/3$ in the information-theoretic setting and $t<n/2$ in the computational setting. We note that what is obtained is not exactly OLE as we have defined it, as the adversary can of course bias the index of the elected party, but nevertheless, this is sufficient for the concurrent BA protocols of Ben-Or and El-Yaniv~\cite{ben-or2003resilient}.

Unfortunately, the approach effective in the synchronous setting cannot be directly applied in the asynchronous setting while maintaining optimal resiliency. The challenge arises from the fact that the adversary can selectively remove $t$ coordinates from the honest parties' local views of the tallies, ensuring that the leader is not chosen from the parties corresponding to those missing coordinates. This poses a significant obstacle as the original concurrent A-BA protocol by Ben-Or and El-Yaniv~\cite{ben-or2003resilient} terminates successfully when the leader is honest and selected from an adversarially chosen subset of $n-2t$ parties. Consequently, when $t<n/3$, the size of this subset is only $t+1$, allowing the adversary to reduce the probability of choosing an appropriate leader to as low as $\frac{1}{n-t}$. The same issue arises in our simplified concurrent A-BA protocol, where we also require the leader to be selected from an adversarially chosen subset of parties with size $n-2t$. Again, when $t<n/3$, this subset can be of size $t+1$, leading to the same challenge. However, when $t<n/4$, the set of potential appropriate leaders becomes larger in both concurrent A-BA protocols, enabling us to circumvent this issue. Therefore, in addition to the inherent value in obtaining a true OCC (where a successful coin toss produces a uniform value), we specifically need such a primitive to obtain optimal resiliency for concurrent A-BA.

In summary, no OLE construction---and therefore no concurrent A-BA protocol in expected-constant rounds---exists with optimal resiliency $t<n/3$ in asynchronous networks. We now describe our solution to this problem, which is based on the following simple combinatorial observation. If we work over a field of size $N\in\Theta(n^2)$, then with \emph{constant} probability, at least one value will be repeated in the global view of tallies, and, moreover, all repeats will occur within the subset of indices known to all honest parties. The intuition for this fact is that when $N\in O(n^2)$ there is, due to the birthday paradox, at least one repeat in any constant fraction of the indices with constant probability, and when $N\in\Omega(n^2)$, there are no repeats at all with constant probability. There is a ``sweet spot'' between these extremes that can be leveraged to extract shared randomness: Honest parties output the value that is repeated in their local views of the tallies and has the minimum index in the vector of tallies. With this modification of Canetti and Rabin's protocol~\cite{canetti1993fast}, we obtain the first $n^2$-valued asynchronous OCC for $t<n/3$ in the information-theoretic setting. We remark that the above combinatorial observation at the heart of our construction may be of independent interest.

It is straightforward to extend our OCC protocol to accommodate arbitrary domains $V$. One approach is to work over a prime field of size at least $\lcm(n^2,|V|)$. In this setting, we can still identify ``repeats'' by considering the tallies modulo $n^2$. Once the repeat with minimum index is determined, we can generate a common random output by reducing the corresponding (original) field element modulo $|V|$. In particular, when $|V|=n$, we get asynchronous OLE with optimal resiliency.

Proving the security of our multi-valued OCC protocol in a \emph{simulation-based} manner is not without its own challenges. The issue is that the simulator must expose a view of the vector of tallies that both adheres to the distribution in the real world and is consistent with the random value chosen by the OCC functionality (in the case of a successful coin toss) in the ideal world. While the simulator can easily determine the exact set of indices known to all honest parties from its internal execution with the real-world adversary, properly sampling the ``repeat pattern'' according to these constraints is a delicate task; furthermore, since the functionality is only parameterized by a (constant) lower bound on the probability of a successful coin toss, the simulator must handle the complementary case carefully in order to avoid skewing the distribution. It is not immediately clear how to perform this inverse sampling efficiently.

A heavy-handed solution is to simply have the functionality sample the tallies itself, and then determine the output based on the location of repeats relative to the subset of indices (supplied by the simulator) that are known to all honest parties. This protocol-like functionality would certainly allow for simulation---and would in fact be sufficient for our purposes---but its guarantees are more difficult to reason about and, more importantly, it cannot be realized by other protocols! Instead, we construct a simulator that, given the exact probabilities of certain events in the protocol, can also efficiently sample from those events, potentially conditioned on the output of a successful coin toss. By selecting the appropriate sampling procedure, the simulator can derive a vector of tallies that preserves the (perfect) indistinguishability of the real and ideal worlds. Equipped with the means to carry out the inverse sampling, we can now realize a more abstract (and natural) OCC functionality, which is ready to be plugged into higher-level protocols.

\paragraph{Simplified concurrent A-BA.}

Ben-Or and El-Yaniv \cite{ben-or2003resilient} devised an expected-constant-round concurrent A-BA protocol. However, in addition to relying on unspecified building blocks, as mentioned above, it suffers from logical issues in its proof. Although our new OCC protocol can instantiate the missing primitive, doubts remain about the expected-constant round complexity due to a lingering issue in the proof. This issue stems from the steps taken to address low message dispersion, and the possibility of resolving it without changing the protocol is unclear. To tackle this, we redesign the message-distribution phase, avoiding the problem in the proof and obtaining stronger guarantees. These guarantees simplify the protocol structure, achieving a level of simplicity comparable to the synchronous solution.

In more detail, we build on Ben-Or and El-Yaniv's work, where parties initiate multiple executions for each A-BA instance that are ``truncated'' after a fixed number of iterations. However, we adopt a different approach to message distribution, allowing us to establish the rest of the protocol based on the simpler structure of their synchronous solution. Similarly to~\cite{ben-or2003resilient}, we set parameters to ensure that each party receives at least one output from each batch of truncated A-BA executions (for each instance) with constant probability. Based on their local results from those truncated A-BAs, parties create suggestions for the final output. Subsequently, the message-distribution phase begins with parties running a binary A-BA to verify a precondition, ensuring the ability to validate each other's suggestions. If they choose to proceed, they A-Cast their suggestions and only accept those they can validate using their local results from the truncated A-BAs. This validation relies on the property that honest parties terminate within two consecutive iterations in each A-BA execution, ensuring the validity of suggestions even when provided by corrupted parties. Once parties receive $n-t$ suggestions, they proceed by A-Casting the set of the first $n-t$ suggestions they receive, along with the identities of their providers. The checked precondition guarantees that everyone can move on to the next step.

Parties then wait until they receive enough suggestions and enough sets, ensuring that at least $n-t$ sets are completely contained within the set of suggestions they received. By employing a counting argument similar to~\cite{feldman1989asynchronous,canetti1993fast}, we can ensure the delivery of suggestions from at least $n-2t>n/3$ parties to all honest participants. Moreover, the validation process applied to the suggestions guarantees that each honest party only accepts valid suggestions, even if they originate from corrupted sources. These robust guarantees in the distribution of suggested outputs establish that the $n-2t$ suggestions commonly received by honest parties are legitimate outputs. This enables us to directly utilize the synchronous protocol in the asynchronous setting.

We employ our new asynchronous OCC protocol to elect a leader for party coordination. Every party adopts the leader's suggestion and runs a multi-valued A-BA to handle the obliviousness of the leader-election mechanism. If the parties successfully agree on a random leader who happens to be chosen from the $n-2t>n/3$ parties whose suggestions are commonly received, which occurs with constant probability, all honest parties initiate the A-BA protocol with the leader's suggestion and output the same value. If the leader is not among those parties (or there is disagreement on the leader), precautions are taken to ensure no malicious value is output. For this purpose, we utilize an ``intrusion-tolerant'' A-BA protocol that guarantees the output to be either a default value $\bot$ or one of the honest parties' inputs.\footnote{Ben-Or and El-Yaniv~\cite{ben-or2003resilient} introduced and used a strengthened property for (A-)BA without naming it, which was later called ``non-intrusion'' validity in~\cite{mostefaoui2017signature}. Non-intrusion validity lies between standard validity and ``strong'' validity \cite{fitzi2003efficient}, as it requires that a value decided by an honest party is either an honest party's input or a special symbol $\bot$ (i.e., the adversary cannot intrude malicious values into the output).} Thus, in case the output is not $\bot$ the parties can safely output it, and otherwise they can simply start over. Unlike the synchronous solution in~\cite{ben-or2003resilient}, which includes a validation step in between the agreement on the leader's suggestion and a final agreement on termination, our concurrent A-BA protocol involves validation in the message-distribution phase, and so consensus on a non-default value can directly be used to agree on termination.

By following the above approach, we sidestep the issues in Ben-Or and El-Yaniv's proof and achieve a significantly simpler protocol structure for the asynchronous setting compared to the one presented in~\cite{ben-or2003resilient}. Somewhat surprisingly, the resulting protocol is conceptually as simple as its synchronous counterpart.

We now elucidate exactly how our findings enable round-efficient asynchronous MPC, in both the computational and information-theoretic settings, while maintaining optimal resiliency. Asynchronous MPC crucially relies on ACS for determining the set of input providers \cite{ben-or1993asynchronous}; this task commonly boils down to concurrently executing $n$ instances of A-BA. Our expected-constant-round concurrent A-BA protocol can be directly plugged into the asynchronous MPC protocols in~\cite{ben-or1993asynchronous,hirt2005cryptographic,beerliova-trubiniova2007simple,hirt2008asynchronous,cohen2016asynchronous,LLMMT20}, preserving their (expected) round complexity. However, despite folklore belief, concurrent A-BA cannot be used in a black-box way in the BKR protocol~\cite{ben-or1994asynchronous} to achieve asynchronous MPC with round complexity linear in the depth of the circuit. The issue (as pointed out in \cite{DBLP:conf/ccs/ZhangD22,AAPS23}) is that the ACS protocol outlined in~\cite{ben-or1994asynchronous} assigns input values for certain A-BA instances based on the outputs of other instances. This problem also affects other works like~\cite{coretti2016constant,BZL20} that rely on~\cite{ben-or1994asynchronous}.

Fortunately, this issue can be readily addressed by modifying the ACS protocol from~\cite{ben-or1993asynchronous} (which is secure for $t<n/3$). Recall that this protocol involves a preprocessing step for distributing input shares before initiating concurrent A-BA, which necessitates $O(\log n)$ rounds. Replacing the $O(\log n)$-round preprocessing step with the constant-round ``Gather'' protocol described in~\cite{abraham2023reaching}—an enhanced version of Canetti and Rabin's counting argument~\cite{canetti1993fast}—results in an ACS protocol with constant round complexity. This modified ACS protocol can be seamlessly integrated into both~\cite{ben-or1993asynchronous} and~\cite{ben-or1994asynchronous}, effectively rendering their round complexity independent of the number of parties. Alternatively, one can leverage the expected-constant-round ACS protocol recently proposed by Abraham \etal~\cite{AAPS23}, or the one by Duan \etal~\cite{duan2023practical} (instantiating all building blocks with information-theoretic realizations; in particular, the assumption of a ``Rabin dealer''~\cite{rabin1983randomized}, used for leader election, can be replaced by our multi-valued OCC).

\paragraph{Composable treatment of expected-constant-round concurrent A-BA.}

We choose to work in Canetti's Universal Composability (UC) framework~\cite{canetti2020universally}, and as such, we prove the security of our protocols in a \emph{simulation-based} manner. (See Section~\ref{sec:uc} for a high-level overview.) The UC framework provides strong composability guarantees when secure protocols are run as a subroutine in higher-level protocols (this is absolutely critical in our context given that expected-constant-round concurrent A-BA is a key building block in many round-efficient cryptographic protocols, as mentioned above), and even in \emph{a priori} unknown or highly adversarial environments (such as asynchronous networks). Moreover, it enables us to provide a modular, bottom-up security analysis. However, obtaining a composable and round-preserving treatment of ``probabilistic-termination'' BA is non-trivial, as pointed out by Cohen \etal~\cite{cohen2019probabilistic,cohen2021round} in the synchronous setting. In the following, we discuss some unique issues in asynchrony and how we address them.

To model eventual message delivery, we follow~\cite{katz2013universally,coretti2016constant,cohen2016asynchronous} and require parties to repeatedly attempt fetching messages from the network. The first $D$ such requests are ignored by the functionality, where $D$ is a value provided by the adversary in unary so that it remains bounded by the adversary's running time (i.e., so that messages cannot be delayed \emph{indefinitely}). It is straightforward then to derive a formal notion of asynchronous rounds in UC, based on this mechanism. We remark that unlike in the synchronous setting~\cite{katz2013universally}, (asynchronous) rounds cannot be used by the environment to distinguish the real and ideal worlds in the asynchronous setting. Thus, as opposed to \cite{cohen2019probabilistic,cohen2021round}, our functionalities are round-unaware. Similarly, we do not need to deal with standard issues in sequential composition, namely, non-simultaneous start/termination (``slack''), since asynchronous protocols are already robust to slack. Indeed, it is entirely possible that some parties receive output from a (secure) asynchronous protocol before other parties have even started the protocol!

On the other hand, the issue of \emph{input incompleteness} is trickier to address. This refers to the problem that in the asynchronous setting, the inputs of up to $t$ honest parties may not be considered in the result of the computation; the remaining $n-t$ parties form a ``core set'' of input providers. Note that in the worst case, the core set is adversarially chosen and includes all corrupted parties. Prior work~\cite{coretti2016constant,cohen2016asynchronous} allowed the adversary to send an explicit core set to the functionality; however, this approach does not always accurately model what happens in the real world, and can cause difficulties in the simulation. Instead, our solution is to allow the adversary to define the core set \emph{implicitly}, by delaying the submission of inputs to the functionality in the same way that it delays the release of outputs from the functionality. Using this novel modeling approach, we obtain updated functionalities for some standard asynchronous primitives that more accurately capture realizable security guarantees.

\subsection{Additional Related Work}
\label{sec:related}

As mentioned earlier, the $t+1$ lower bounds for deterministic BA \cite{fischer1982lower,dolev1983authenticated} were extended to rule out \emph{strict-}constant-round $t$-secure randomized BA for $t=\Theta(n)$ \cite{CMS89,KY86,CPS19}; these bounds show that any such $r$-round BA must fail with probability at least $(c\cdot r)^{-r}$ for a constant~$c$, a result that is matched by the protocol of \cite{GGL22}. Cohen \etal~\cite{CHMOS22} showed that for $t>n/3$, two-round BA is unlikely to reach agreement with constant probability, implying that the \emph{expected} round complexity must be larger; this essentially matches Micali's BA \cite{Micali17} that terminates in three rounds with probability $1/3$. Attiya and Censor-Hillel \cite{AC10} extended the results on worst-case round complexity for $t=\Theta(n)$ from \cite{CMS89,KY86} to the asynchronous setting, showing that any $r$-round A-BA must fail with probability $1/c^r$ for some constant $c$.

In the dishonest-majority setting, expected-constant-round broadcast protocols were initially studied by Garay \etal~\cite{garay2007round}, who established feasibility for $t=n/2+O(1)$ as well as a negative result. A line of work \cite{FN09,CPS20,WXDS20,WXSD20,SLMNPT23} established expected-constant-round broadcast for any constant fraction of corruptions under cryptographic assumptions.

Synchronous and (binary) asynchronous OCC protocols in the information-theoretic setting were discussed earlier. Using the synchronous protocol in~\cite{ben-or2003resilient}, Micali and Rabin~\cite{micali1990collective} showed how to realize a \emph{perfectly unbiased} common coin in expected-constant rounds for $t<n/3$ over secure channels (recall that this task is impossible in asynchronous networks \cite{freitas2022distributed}). In the cryptographic setting, both synchronous and asynchronous OCC protocols with optimal resiliency are known, relying on various computational assumptions; we mention a few here. Beaver and So~\cite{beaver1993global} gave two protocols tolerating $t<n/2$ corruptions in synchronous networks, which are secure under the quadratic residuosity assumption and the hardness of factoring, respectively. Cachin \etal~\cite{cachin2005random} presented two protocols for $t<n/3$ and asynchronous networks, which are secure in the random oracle model based on the RSA and Diffie-Hellman assumptions, respectively. Nielsen~\cite{nielsen2002threshold} showed how to eliminate the random oracle and construct an asynchronous OCC protocol relying on standard assumptions alone (RSA and DDH). Although these constructions are for asynchronous networks, they can be readily extended to work in synchronous networks for $t<n/2$ (e.g., so that they can be used in the computationally secure BA protocol from~\cite{fitzi2003efficient}). We also note that while the resiliency bounds for asynchronous OCC protocols coincide in the information-theoretic and computational settings, working in the latter typically yields expected-constant-round A-BA protocols that are much more efficient in terms of communication complexity (i.e., \emph{interaction}) than the unconditionally secure protocol of Canetti and Rabin~\cite{canetti1993fast}.

Lastly, we discuss solutions to multi-valued A-BA in the computational setting. Cachin \etal~\cite{cachin2001secure} studied a more general version of this problem, in which the validity property is replaced with ``external'' validity, where the output domain can be arbitrarily large but the agreed-upon value must only satisfy an application-specific predicate. They gave a construction for multi-valued ``validated'' A-BA that runs in expected-constant rounds for $t<n/3$, assuming a PKI and a number of threshold cryptographic primitives (including an asynchronous OCC), and used it to obtain an efficient protocol for asynchronous \emph{atomic} broadcast. Recently, Abraham \etal~\cite{abraham2019asymptotically} (and follow-ups, e.g.,~\cite{lu2020dumbo,abraham2023reaching,gao2022efficient}) have improved the communication complexity. The work of Fitzi and Garay~\cite{fitzi2003efficient} also considered \emph{strong} (A-)BA. Here we require ``strong'' validity: the common output must have been one of the honest parties' inputs (note that this is equivalent to standard validity in the binary case). When the size of the input domain is $m>2$, neither a Turpin-Coan-style reduction~\cite{turpin1984extending} nor the obvious approach of running $\log m$ parallel executions of a binary protocol would suffice to realize this stronger notion of agreement; indeed, Fitzi and Garay showed that strong A-BA is possible if and only if $t<n/(m+1)$. This bound holds in both the information-theoretic and computational settings. Their unconditionally secure asynchronous protocol actually involves oblivious coin flipping on the domain, but $m>2$ forces $t<n/4$ and the binary asynchronous OCC protocols in~\cite{feldman1989asynchronous,canetti1993fast} can be extended to multi-valued in this regime, as discussed in Section~\ref{sec:results}.

\subsection{Organization of the Paper}
\label{sec:organization}

The rest of the paper is organized as follows. We start in Section~\ref{sec:prelims} with some preliminaries, including a brief overview of the UC framework. In Section~\ref{sec:model}, we discuss our model in greater depth, formally specifying the asynchronous primitives we use as building blocks and motivating our novel modeling choices along the way. Section~\ref{sec:occ} contains our multi-valued asynchronous OCC construction with optimal resiliency $t<n/3$, and its security proof. Finally, in Section~\ref{sec:concurrent} we present the impactful result  that our OCC construction enables: a simplified and sound protocol for concurrent A-BA in expected-constant rounds. Appendix~\ref{app:be} contains our attack on the asynchronous construction in~\cite{ben-or2003resilient} (along with the required background material), and Appendix~\ref{app:tc} demonstrates that an asynchronous version of the binary to multi-valued BA extension in~\cite{turpin1984extending} works {\bf if and only if} $t<n/5$.
\section{Model and Preliminaries}
\label{sec:prelims}

For $m\in\NN$, we use $[m]$ to denote the set $\{1,\ldots,m\}$. Our protocols often use dynamically growing sets~$S$; when $|S|\ge k$, we denote by $S^{(k)}\subseteq S$ the set containing the first $k$ elements that were added to $S$. We also denote by $\FF$ the field from which the protocols' messages come. Our statements (in particular, the ones about statistical security) assume an (often implicit) security parameter $\kappa$ which is assumed to be linear in $\max\{\log |\FF|, n\}$.

We write our protocols using a series of numbered steps. The interpretation is that a party repeatedly goes through them in sequential order, attempting each instruction. Of course, in the asynchronous setting, the necessary precondition for carrying out an instruction might only be met much later in the protocol, and furthermore messages may be delayed. We discuss the latter below, in Section~\ref{sec:asynchronous}. Regarding the former, we frequently use statements of the form ``Wait until [condition]. Then, [instruction],'' which is to be executed at most once,  or ``If [condition], then [instruction],'' which can be executed multiple times.

\subsection{UC Basics}
\label{sec:uc}

We prove our constructions secure in the UC framework~\cite{canetti2020universally} and we briefly summarize it here. Protocol machines, ideal functionalities, the adversary, and the environment are all modeled as interactive Turing machine (ITM) instances, or ITIs. An execution of protocol $\Pi$ consists of a series of activations of ITIs, starting with the environment $\env$ who provides inputs to and collects outputs from the parties and the adversary $\adv$; parties can also give input to and collect output from sub-parties (e.g., hybrid functionalities), and $\adv$ can communicate with parties via ``backdoor'' messages. Although parties can send messages to one another, $\adv$ is responsible for delivering them, making this model \emph{completely} asynchronous (see Section~\ref{sec:asynchronous} for the way we capture eventual-delivery channels). Protocol instances are delineated using a unique session identifier (SID).

Corruption of parties is modeled by a special \texttt{corrupt} message sent from $\adv$ to the party; upon receipt of this message, the party sends its entire local state to $\adv$, and in all future activations follows the instructions of $\adv$. We emphasize that corruptions can occur at any point during the protocol (i.e., $\adv$ is assumed to be adaptive). Denote by $\exec_{\Pi,\adv,\env}$ the probability distribution ensemble corresponding to the (binary) output of $\env$ at the end of an execution of $\Pi$ with adversary $\adv$.

The ideal-world process for functionality $\func$ is simply defined as an execution of the ideal protocol $\ideal_{\func}$, in which the so-called ``dummy'' parties just forward inputs from $\env$ to $\func$ and forward outputs from $\func$ to $\env$ (in particular, the dummy parties do not communicate with the adversary, but rather the adversary is expected to send backdoor messages directly to $\func$, including corruption requests). Our functionalities implicitly respond to corruption requests in the standard way (e.g., all previous inputs of the newly corrupted party are leaked, and the adversary has the option to replace its output); we refer to~\cite{canetti2020universally} for further details. The adversary in the ideal world is also called the \emph{simulator} and denoted $\sim$; the corresponding ensemble is denoted by $\ideal_{\func,\sim,\env}$.

We are interested in unconditional (aka information-theoretic) security. Thus, we say that a protocol $\Pi$ (statistically) \textit{UC-realizes} an ideal functionality $\func$ if for any computationally unbounded adversary $\adv$, there exists a simulator $\sim$ (which is polynomial in the complexity of $\adv$), such that for any computationally unbounded environment $\env$, we have $\exec_{\Pi,\adv,\env}\approx\ideal_{\func,\sim,\env}$. In case the distribution ensembles are perfectly indistinguishable we say that $\Pi$ UC-realizes $\func$ with \emph{perfect} security.

When $\Pi$ is a $(\mathcal{G}_1,\ldots,\mathcal{G}_n)$-hybrid protocol (i.e., making subroutine calls to $\ideal_{\mathcal{G}_1},\ldots,\ideal_{\mathcal{G}_n}$), we say that $\Pi$ UC-realizes $\func$ in the $(\mathcal{G}_1,\ldots,\mathcal{G}_n)$-hybrid model.

\subsection{Modeling Eventual Message Delivery in UC}
\label{sec:asynchronous}

As mentioned above, the base communication model in UC is completely unprotected. To capture asynchronous networks with eventual message delivery, we work in a hybrid model with access to the multi-use \emph{asynchronous secure message transmission} functionality $\fasmt$, shown in Figure~\ref{fig:func-asmt}, which was introduced in~\cite{coretti2016constant} and is itself based on the (single-use) \emph{eventual-delivery secure channel} functionality $\fedsec$ from~\cite{katz2013universally}.

\begin{figure}[htbp!]
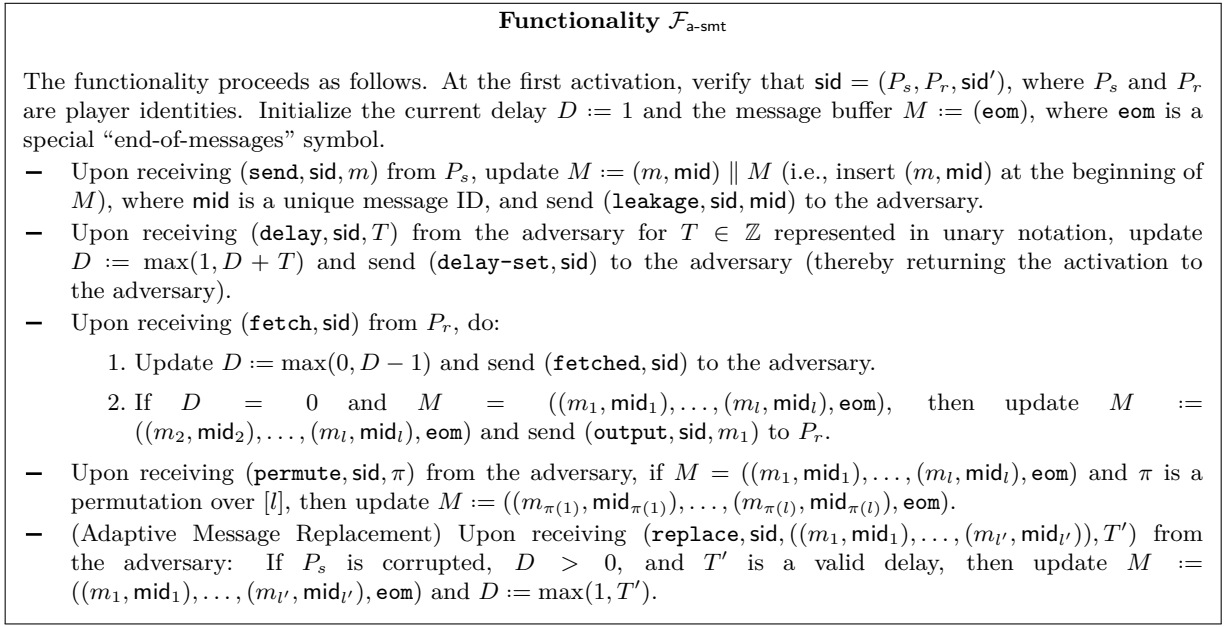

\centering
\funcbox{$\fasmt$}{
    The functionality proceeds as follows. At the first activation, verify that $\sid=(P_s,P_r,\sid')$, where $P_s$ and $P_r$ are player identities. Initialize the current delay $D\assign1$ and the message buffer $M\assign(\eom)$, where $\eom$ is a special ``end-of-messages'' symbol.
    \begin{tiret}
        \item Upon receiving $(\sendcmd,\sid,m)$ from $P_s$, update $M\assign (m,\MID)\parallel M$ (i.e., insert $(m,\MID)$ at the beginning of $M$), where $\MID$ is a unique message ID, and send $(\leakagecmd,\sid,\MID)$ to the adversary.
        \item Upon receiving $(\delaycmd,\sid,T)$ from the adversary for $T\in\ZZ$ represented in unary notation, update $D\assign \max(1,D+T)$ and send $(\delaysetcmd,\sid)$ to the adversary (thereby returning the activation to the adversary).
        \item Upon receiving $(\fetchcmd,\sid)$ from $P_r$, do:
        \begin{enumerate}
            \item Update $D\assign\max(0,D-1)$ and send $(\fetchedcmd,\sid)$ to the adversary.
            \item If $D=0$ and $M=((m_1,\MID_1),\ldots,(m_l,\MID_l),\eom)$, then update $M\assign((m_2,\MID_2),\ldots,(m_l,\MID_l),\eom)$ and send $(\outputcmd,\sid,m_1)$ to $P_r$.
        \end{enumerate}
        \item Upon receiving $(\permutecmd,\sid,\pi)$ from the adversary, if $M=((m_1,\MID_1),\ldots,(m_l,\MID_l),\eom)$ and $\pi$ is a permutation over $[l]$, then update $M\assign((m_{\pi(1)},\MID_{\pi(1)}),\ldots,(m_{\pi(l)},\MID_{\pi(l)}),\eom)$.
        \item (Adaptive Message Replacement) Upon receiving $(\replacecmd,\sid,((m_1,\MID_1),\ldots,(m_{l'},\MID_{l'})),T')$ from the adversary: If $P_s$ is corrupted, $D>0$, and $T'$ is a valid delay, then update $M\assign((m_1,\MID_1),\ldots,(m_{l'},\MID_{l'}),\eom)$ and $D\assign \max(1,T')$.
    \end{tiret}
    \smallskip
}
\caption{The asynchronous secure message transmission functionality.}
\label{fig:func-asmt}
\end{figure}

The functionality $\fasmt$ models a \emph{secure} eventual-delivery channel between a sender $P_s$ and receiver $P_r$.\footnote{Recall that while (concurrent) A-BA is not a private task, secure channels are needed to construct an OCC.} To reflect the adversary's ability to delay the message by an arbitrary finite duration (even when $P_s$ and $P_r$ are not corrupted), the functionality operates in a ``pull'' mode, managing a message buffer $M$ and a counter $D$ that represents the current message delay. This counter is decremented every time $P_r$ tries to fetch a message, which is ultimately sent once the counter hits $0$. The adversary can at any time provide an additional integer delay $T$, and if it wishes to immediately release the messages, it needs only to submit a large negative value. It is important to note that $T$ must be encoded in unary; this ensures that the delay, while arbitrary, remains bounded by the adversary's computational resources or running time.\footnote{We refer to~\cite{canetti2020universally} for a formal definition of running time in the UC framework.} Also, note that $\fasmt$ guarantees eventual message delivery assuming that the environment gives sufficient resources to the protocol, \ie activates $P_r$ sufficiently many times. We additionally allow the adversary to permute the messages in the buffer; see~\cite{coretti2016constant} for a discussion.

Since $\fasmt$ (and our other asynchronous functionalities) may not provide output immediately, we must clarify what we mean when instructing a party to fetch the output from multiple instances of a hybrid (e.g., one per party), as repeatedly attempting to fetch from the first instance before continuing to the second instance might not make any progress. Accordingly, we require that parties fetch the output in a ``round-robin fashion'' across activations (i.e., after attempting to fetch from each instance, the party moves on to try the remaining steps in the protocol). It is implicit that the party stops making fetch requests to any given instance after it receives an output from that instance (of course, this behavior can be easily implemented using flags).

Finally, we note that while the UC framework has no notion of time, and we are in the asynchronous setting (with eventual message delivery) where parties may proceed at different rates, one can still formally define a notion of asynchronous rounds along the lines of~\cite{coretti2016constant}, which will be referred to in our statements. Informally, rounds are delimited by alternations between sending messages to and reading messages from the network. More precisely, each party maintains a local round counter, which is incremented every time the party sends an input to a hybrid after fetching a (causally related) output from a hybrid. For example---and this accounts for virtually all of the times that a round advances during an execution of our protocols---a successful fetch request may directly trigger a send in the next activation, in which case the party increments its round counter. Note that we allow for multiple sends to be ``batched'' within the same round (i.e., only the first send will trigger a round advancement). We refrain from putting forth a formal treatment of causality; notwithstanding, causal relationships between messages will be immediately apparent from our protocol descriptions.

Denote by $R_{\adv,\env,z}$ the random variable (defined over the random choices of all relevant machines) representing the maximum number of rounds that an honest party takes in an execution with a \emph{fixed} adversary $\adv$, environment $\env$, and input $z$ to $\env$. We now define the (expected) round complexity of the protocol by considering the maximum expected value of this random variable, over all admissible adversaries, environments, and inputs: $\max_{\adv,\env,z}\{\mathbb{E}[R_{\adv,\env,z}]\}$.
\section{Ideal Functionalities for a Few Standard Primitives}
\label{sec:model}

In this section, we present ideal functionalities for asynchronous primitives used in our constructions. This seemingly simple task requires careful consideration of certain aspects, as the adversary's ability to delay messages in the network has an upstream impact on the achievable security guarantees of distributed tasks. In particular, the adversary can obstruct the output release procedure and impede honest parties' participation. To model \emph{delayed output release}, we extend the mechanism discussed in Section~\ref{sec:asynchronous} (using per-party delay counters). However, to model \emph{delayed participation}, we introduce a novel and more natural approach that addresses limitations of prior work and more closely captures the effects of asynchrony. We start by discussing in Section~\ref{sec:delayed_participation} our design choices for the latter, and proceed to describe a few ideal functionalities in Section~\ref{sec:functionalities}.

\subsection{Modeling Delayed Participation}
\label{sec:delayed_participation}

In the asynchronous setting, honest parties cannot distinguish whether uncooperative parties are corrupted and intentionally withholding messages, or if they are honest parties whose messages have been delayed. Consequently, when the number of corruptions is upper-bounded by $t$, waiting for the participation of the last $t$ parties can result in an indefinite wait. In ideal functionalities, this translates to expecting participation and/or input only from a subset of adversarially chosen parties, known as the ``core set,'' with a size of $n-t$. Observe that the value of $t$ is closely related to the behavior of the functionality. In this work, we specifically consider optimal resiliency, where $t=\lceil \frac{n}{3} \rceil -1$. However, our functionalities can be easily adapted to accommodate other resiliency bounds.

In our modeling approach, the adversary implicitly determines the core set by strategically delaying participation (or input submission) to the functionality. If we instruct the ideal functionality to proceed once $n-t$ parties have participated, the adversary can precisely determine the core set by manipulating the order of participation (or input submissions). To accommodate arbitrary but finite delays for input submissions, we employ a technique similar to the one we use for the output-release mechanism. That is, in addition to a per-party \emph{output delay counter}, there is an \emph{input delay counter} (updatable by the adversary) which is decremented every time the party pings the functionality; once the counter reaches zero, the party is allowed to participate.

This approach contrasts with the work of Cohen~\cite{cohen2016asynchronous} and Coretti \etal~\cite{coretti2016constant}, wherein the adversary (simulator) {\bf explicitly} sends a core set to the functionality. Obtaining the core set from the adversary all at once does not accurately mimic real-world executions and requires careful consideration to ensure the implementation works in all scenarios. For instance, a challenging case to model is when the simulator sets the core set, but the environment never activates some parties in that core set. If not handled properly, this situation can lead to either the ideal functionality stalling indefinitely while the real-world execution proceeds, or allowing for core sets of smaller sizes, neither of which is acceptable.

There are other aspects of modeling the core set that can potentially cause issues. If the simulator is required to set the core set early on, it may encounter issues during the simulation because in some real-world protocols, such as the asynchronous MPC protocol of Ben-Or \etal~\cite{ben-or1994asynchronous}, the core set is not fixed in the early stages of the execution. Similarly, if the functionality allows late submission of the core set, then when using the functionality as a hybrid, the adversary can potentially stall the functionality unless appropriate preventive mechanisms are in place. In contrast, implicitly defining the core set by delaying parties' participation aligns more closely with real-world executions, reducing the probability of errors. Additionally, it can potentially simplify the simulation process, as the simulator can gradually define the core set as the protocol progresses.

\subsection{Ideal Functionalities for a Few Standard Primitives}
\label{sec:functionalities}

We now cast a few standard asynchronous primitives as UC functionalities, following our novel modeling approach. We also present security statements showing how classical protocols can be used to realize these primitives.

\paragraph{Asynchronous Broadcast (A-Cast).}

The first essential primitive used in both our OCC and concurrent A-BA protocols, which also finds numerous applications in other asynchronous protocols, is {\em Bracha's Asynchronous Broadcast} (A-Cast)~\cite{bracha1987asynchronous}. A-Cast enables a distinguished sender to distribute its input, such that if an honest party outputs a value, then all honest parties must (eventually) output the same value. Moreover, if the sender is honest, then all honest parties must (eventually) output the sender's input. While these are essentially the agreement and validity properties required from regular (synchronous) broadcast, we stress that honest parties may not terminate when the sender is corrupted. We formulate A-Cast as the ideal functionality $\facast$, shown in Figure~\ref{fig:func-acast}.

\begin{figure}[htbp!]
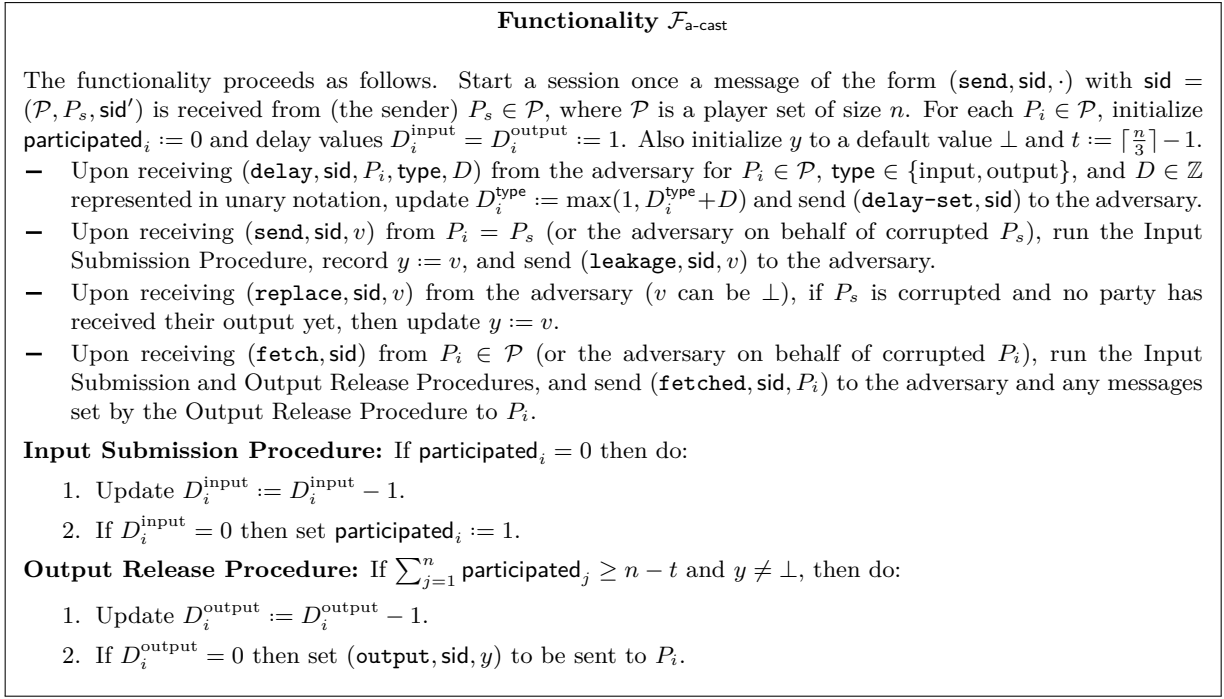

\centering
\funcbox{$\facast$}{
    The functionality proceeds as follows. Start a session once a message of the form $(\sendcmd,\sid,\cdot)$ with $\sid=(\P,P_s,\sid')$ is received from (the sender) $P_s\in \P$, where $\P$ is a player set of size $n$. For each $P_i\in\P$, initialize $\participated_i\assign 0$ and delay values $D^{\inputlbl}_i=D^{\outputlbl}_i\assign 1$. Also initialize $y$ to a default value $\bot$ and $t\assign \lceil \frac{n}{3} \rceil - 1$.
    \begin{tiret}
        \item Upon receiving $(\delaycmd,\sid,P_i,\type,D)$ from the adversary for $P_i\in\P$, $\type\in\{\inputlbl,\outputlbl\}$, and $D\in\ZZ$ represented in unary notation, update $D^{\type}_i\assign \max(1,D^{\type}_i+D)$ and send $(\delaysetcmd,\sid)$ to the adversary.
        \item Upon receiving $(\sendcmd,\sid,v)$ from $P_i=P_s$ (or the adversary on behalf of corrupted $P_s$), run the Input Submission Procedure, record $y\assign v$, and send $(\leakagecmd,\sid,v)$ to the adversary.
        \item Upon receiving $(\replacecmd, \sid,v)$ from the adversary ($v$ can be $\bot)$, if $P_s$ is corrupted and no party has received their output yet, then update $y\assign v$.
        \item Upon receiving $(\fetchcmd, \sid)$ from $P_i\in\P$ (or the adversary on behalf of corrupted $P_i$), run the Input Submission and Output Release Procedures, and send $(\fetchedcmd,\sid,P_i)$ to the adversary and any messages set by the Output Release Procedure to $P_i$.
    \end{tiret}
    \smallskip
    {\bf Input Submission Procedure:}
    If $\participated_i=0$ then do:
    \begin{enumerate}
        \item Update $D^{\inputlbl}_i\assign D^{\inputlbl}_i-1$.
        \item If $D^{\inputlbl}_i=0$ then set $\participated_i\assign1$.
    \end{enumerate}
    {\bf Output Release Procedure:}
    If $\sum_{j=1}^{n}\participated_j\geq n-t$ and $y\neq \bot$, then do:
    \begin{enumerate}
        \item Update $D^{\outputlbl}_i\assign D^{\outputlbl}_i-1$.
        \item If $D^{\outputlbl}_i=0$ then set $(\outputcmd,\sid,y)$ to be sent to $P_i$.
    \end{enumerate}
    \smallskip
}
\caption{The A-Cast functionality.}
\label{fig:func-acast}
\end{figure}

Note that although A-Cast is a single-sender primitive, assuming it can proceed to the output generation phase without sufficient participation from other parties is too idealized. A realizable functionality should only proceed to the output generation phase when $n-t$ parties (which may include the sender) have participated. Non-sender parties demonstrate their participation by issuing fetch requests to the functionality. An important technicality here is that the participation of parties before the sender initiates the session should not contribute to the count. Therefore, $\facast$ starts a session only once the input from the sender is received. Consequently, any efforts for participation before that point will not be taken into account. An implication of this design choice when using $\facast$ as a hybrid is that parties other than the sender will not know when the session has started.

Bracha's asynchronous broadcast protocol~\cite{bracha1987asynchronous} can be used to UC-realize $\facast$ with perfect security. We formally state this result in the following proposition, whose proof is straightforward.

\begin{proposition}
\label{thm:acast}
    $\facast$ can be UC-realized with perfect security in the $\fasmt$-hybrid model, in constant rounds and against an adaptive and malicious $t$-adversary, provided $t< \frac{n}{3}$.
\end{proposition}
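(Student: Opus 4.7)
The plan is to construct a simulator $\sim$ that internally emulates an execution of Bracha's asynchronous broadcast protocol on behalf of the honest parties, using the real-world adversary $\adv$ as a subroutine over simulated instances of $\fasmt$, and translates the events of this emulation into interactions with $\facast$. Since Bracha's protocol is a deterministic function of the sender's input and the network schedule, and $\facast$ itself involves no randomness once the sender's value is fixed, we can aim directly at perfect indistinguishability of the joint view of $\adv$ and $\env$.

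I would handle the two cases based on the sender's corruption status. If $P_s$ is honest, the simulator learns the input value $v$ through the $(\leakagecmd,\sid,v)$ message from $\facast$ and begins the internal emulation with $v$ as $P_s$'s input. If $P_s$ is corrupted, the simulator observes the messages that $\adv$ delivers to simulated honest parties, and as soon as it determines the value $v'$ that the internal Bracha execution would cause an honest party to output (well-defined by Bracha's agreement property for $t<n/3$), issues $(\replacecmd,\sid,v')$ to $\facast$; if the simulation indicates that no honest party would ever output, no replacement is sent and $\facast$'s guard $y\ne\bot$ prevents any spurious release. Adaptive corruptions are handled straightforwardly by handing $\adv$ the internal simulated state of the newly corrupted party, which is fully determined by the messages already delivered in the emulation.

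The more delicate piece is synchronizing the delay counters of $\facast$ with the schedule that $\adv$ imposes in the internal emulation. For each $P_i$, the simulator translates $\fasmt$ delay and fetch events into appropriate $(\delaycmd,\sid,P_i,\inputlbl,\cdot)$ and $(\delaycmd,\sid,P_i,\outputlbl,\cdot)$ commands to $\facast$, so that the participation and output-release events happen at exactly the same activations in both worlds. Concretely, the input-delay counter $D^{\inputlbl}_i$ is driven to zero precisely when the simulated $P_i$ would take its first causal action (e.g., echo the sender's message), and the output-delay counter $D^{\outputlbl}_i$ is tuned so that release coincides with the fetch request on which the simulated $P_i$ accumulates $2t+1$ matching ready messages. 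This is possible because $\facast$ accepts arbitrary (unary) delay updates from the adversary at any point.

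The main obstacle is reconciling $\facast$'s output-release precondition $\sum_j \participated_j \ge n-t$ with Bracha's termination condition: one needs to check that whenever the internal emulation would have honest $P_i$ output, at least $n-t$ parties have already been marked participating in $\facast$. This holds because any honest party that outputs in Bracha's protocol has, by the standard $t<n/3$ analysis, received ready messages traceable to at least $n-t$ distinct senders, all of whom must therefore have taken their first causal action in the emulation; the simulator accordingly marks their participation before advancing $D^{\outputlbl}_i$ to zero. Once this bookkeeping is in place, perfect indistinguishability follows by a routine induction on activations, and the constant round complexity is immediate from Bracha's three-phase structure (send, echo, ready).
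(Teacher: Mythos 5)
The paper itself gives no argument for this proposition (it is dismissed as ``straightforward''), and your overall plan---internally emulating Bracha's protocol, driving the emulation off the $(\leakagecmd,\cdot)$/$(\fetchedcmd,\cdot)$ notifications, extracting a corrupted sender's value and injecting it via $\replacecmd$, and padding output delays---is indeed the intended simulation. The problem is the step you yourself identify as the crux. Delivery in Bracha's protocol is triggered by $2t+1$ \texttt{ready} messages, not by \texttt{ready} messages ``traceable to at least $n-t$ distinct senders'': $2t+1\ge n-t$ only when $n\le 3t+1$. Even tracing causality one hop further back, the first honest \texttt{ready}-sender needs only $\lceil(n+t+1)/2\rceil$ echoes, so the entire causal closure of a delivery event can involve as few as $\max\bigl(2t+1,\lceil(n+t+1)/2\rceil\bigr)$ distinct parties, which is strictly below $n-t$ once $n\ge 3t+3$ (and with this echo threshold it can even fall one short at the maximal $t=\lceil n/3\rceil-1$ when $3\mid n$). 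Concretely, an environment that activates only the sender and roughly $\lceil(n-t+1)/2\rceil$ honest parties, leaving the rest dormant, lets an honest party deliver in the real world; in the ideal world $\facast$ can never release output, because $\participated_j$ for an honest $P_j$ is set only through $P_j$'s own fetch activations---your simulator cannot ``drive $D^{\inputlbl}_j$ to zero'' for dormant parties (delay commands are floored at $1$ and the decrement happens only on $P_j$'s fetches), nor mark them participated on their behalf. So, with the thresholds your argument implicitly assumes, the simulation fails and perfect indistinguishability is lost.

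The repairable core is this: what you must show is that whenever some honest party delivers, at least $n-2t$ \emph{honest} parties have already sent a protocol message (the simulator can then mark the $t$ corrupted parties itself, reaching the $n-t$ participation gate). This holds if Bracha is instantiated with an echo threshold of $n-t$ (with $t=\lceil n/3\rceil-1$, matching the $t$ hard-coded in $\facast$): before any honest party sends \texttt{ready}, some honest party must have collected $n-t$ echoes, at least $n-2t$ of them from honest parties, each of which was necessarily activated after the sender's value entered the system; the \texttt{ready}-amplification and the $2t+1$ delivery rule then need no further accounting. Your write-up, as it stands, asserts the participation bound from the \texttt{ready} quorum alone, which is false for general $t<n/3$, and this is precisely the point where the proof has to be made precise (including fixing the threshold convention); the rest of your simulator description is sound.
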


\paragraph{Asynchronous Verifiable Secret Sharing (A-VSS).}

Another crucial primitive we require, mainly for our OCC protocol, is \emph{Asynchronous Verifiable Secret Sharing} (A-VSS). A-VSS allows a dealer to secret-share a value among all parties, ensuring that no unauthorized subset of colluding parties can learn any information about the secret. However, any authorized subset of parties should be able to efficiently reconstruct the secret using their shares. The term ``verifiable'' reflects that the dealer cannot cheat, for example by causing the reconstruction to fail or by inducing inconsistent output values from honest parties. In particular, whenever the sharing phase succeeds, any authorized subset of parties should be able to efficiently complete the reconstruction phase, and all honest parties doing so must recover the same secret. Moreover, if the dealer is honest, the sharing phase must always succeed, and everyone should recover the value originally shared by the dealer. In our context, we consider only the threshold access structure, where a subset of parties can recover the secret if and only if it contains at least $n-t$ of the parties. However, we leak the secret to the adversary as soon as $t+1$ parties have participated (at least one of whom must be honest). The formulation of A-VSS as an ideal functionality, $\favss$, is shown in Figure~\ref{fig:func-avss}. We denote this as the ``plain'' A-VSS functionality, as it does not allow homomorphic computation over shares; nonetheless, this basic functionality suffices for obtaining a multi-valued asynchronous OCC.

\begin{figure}[ht!]
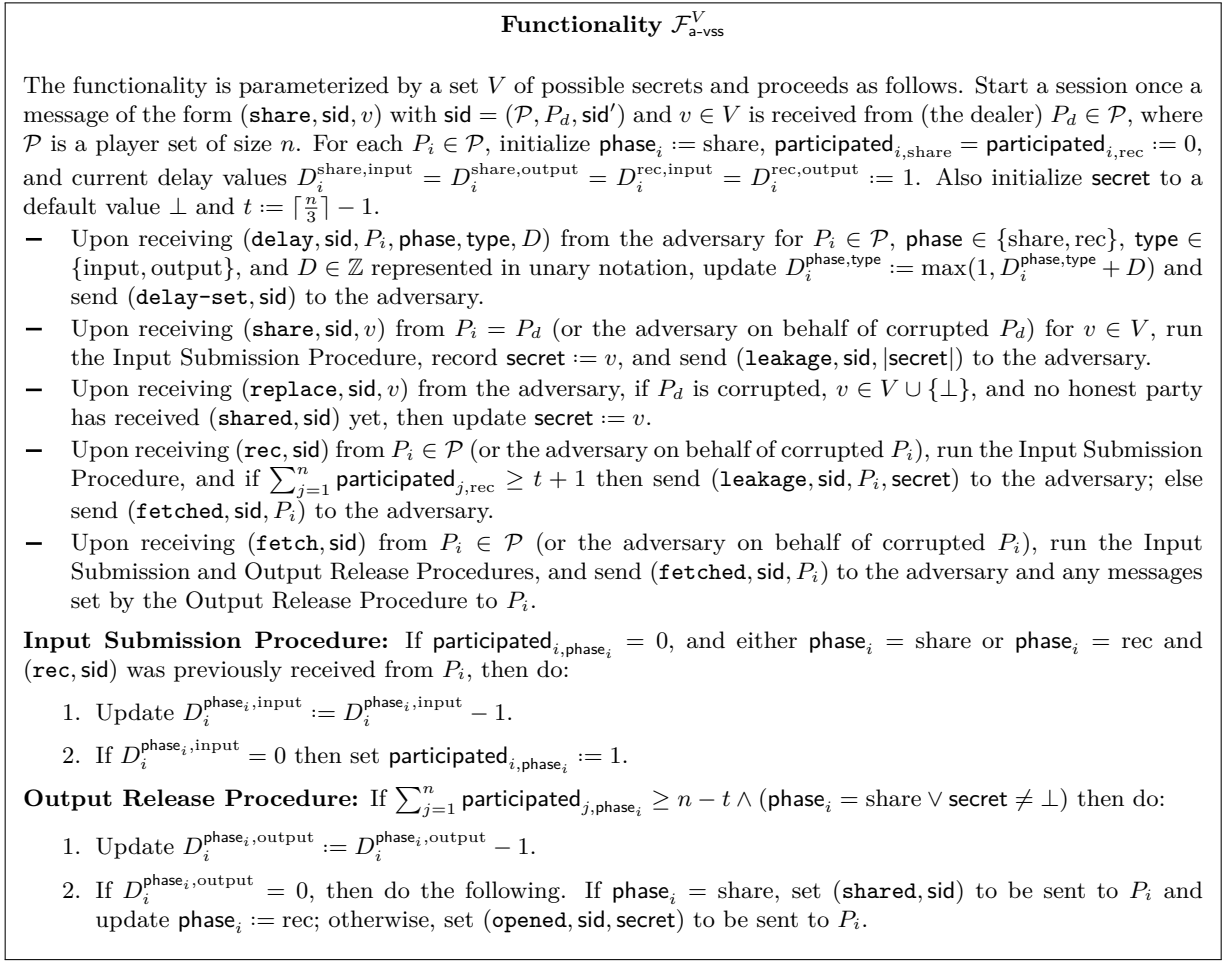

\centering

\funcbox{$\favss^V$}{
    The functionality is parameterized by a set $V$ of possible secrets and proceeds as follows. Start a session once a message of the form $(\sharecmd,\sid,v)$ with $\sid=(\P,P_d,\sid')$ and $v\in V$ is received from (the dealer) $P_d\in \P$, where $\P$ is a player set of size $n$. For each $P_i\in\P$, initialize $\phase_i\assign \sharelbl$, $\participated_{i,\sharelbl}=\participated_{i,\reclbl}\assign 0$, and current delay values $D^{\sharelbl,\inputlbl}_i=D^{\sharelbl,\outputlbl}_i=D^{\reclbl,\inputlbl}_i=D^{\reclbl,\outputlbl}_i\assign 1$. Also initialize $\secret$ to a default value $\bot$ and $t\assign \lceil \frac{n}{3} \rceil - 1$.
    \begin{tiret}
        \item Upon receiving $(\delaycmd,\sid,P_i,\phase,\type,D)$ from the adversary for $P_i\in\P$, $\phase\in\{\sharelbl,\reclbl\}$, $\type\in \{\inputlbl,\outputlbl\}$, and $D\in\ZZ$ represented in unary notation, update $D^{\phase,\type}_i\assign \max(1,D^{\phase,\type}_i+D)$ and send $(\delaysetcmd,\sid)$ to the adversary.
        \item Upon receiving $(\sharecmd,\sid,v)$ from $P_i=P_d$ (or the adversary on behalf of corrupted $P_d$) for $v\in V$, run the Input Submission Procedure, record $\secret\assign v$, and send $(\leakagecmd,\sid,|\secret|)$ to the adversary.
        \item Upon receiving $(\replacecmd,\sid, v)$ from the adversary, if $P_d$ is corrupted, $v\in V\cup \{\bot\}$, and no honest party has received $(\sharedcmd,\sid)$ yet, then update $\secret\assign v$.
        \item Upon receiving $(\reccmd,\sid)$ from $P_i\in \P$ (or the adversary on behalf of corrupted $P_i$), run the Input Submission Procedure, and if $\sum_{j=1}^{n}\participated_{j,\reclbl}\geq t+1$ then send $(\leakagecmd,\sid,P_i,\secret)$ to the adversary; else send $(\fetchedcmd,\sid,P_i)$ to the adversary.
        \item Upon receiving $(\fetchcmd, \sid)$ from $P_i\in\P$ (or the adversary on behalf of corrupted $P_i$), run the Input Submission and Output Release Procedures, and send $(\fetchedcmd,\sid,P_i)$ to the adversary and any messages set by the Output Release Procedure to $P_i$.
    \end{tiret}
    \smallskip
    {\bf Input Submission Procedure:} If $\participated_{i,\phase_i}=0$, and either $\phase_i=\sharelbl$ or $\phase_i=\reclbl$ and $(\reccmd,\sid)$ was previously received from $P_i$, then do:
    \begin{enumerate}
        \item Update $D^{\phase_i,\inputlbl}_i\assign D^{\phase_i,\inputlbl}_i-1$.
        \item If $D^{\phase_i,\inputlbl}_i=0$ then set $\participated_{i,\phase_i}\assign 1$.
    \end{enumerate}
    {\bf Output Release Procedure:} If $\sum_{j=1}^{n}\participated_{j,\phase_i}\geq n-t \land( \phase_i=\sharelbl \lor \secret\neq \bot)$ then do:
    \begin{enumerate}
        \item Update $D^{\phase_i,\outputlbl}_i\assign D^{\phase_i,\outputlbl}_i-1$.
        \item If $D^{\phase_i,\outputlbl}_i=0$, then do the following. If $\phase_i=\sharelbl$, set $(\sharedcmd,\sid)$ to be sent to $P_i$ and update $\phase_i\assign \reclbl$; otherwise, set $(\openedcmd,\sid,\secret)$ to be sent to $P_i$.
    \end{enumerate}
    \smallskip
}
\caption{The plain A-VSS functionality.}
\label{fig:func-avss}
\end{figure}

A-VSS, being a single-sender primitive, also requires the participation of at least $n-t$ parties to be realizable. We adopt a similar approach to $\facast$ in modeling this requirement. Parties other than the dealer demonstrate their participation in the sharing phase by issuing fetch requests, while participation in the reconstruction phase additionally requires sending a (dummy) input message. This ensures that the minimum participation threshold is met for the A-VSS protocol to proceed with each phase. Also, it is important to note that $\favss$ only initiates a session once it receives the first input (share) from the dealer. Any participation efforts made before that point are not taken into account.

The A-VSS protocol given by Canetti and Rabin in~\cite{canetti1993fast} can be used to UC-realize $\favss$ with statistical security for $t<n/3$. This result is formally stated in the following proposition. It is worth noting that perfectly secure A-VSS is impossible for $t\ge n/4$~\cite{ben-or1994asynchronous,DBLP:journals/dc/AbrahamDS22}.

\begin{proposition}
\label{thm:avss}
    For any finite field $\FF$ (with $|\FF|>n$), $\favss^{\FF}$ can be UC-realized with statistical security in the $\fasmt$-hybrid model, in constant rounds and against an adaptive and malicious $t$-adversary, provided $t<\frac{n}{3}$.
\end{proposition}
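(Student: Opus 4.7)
The plan is to use the statistically secure A-VSS protocol of Canetti and Rabin~\cite{canetti1993fast} as the realizing protocol, and to construct a UC simulator $\sim$ that internally runs the real-world adversary $\adv$ against emulated honest parties, emulated $\fasmt$ channels, and emulated $\facast$ subroutines (the latter realizable by Proposition~\ref{thm:acast}). Whenever the internal $\adv$ manipulates message delivery in the emulation, $\sim$ mirrors the effect onto $\favss^{\FF}$ via $\delaycmd$ commands; crucially, for each $P_i$ the input- and output-delay counters are driven to zero precisely when the emulated $P_i$ would first send a message for, respectively, locally terminate, the current phase. This way, the core set implicitly defined by $\favss^{\FF}$ coincides with the set of honest parties that actually progress in the emulated execution.

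For an \emph{honest} dealer, $\sim$ only learns $|\secret|$ upon sharing, so it samples a random bivariate polynomial $F$ of degree $t$ in each variable with $F(0,0)=0$ and runs the Canetti-Rabin sharing phase on the dealer's behalf. The information-theoretic hiding of the sharing guarantees that this is statistically indistinguishable from a real sharing of the actual secret, conditioned on the (at most $t$) shares handed to corrupted parties. Once $\favss^{\FF}$ leaks $\secret$ to $\sim$ (triggered by $t+1$ reconstruction requests, at least one from an honest party), $\sim$ replaces $F$ with the unique degree-$t$ bivariate polynomial consistent with the corrupted-party shares already exposed and evaluating to $\secret$ at the origin, and uses this patched polynomial to drive the honest parties' reconstruction messages. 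Adaptive corruption of an honest $P_i$ is handled by handing over shares of whichever polynomial $\sim$ currently holds; this remains consistent with $\adv$'s view because at most $t$ shares are ever exposed and a degree-$t$ bivariate polynomial retains enough degrees of freedom beyond its value at the origin.

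For a \emph{corrupted} dealer, $\sim$ runs the sharing phase honestly for the emulated honest parties. If they would terminate sharing in the emulation, $\sim$ executes the error-correcting reconstruction of Canetti-Rabin internally to extract the committed value $v^{*}\in\FF$, and sends $(\sharecmd,\sid,v^{*})$ to $\favss^{\FF}$ on the dealer's behalf (using $\replacecmd$ with $\bot$ if sharing would not succeed). The commitment property of the Canetti-Rabin sharing guarantees that this extracted $v^{*}$ equals the value every honest party would eventually reconstruct in the real execution, so the $(\openedcmd,\sid,\secret)$ messages delivered by $\favss^{\FF}$ agree with the real-world outputs up to statistical error.

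The hardest part will be the bookkeeping linking the emulated protocol's per-phase progression to the per-party, per-phase delay counters of $\favss^{\FF}$: $\sim$ must invoke $\delaycmd$ at just the right moments so that each $P_i$'s input/output delay hits zero in lockstep with the emulated $P_i$'s own progress, and must respect the asymmetry that for reconstruction, $\favss^{\FF}$ only increments $\participated_{i,\reclbl}$ after receiving $(\reccmd,\sid)$ from $P_i$. Granting this scheduling discipline, indistinguishability reduces to the statistical secrecy, correctness, and commitment guarantees of Canetti-Rabin's A-VSS, and the constant round complexity is inherited from its constant-round sharing and reconstruction phases together with the constant-round realization of $\facast$.
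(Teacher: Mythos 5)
Your proposal follows exactly the route the paper intends: the paper states this proposition without a written proof, simply asserting that Canetti and Rabin's A-VSS protocol (run over $\fasmt$, with its A-Cast subroutines realizable per Proposition~\ref{thm:acast}) UC-realizes $\favss^{\FF}$, and your sketch supplies the standard simulation argument for that same protocol, including the delay/core-set bookkeeping the paper's functionality design calls for. The only nit is that the patched bivariate polynomial consistent with the corrupted parties' shares and the leaked secret is generally not unique (when fewer than $t$ parties are corrupted the simulator should sample a uniformly random consistent one), but this does not change the approach or its correctness.
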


\paragraph{Asynchronous Byzantine Agreement (A-BA).}

We use A-BA for both binary and multi-valued domains $V$ in our revised concurrent A-BA protocol. In this primitive, each party $P_i$ has an input $v_i \in V$. The goal is for all honest parties to output the same value, such that if $n-2t$ input values are the same, that value (which for well-definedness we require is unique) is chosen as the output; otherwise, the adversary determines the output. We initially consider {\em corruption-unfair} A-BA, where the adversary learns the input of each party the moment it is provided. Corruption fairness, as introduced in~\cite{hirt2010adaptively} and later coined in~\cite{cohen2021completeness}, essentially ensures that the (adaptive) adversary cannot corrupt a party and subsequently influence the input value of that party based on its original input. It is worth noting that corruption-fair A-BA can easily be defined by avoiding leaking honest parties' inputs before the output is generated. We formulate A-BA as the ideal functionality $\faba$, shown in Figure~\ref{fig:func-aba}.

\begin{figure}[htbp!]
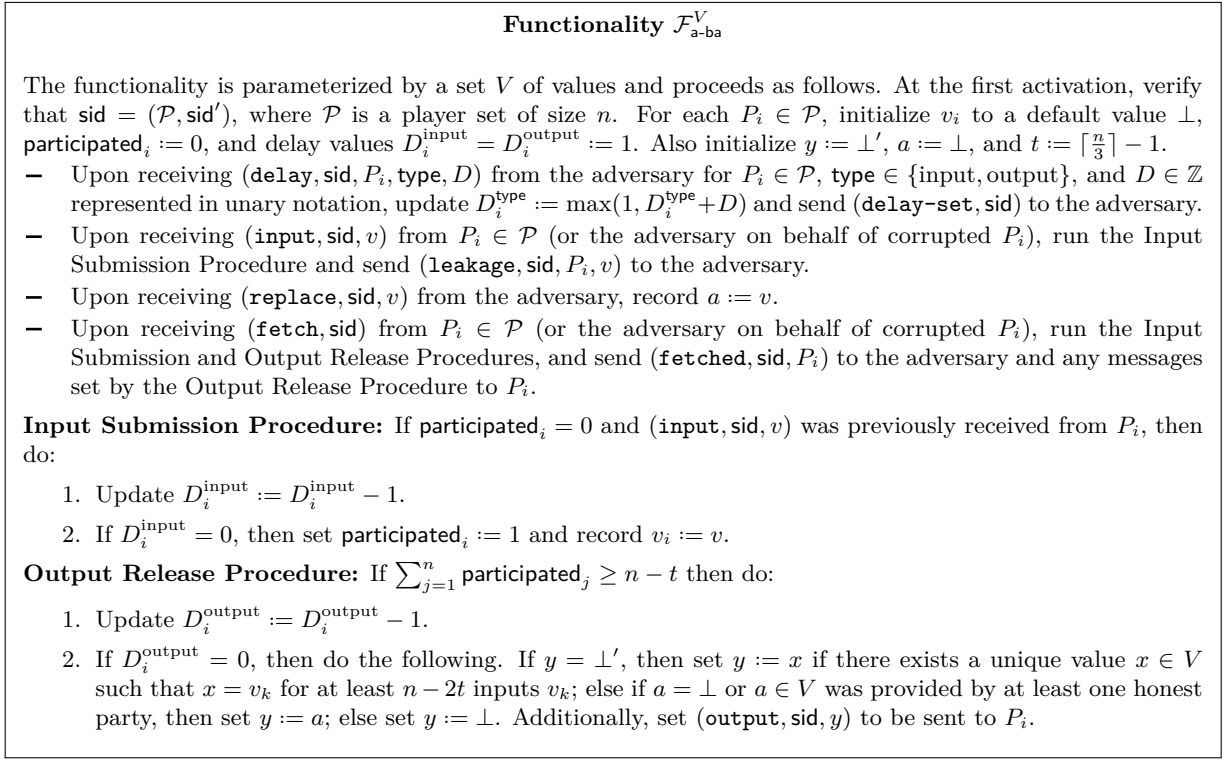

\centering
\funcbox{$\faba^V$}{
    The functionality is parameterized by a set $V$ of values and proceeds as follows. At the first activation, verify that $\sid=(\P,\sid')$, where $\P$ is a player set of size $n$. For each $P_i\in\P$, initialize $v_i$ to a default value $\bot$, $\participated_i\assign 0$, and delay values $D^{\inputlbl}_i=D^{\outputlbl}_i\assign 1$. Also initialize $y\assign\bot'$, $a\assign\bot$, and $t\assign \lceil \frac{n}{3} \rceil - 1$.
    \begin{tiret}
        \item Upon receiving $(\delaycmd,\sid,P_i,\type,D)$ from the adversary for $P_i\in\P$, $\type\in\{\inputlbl,\outputlbl\}$, and $D\in\ZZ$ represented in unary notation, update $D^{\type}_i\assign \max(1,D^{\type}_i+D)$ and send $(\delaysetcmd,\sid)$ to the adversary.
        \item Upon receiving $(\inputcmd,\sid,v)$ from $P_i\in \P$ (or the adversary on behalf of corrupted $P_i$), run the Input Submission Procedure and send $(\leakagecmd,\sid,P_i,v)$ to the adversary.
        \item Upon receiving $(\replacecmd, \sid,v)$ from the adversary, record $a\assign v$.
        \item Upon receiving $(\fetchcmd, \sid)$ from $P_i\in\P$ (or the adversary on behalf of corrupted $P_i$), run the Input Submission and Output Release Procedures, and send $(\fetchedcmd,\sid,P_i)$ to the adversary and any messages set by the Output Release Procedure to $P_i$.
    \end{tiret}
    \smallskip
    {\bf Input Submission Procedure:}
    If $\participated_i=0$ and $(\inputcmd,\sid,v)$ was previously received from $P_i$, then do:
    \begin{enumerate}
        \item Update $D^{\inputlbl}_i\assign D^{\inputlbl}_i-1$.
        \item If $D^{\inputlbl}_i=0$, then set $\participated_i\assign 1$ and record $v_i\assign v$.
    \end{enumerate}
    {\bf Output Release Procedure:}
    If $\sum_{j=1}^{n}\participated_j\geq n-t$ then do:
    \begin{enumerate}
        \item Update $D^{\outputlbl}_i\assign D^{\outputlbl}_i-1$.
        \item If $D^{\outputlbl}_i=0$, then do the following. If $y= \bot'$, then set $y\assign x$ if there exists a unique value $x\in V$ such that $x=v_k$ for at least $n-2t$ inputs $v_k$; else if $a=\bot$ or $a\in V$ was provided by at least one honest party, then set $y\assign a$; else set $y\assign \bot$. Additionally, set $(\outputcmd,\sid,y)$ to be sent to $P_i$.
    \end{enumerate}
    \smallskip
}
\caption{The (intrusion-tolerant) A-BA functionality.}
\label{fig:func-aba}
\end{figure}

The functionality $\faba$ encompasses an additional property known as ``non-intrusion'' validity (see Section~\ref{sec:results}). In a nutshell, this property guarantees that no malicious value can be present in the output. In other words, the output must be either an honest party's input or a default value~$\bot$. This stronger notion of A-BA is vital for the security of our concurrent A-BA protocol.

Note that when the input domain $V$ already contains a ``null'' value, $\faba$ can simply output it as the default value. Slightly abusing notation, we can therefore run multi-valued A-BA on some domain $V'\cup\{\bot\}$, with the assurance that a non-default (i.e., \emph{meaningful}) output value must have originated with an honest party. (We stress that, on the other hand, an output of $\bot$ would guarantee nothing in that scenario, even though $\bot$ is a possible input value.) Looking ahead, our concurrent A-BA protocol will make use of this behavior, by setting $V'$ to be the set of all output vectors and allowing an honest party who did not receive a vector to input $\bot$.

The expected-constant-round binary A-BA protocol of Canetti and Rabin~\cite{canetti1993fast} can be used to UC-realize $\faba^V$ with statistical security for binary domains ($|V|=2$) and $t<n/3$. However, our concurrent A-BA protocol (and the one in~\cite{ben-or2003resilient}) require {\bf multi-valued} A-BA, where $|V|$ is not constant (in fact, exponential), in expected-constant rounds. Ben-Or and El-Yaniv~\cite{ben-or2003resilient} claim that the constant-round reduction of multi-valued to binary BA proposed by Turpin and Coan~\cite{turpin1984extending}, which works in the synchronous setting for $t<n/3$, can be extended to work in the asynchronous setting by using A-Cast for message distribution. However, we demonstrate that the asynchronous version of this reduction works if and only if $t < n/6$, even when using A-Cast and considering a static adversary. (See Appendix~\ref{app:tc} for details.) Some additional modifications can improve this bound to $t < n/5$, but achieving optimal resiliency is not straightforward.

More recently, Most{\'{e}}faoui and Raynal~\cite{mostefaoui2017signature} presented a constant-round transformation from binary to multi-valued A-BA that works for $t < n/3$. Furthermore, the resulting protocol satisfies the non-intrusion validity property mentioned above. By applying this transformation to the binary A-BA protocol of Canetti and Rabin~\cite{canetti1993fast}, we can UC-realize $\faba^V$ for arbitrary $V$ with statistical security:

\begin{proposition}
\label{thm:aba}
    For any domain $V$, $\faba^V$ can be UC-realized with statistical security in the $\fasmt$-hybrid model, in expected-constant rounds and against an adaptive and malicious $t$-adversary, provided $t<\frac{n}{3}$.
\end{proposition}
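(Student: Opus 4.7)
The plan is to prove the proposition by a two-step composition: first UC-realize $\faba^{\{0,1\}}$ using the binary A-BA protocol of Canetti and Rabin~\cite{canetti1993fast}, and then lift this to arbitrary $V$ via the Most\'{e}faoui--Raynal reduction~\cite{mostefaoui2017signature}. The UC composition theorem then yields a protocol in the $\fasmt$-hybrid model realizing $\faba^V$.

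For the binary step, I would describe a wrapper protocol in which each party, upon receiving $(\inputcmd,\sid,v)$ from the environment, invokes Canetti and Rabin's binary A-BA on input $v$; outputs returned by that protocol are forwarded back to the environment only after the per-party output delay counter (simulated locally) permits. The simulator $\sim$ runs an internal copy of the real-world adversary $\adv$ against a simulated instance of the protocol, intercepting its $\fasmt$ interactions. Input submission to $\faba$ is triggered as soon as $\sim$ observes the simulated party actually inject its input into the simulated protocol; output release is triggered upon the simulated party's output. Whenever $\adv$ applies a $\delaycmd$ in the simulated $\fasmt$ traffic that postpones a party's participation or output, $\sim$ translates this to matching $\delaycmd$ messages on $\faba$'s input/output counters. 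Validity, agreement, and expected-constant-round termination of~\cite{canetti1993fast} give indistinguishability for the binary-domain case; the $n-2t$ ``core-input'' clause of $\faba$ is respected because honest parties with matching inputs trigger the standard validity property, and for mismatched/corrupted-dominated inputs $\sim$ learns the effective output from the simulated protocol and invokes $\replacecmd$ on $\faba$. Intrusion tolerance is trivial in the binary case, since any output already lies in $\{0,1\}=V$.

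For the multi-valued step, I would define the wrapper protocol as running the Most\'{e}faoui--Raynal reduction, which in constant rounds calls $\faba^{\{0,1\}}$ once as a subroutine and otherwise relies only on $\facast$ and $\fasmt$. By~\cite{mostefaoui2017signature}, this reduction guarantees agreement, validity, and non-intrusion validity, with the output lying in $V\cup\{\bot\}$, and preserves expected-constant round complexity because only a single binary A-BA invocation is made, with a constant number of surrounding message-exchange rounds. The simulator for this layer mirrors the structure above: it simulates the reduction internally against $\adv$, translating delays on the underlying $\fasmt$ and sub-hybrids ($\facast$, $\faba^{\{0,1\}}$) into the appropriate $\delaycmd$ calls on $\faba^V$, and it invokes $\replacecmd$ on $\faba^V$ whenever the simulated protocol computes a non-default output not forced by the $n-2t$ matching-input rule. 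The UC composition theorem, applied to Propositions~\ref{thm:acast} and the binary case above together with the hybrid protocol realizing the reduction, yields a protocol in the $\fasmt$-hybrid model that UC-realizes $\faba^V$ with statistical security.

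The main obstacle I expect is aligning the idealized ``core set of size $n-t$'' semantics of $\faba^V$, which in our novel modeling is defined \emph{implicitly} through the per-party input delay counters, with the real-world behavior of the composed protocol, where the effective core set emerges only gradually as parties' messages propagate through $\fasmt$, $\facast$, and the underlying binary A-BA. In particular, $\sim$ must ensure that $\participated_i$ in $\faba^V$ becomes $1$ at exactly the activation where, in the simulated execution, party $P_i$'s input is first ``locked in'' in a way that influences honest parties' outputs; prematurely setting $\participated_i$ could cause $\faba^V$ to release an output before the simulated protocol has, while lagging could force $\sim$ to misuse $\replacecmd$ in a way that violates non-intrusion validity. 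I would resolve this by tying the ideal input-submission event to the first point at which the simulated $P_i$ has sent its input into the reduction's first round of A-Cast, and by carefully accounting in the output-release procedure so that $\sim$ applies $\replacecmd$ only when the simulated output is either $\bot$ or a value originating from an honest simulated party's input, which is precisely guaranteed by non-intrusion validity of~\cite{mostefaoui2017signature}. The remaining bookkeeping (adaptive corruption handling via state leakage and output replacement, and matching the per-party output delays) is then routine.
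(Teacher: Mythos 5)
Your proposal follows exactly the route the paper intends: realize binary A-BA via Canetti--Rabin and lift it to arbitrary $V$ with the constant-round, intrusion-tolerant Most\'{e}faoui--Raynal reduction, composing in the $\fasmt$-hybrid model (the paper states Proposition~\ref{thm:aba} on the strength of this argument without giving a detailed simulation). Your additional simulator bookkeeping---translating adversarial delays into $\delaycmd$ calls, tying $\participated_i$ to the first causal use of $P_i$'s input, and invoking $\replacecmd$ only on outputs permitted by non-intrusion validity---is consistent with the paper's modeling and fills in details the paper leaves implicit.
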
 
\section{Asynchronous Oblivious Common Coin}
\label{sec:occ}

The oblivious common coin (OCC) is a crucial primitive, particularly in the asynchronous setting. As discussed in the Introduction, achieving even binary A-BA against a single fail-stop corruption becomes impossible without the use of randomization~\cite{fischer1985impossibility}, emphasizing the importance of randomization in the asynchronous setting. The OCC primitive allows parties to collectively agree on a random value with a constant probability. This property opens up possibilities for various higher-level asynchronous protocols, as an OCC can potentially serve as the building block that provides the necessary randomization. To illustrate its significance, in the case of binary values, OCC enables binary A-BA in expected-constant rounds~\cite{rabin1983randomized}, and when extended to domains with size equal to the number of participants, it can be used for oblivious leader election (OLE) and sets the stage for expected-constant-round concurrent BA in both synchronous and asynchronous networks~\cite{ben-or2003resilient}. Refer to Section~\ref{sec:concurrent_protocol} for a formal definition of asynchronous OLE (A-OLE) and its application toward expected-constant-round concurrent A-BA.

As highlighted in the Introduction, none of the existing OCC proposals is simultaneously information-theoretic, asynchronous, multi-valued, and optimally resilient. Furthermore, no straightforward adaptation of the existing schemes yields an OCC with all of these properties. In Section~\ref{sec:OCC_functionality}, we first formulate (multi-valued) OCC as an ideal functionality in the asynchronous setting with eventual delivery. Subsequently, in Section~\ref{sec:OCC_protocol}, we propose our own OCC protocol that aims to satisfy all of these properties. In Section~\ref{sec:OCC_proof}, we prove that our protocol runs in constant time and UC-realizes the (multi-valued) OCC functionality with perfect security in the asynchronous setting, even against an adaptive and malicious adversary who corrupts fewer than $1/3$ of the parties.

At a high level, our protocol is based on the binary OCC of Feldman \cite{feldman1989asynchronous} and Canetti and Rabin \cite{canetti1993fast}, and incorporates a novel combinatorial technique derived from our observation stated in Lemma~\ref{lem:repetitions} below. By leveraging this lemma, we unveil interesting and powerful properties of the local views formed during the protocol's execution, leading to enhanced extraction capabilities. In fact, instead of extracting a single bit, by choosing appropriate parameters we can extract random values from any arbitrary domain still with a constant probability.

\subsection{A-OCC Ideal Functionality}
\label{sec:OCC_functionality}

An OCC is parameterized by a set $V$ and some constant probability $p>0$.  Each party starts with an empty input $\lambda$ and outputs a value from $V$, where with probability at least $p$ all parties output the same uniformly random value $x\in V$ and with probability $1-p$ the adversary chooses each party's output.\footnote{It is important to note that the term ``oblivious'' in this context refers to the fact that parties do not learn whether an agreement on a random coin value has been achieved or not, while the adversary does.}

The above goal can be translated to a UC functionality as follows: Initially, the ideal functionality samples a ``fairness bit'' $b\leftarrow \mathrm{Bernoulli}(p)$ and a random value $y\rsample V$. Then, if $b=1$ or no meaningful input is received from the adversary, it outputs $y$ to every party. However, if $b=0$ and meaningful input is received from the adversary, it assigns each party the value provided by the adversary; note that in this case, we allow the adversary to assign outputs gradually. The functionality also informs the adversary about the fairness bit and the random value once at least $t+1$ parties have participated (at least one of whom must be honest). Asynchronous aspects, including delayed output release and participation, are handled as in Section~\ref{sec:model}. The resulting functionality is shown in Figure~\ref{fig:func-aocc}.

\begin{figure}[ht!]
\centering
\funcbox{$\faocc^{V,p}$}{
    The functionality is parameterized by a set $V$ of possible outcomes and a fairness probability $p$, and it proceeds as follows. At the first activation, verify that $\sid=(\P,\sid')$, where $\P$ is a player set of size $n$. For each $P_i\in\P$, initialize $y_i$ to a default value $\bot$, $\participated_i\assign 0$, and delay values $D^{\inputlbl}_i=D^{\outputlbl}_i\assign 1$. Also initialize $a$, $y$, and $b$ to $\bot$, and $t\assign \lceil \frac{n}{3} \rceil - 1$.
    \begin{tiret}
        \item Upon receiving $(\delaycmd,\sid,P_i,\type,D)$ from the adversary for $P_i\in\P$, $\type\in\{\inputlbl,\outputlbl\}$, and $D\in\ZZ$ represented in unary notation, update $D^{\type}_i\assign \max(1,D^{\type}_i+D)$ and send $(\delaysetcmd,\sid)$ to the adversary.
        \item Upon receiving $(\inputcmd,\sid)$ from $P_i\in \P$ (or the adversary on behalf of corrupted $P_i$), run the Input Submission Procedure, and send $(\leakagecmd,\sid,P_i)$ to the adversary. Moreover, if $\sum_{j=1}^n\participated_j\ge t+1$, then additionally send $(\revealcmd,\sid,b,y)$ to the adversary.
        \item Upon receiving $(\replacecmd, \sid,v)$ from the adversary, record $a\assign v$.
        \item Upon receiving $(\fetchcmd, \sid)$ from $P_i\in\P$ (or the adversary on behalf of corrupted $P_i$), run the Input Submission and Output Release Procedures, and send $(\fetchedcmd,\sid,P_i)$ to the adversary and any messages set by the Output Release Procedure to $P_i$.
    \end{tiret}
    \smallskip
    {\bf Input Submission Procedure:} If $\participated_i=0$ and $(\inputcmd,\sid)$ was previously received from $P_i$, then do:
    \begin{enumerate}
        \item Update $D^{\inputlbl}_i\assign D^{\inputlbl}_i-1$.
        \item If $D^{\inputlbl}_i=0$ then set $\participated_i\assign1$.
        \item If $b = \bot$ then sample a ``fairness bit'' $b\leftarrow \mathrm{Bernoulli}(p)$ and a random value $y\rsample V$.
    \end{enumerate}
    {\bf Output Release Procedure:} If $\sum_{j=1}^{n}\participated_j\geq n-t$ then do:
    \begin{enumerate}
        \item Update $D^{\outputlbl}_i\assign D^{\outputlbl}_i-1$.
        \item If $D^{\outputlbl}_i=0$, then do the following. If $y_i=\bot$, then if $b=1$ or $a$ cannot be parsed as $(a_1,\ldots,a_n)\in V^n$, set $y_i\assign y$; otherwise, set $y_i\assign a_i$. Additionally, set $(\outputcmd,\sid,y_i)$ to be sent to $P_i$.
    \end{enumerate}
    \smallskip
}
\caption{The asynchronous OCC functionality.}
\label{fig:func-aocc}
\end{figure}

\subsection{The A-OCC Protocol}
\label{sec:OCC_protocol}

We proceed to present our asynchronous and multi-valued OCC protocol. We begin by discussing all the essential building blocks employed in our protocol. Subsequently, we provide a high-level overview of the protocol, highlighting its key ideas. This is followed by a detailed description.

\paragraph{Building blocks.}

The basic building blocks of our A-OCC protocol are A-VSS and A-Cast. A-VSS enables parties to contribute by privately providing their local randomness and only revealing this randomness when the contributions to the output are determined. Thus, A-VSS ensures the secrecy and verifiability of the shared secrets in an asynchronous setting. The A-VSS primitive is formally modeled as the ideal functionality $\favss$, described in Section~\ref{sec:functionalities}. On the other hand, A-Cast facilitates communication among parties by providing stronger guarantees than simple message distribution. This is especially crucial in asynchronous settings where challenges such as low message dispersion can occur. A-Cast helps in overcoming these challenges and ensures reliable message dissemination among the parties. We use the ideal functionality $\facast$, described in Section~\ref{sec:functionalities}, to model this primitive.

\paragraph{The protocol.}

As previously mentioned, our multi-valued protocol is built upon existing binary A-OCC constructions~\cite{feldman1989asynchronous,canetti1993fast} and introduces a novel combinatorial technique for extracting values from arbitrary domains. In both the binary protocol and our proposed protocol, each party secret-shares $n$ random elements from a field. It can be observed that at some point during the protocol execution, a vector of length $n$ consisting of random elements from the same field is established (with up to $t$ missing values due to asynchrony). For each coordinate of this vector, random elements shared by $t+1$ parties are utilized to prevent the adversary, controlling up to $t$ parties, from biasing any specific coordinate. Subsequently, each party starts reconstructing secrets shared by other parties to form the same vector locally. In the asynchronous setting, due to the low dispersion of messages, not all coordinates can be reconstructed by honest parties. This can result in different parties reconstructing different subsets of coordinates. However, by using mechanisms to improve message dispersion, as originally demonstrated by Feldman~\cite{feldman1989asynchronous}, it has been proven that when $t<n/3$, while the local vectors of honest parties may have up to $t$ missing coordinates, they have an overlap of size at least $n-2t>n/3$.\footnote{Feldman calculated the size of the overlap, denoted as $x$, based on the number of participants $n$ and the maximum number of corruptions $t$. The general relation is $x \geq n-t-\frac{t^2}{n-2t}$, which yields $x > n/3$ and $x > 5n/8$ when $t < n/3$ and $t < n/4$, respectively. This argument was later used in~\cite{canetti1993fast} to achieve optimal resiliency.} This is significant because without such mechanisms, and allowing for $t$ missing components when $n=3t+1$, the overlap in the local vectors of just four parties could be empty.

In fact, by incorporating the ``Gather'' protocol in~\cite{abraham2023reaching}, we can achieve an overlap of size at least $n-t>2n/3$, at the cost of an extra round of message exchange. While this stronger guarantee is not crucial to our feasibility result, we use it to improve the agreement probability of our OCC, and consequently the concrete round efficiency of our concurrent A-BA protocol.

Traditionally, existing protocols extract a single bit from the local views of the random vector by instructing parties to take all existing coordinates modulo $n$, outputting $0$ if any coordinate is $0$ and outputting $1$ otherwise. In contrast, our protocol represents a significant improvement by going beyond the extraction of a single bit from the local views of the random vector. This enhanced randomness extraction is through a combinatorial observation regarding vectors of random values, as formulated in Lemma~\ref{lem:repetitions}. This observation allows the parties to agree non-interactively on certain coordinates of the random vector with a constant probability, while also ensuring that these agreed-upon coordinates lie within their overlap section. The minimum such coordinate is then used to select a common output value from the vector.

Another important observation regarding existing binary A-OCC protocols is the lack of a proper termination mechanism. This poses significant challenges as network delays can cause parties to operate out of sync. In such cases, some parties may receive the output before completing their role in the execution, and if they stop, others may not be able to generate the output at all. This directly affects the simulator's ability to accurately simulate the protocol, especially in managing input and output delays in the ideal functionality. This is mainly because, unlike the protocol, the ideal functionality ensures that once a party receives the output, sufficient participation has occurred, and any other party, regardless of others' participation, can fetch the output if activated sufficiently many times. One potential solution could be invoking A-BA on the output at the end; however, this would create a circular dependency since A-OCC itself is used to achieve A-BA. Instead, we choose to adopt a simpler approach inspired by Bracha's termination mechanism. This approach resolves the participation issue without causing deadlocks and ensures agreement if all parties initiate the procedure with the same value. The formal description of our multi-valued asynchronous OCC protocol $\paocc$ appears in Figure~\ref{fig:prot-aocc}.

\begin{figure}[htbp]
\centering
\protbox{$\paocc^{V}$}{
The protocol is parameterized by a set $V$ of possible outcomes. At the first activation, verify that $\sid=(\P,\sid')$ for a player set $\P$ of size $n$. Let $t\assign\lceil\frac{n}{3}\rceil-1$ and $m\assign\lcm(n^2,|V|)$, and let $\FF$ be the smallest prime field with size at least $m$. Party $P_i\in\P$ proceeds as follows. Initialize sets $C_i$, $G_i$, $R_i$, and $S_i$ to $\emptyset$, flag $\finished$ to $0$, and variables $Z_i$ and $y_i$ to $\bot$. Also initialize $C_j'\assign \bot$, $b_{j,k}\assign 0$, $r_{k,j}\assign \bot$, $v_j\assign \bot$, and $z'_j\assign \bot$ for all $j,k\in[n]$.
\begin{tiret}
    \item Upon receiving input $(\inputcmd,\sid)$ from the environment, do the following. For each $j\in[n]$, choose $x_{i,j}\rsample[m]$ and send $(\sharecmd,\sid_i^{\sharelbl,j},x_{i,j})$ to an instance of $\favss^{\FF}$ with SID $\sid_i^{\sharelbl,j}\assign (\P,P_i,\sid',\texttt{share},j)$.
    \item Upon receiving input $(\fetchcmd,\sid)$ from the environment, if $\finished=1$ then output $(\outputcmd,\sid,y_i)$ to the environment; else do:
    \begin{enumerate}
        \item For each $j,k\in[n]$, fetch the share from the instance of $\favss^{\FF}$ with SID $\sid_j^{\sharelbl,k}$. Upon receiving back $(\sharedcmd,\sid_j^{\sharelbl,k})$ set $b_{j,k}\assign 1$, and if $b_{j,l}=1$ for all $l\in[n]$ then update $C_i\assign C_i\cup\{P_j\}$.
        \item Wait until $|C_i|\ge t+1$. Then, send $(\sendcmd,\sid_i^\attachlbl,C_i^{(t+1)})$ to an instance of $\facast$ with SID $\sid_i^\attachlbl\assign (\P,P_i,\sid',\texttt{attach})$.
        \item For each $j\in [n]$, fetch the output from the instance of $\facast$ with SID $\sid_j^\attachlbl$. Upon receiving back $(\outputcmd,\sid_j^\attachlbl,Y)$ where $|Y|=t+1$, record $C_j'\assign Y$ and wait until $C_j'\subseteq C_i$; then update $G_i\assign G_i\cup\{P_j\}$.
        \item Wait until $|G_i|\ge n-t$. Then, send $(\sendcmd,\sid_i^\setlbl,G_i^{(n-t)})$ to an instance of $\facast$ with SID $\sid_i^\setlbl\assign (\P,P_i,\sid',\setcmd)$.
        \item For each $j\in[n]$, fetch the output from the instance of $\facast$ with SID $\sid_j^\setlbl$. Upon receiving back $(\outputcmd,\sid_j^\setlbl,G_j')$ where $|G_j'|=n-t$, wait until $G_j'\subseteq G_i$; then update $R_i\assign R_i\cup\{P_j\}$.
        \item Wait until $|R_i|\ge n-t$. Then, send $(\sendcmd,\sid_i^\readylbl,R_i^{(n-t)})$ to an instance of $\facast$ with SID $\sid_i^\readylbl\assign (\P,P_i,\sid',\texttt{ready})$.
        \item For each $j\in[n]$, fetch the output from the instance of $\facast$ with SID $\sid_j^\readylbl$. Upon receiving back $(\outputcmd,\sid_j^\readylbl,R_j')$ where $|R_j'|=n-t$, wait until $R_j'\subseteq R_i$; then update $S_i\assign S_i\cup\{P_j\}$.
        \item Wait until $|S_i|\ge n-t$. Then, record $Z_i\assign G_i$.
        \item If $|S_i|\ge n-t$, then for each $P_k\in C_j'$ such that $P_j\in G_i$, send $(\reccmd,\sid_k^{\sharelbl,j})$ to the instance of $\favss^{\FF}$ with SID $\sid_k^{\sharelbl,j}$.\footnote{Note that reconstruction inputs can be sent outside of the initial batch, if a party gets added to $G_i$ later, but only one reconstruction input is ever sent per instance of $\favss^{\FF}$.}
        \item For each $P_k\in C_j'$ such that $P_j\in G_i$, fetch the secret from the instance of $\favss^{\FF}$ with SID $\sid_k^{\sharelbl,j}$. Upon receiving back $(\openedcmd,\sid_k^{\sharelbl,j},s)$ record $r_{k,j}\assign s$, and if $r_{l,j}\not=\bot$ for all $P_l\in C_j'$ then set $v_j\assign \sum_{P_l\in C_j'}r_{l,j} \bmod m$.
        \item Wait until $Z_i\not=\bot$ and $v_j\not=\bot$ for all $P_j\in Z_i$. Then, let $S\assign\{j\in[n] \mid v_j\equiv v_k\mod{n^2}\text{ for some }k\not=j\}$ and do the following. Let $z_i\assign v_{\min\{S\}}\bmod{|V|}$ (if $S=\emptyset$ then let $z_i$ be an arbitrary value from $V$) and send $(\sendcmd,\sid_i^{\termlbl},z_i)$ to an instance of $\facast$ with SID $\sid_i^{\termlbl}\assign (\P,P_i,\sid',\texttt{term})$.
        \item For each $j\in [n]$, fetch the output from the instance of $\facast$ with SID $\sid_j^{\termlbl}$. Upon receiving back $(\outputcmd,\sid_j^{\termlbl},x)$, record $z'_j\assign x$.
        \item Wait until $z'_j\not=\bot$ for at least $n-t$ indices $j\in[n]$. Then, set $\finished\assign1$ and $y_i$ to the mode (most repeated value) among all $z'_j\not=\bot$.
    \end{enumerate}
\end{tiret}
\smallskip
}
\caption{The multi-valued asynchronous OCC protocol.}
\label{fig:prot-aocc}
\end{figure}

\subsection{Security Proof}
\label{sec:OCC_proof}

Having described our A-OCC protocol, we proceed to present and prove the formal security statement that demonstrates how the protocol UC-realizes $\faocc$. However, we first prove a combinatorial observation regarding vectors of random values that facilitates the security proof. We formulate this observation separately in the following lemma, as it may be of independent interest.

\begin{lemma}
\label{lem:repetitions}
    Let $V$ be a vector of $n$ values chosen independently and uniformly at random from a set $S$ of size $N\in\Theta(n^2)$, and let $\alpha$ be a constant satisfying $0<\alpha\le1$. Then for any subset of indices $I\subseteq[n]$ such that $|I|\ge\alpha n$, with constant probability $p$ there is at least one repeated value in $V$; moreover, all of the repeated values are constrained to the indices in $I$.
\end{lemma}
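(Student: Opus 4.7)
The plan is to decompose the target event into three sub-events whose joint probability can be estimated via birthday-paradox-style calculations, after conditioning on $V$ restricted to $I^c$. Write $N = c n^2$ for a constant $c > 0$ implied by $N \in \Theta(n^2)$, and set $k = |I^c| \le (1-\alpha)n$ and $l = |I| \ge \alpha n$. Define (a) the values $V|_{I^c}$ are pairwise distinct; (b) no value taken at an index in $I^c$ equals a value taken at any index in $I$; and (c) $V|_I$ contains at least one repetition. The conjunction $(a)\cap(b)\cap(c)$ exactly captures the lemma's conclusion: a repetition exists in $V$, and every repetition is confined to indices in $I$.

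First I would reveal $V|_{I^c}$ and bound $\Pr[(a)]$ from below by the standard pairwise union bound, which gives $1 - O(1/c)$ since $\binom{k}{2}/N = O(1/c)$. Conditioned on any distinct realization of $V|_{I^c}$, the coordinates of $V|_I$ remain i.i.d.\ uniform on $[N]$, so $\Pr[(b) \mid (a)] = ((N-k)/N)^l$, which is bounded below by a positive constant depending on $c$ because $lk/N \le n^2/N = 1/c$. Conditioning further on (b), the coordinates of $V|_I$ become i.i.d.\ uniform on the set $[N]\setminus V|_{I^c}$ of size $N-k \ge cn^2 - n$, so the standard birthday lower bound gives $\Pr[(c)\mid(a),(b)] \ge 1 - \exp(-l(l-1)/(2(N-k)))$, which for large $n$ is at least a positive constant depending on $\alpha$ and $c$. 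Multiplying the three constant lower bounds yields the desired constant $p = p(\alpha, c) > 0$.

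The one point that requires a bit of care is justifying the decoupling under the chosen order of revelation: after fixing $V|_{I^c}$ to a specific distinct $k$-tuple, the coordinates of $V|_I$ remain i.i.d.\ uniform on $[N]$, and further conditioning on the event that they avoid $V|_{I^c}$ simply restricts each coordinate independently to the remaining $N - k$ values. Once this is spelled out, the remaining estimates are routine, and one may take $c$ to be any constant within the range of $N/n^2$ guaranteed by the $\Theta(n^2)$ assumption; the resulting $p$ depends only on $\alpha$ and this range, not on $n$.
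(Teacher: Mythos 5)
Your decomposition is exactly the one the paper uses: the target event is written as the conjunction of (a) no repeat among the coordinates outside $I$, (b) no collision between an inside and an outside coordinate, and (c) a repeat inside $I$, and both arguments then condition by revealing the outside coordinates first and finish with birthday-paradox estimates. So in structure this is the same proof as the paper's, with the chain rule replacing the paper's slightly more roundabout conditional-probability manipulation.

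The one step that does not deliver the stated conclusion as written is the bound for event (a). The pairwise union bound gives $\Pr[(a)]\ge 1-\binom{k}{2}/N$, and with $k\le(1-\alpha)n$ and $N\ge c_1n^2$ this is, up to lower-order terms, $1-(1-\alpha)^2/(2c_1)$. Since the lemma allows any constants in $N\in\Theta(n^2)$ and any constant $\alpha\in(0,1]$, this quantity can be zero or negative (e.g., $\alpha=1/10$ and $N=n^2/5$, where the lemma still holds with a small positive $p$), so the advertised ``product of three constant lower bounds'' fails in that regime and your closing claim that the argument works for any constant in the $\Theta(n^2)$ range is not supported. The repair is a one-liner using the very estimate you already apply for (c): $\Pr[(a)]=\prod_{i=0}^{k-1}(1-i/N)\ge\exp(-\Theta(k^2/N))$, a positive constant for all $N\in\Theta(n^2)$. (The paper sidesteps this differently, by lower-bounding the conjunction of (a) and (b) in one shot by the probability that $V$ has no repeats at all, namely $\prod_{i=0}^{n-1}(1-i/N)\approx e^{-(n^2-n)/(2N)}$.) With that fix your proof is correct and essentially coincides with the paper's.
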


\begin{proof}
It suffices to consider the case $|I|=\lceil\alpha n\rceil$. We first define the following events. Denote by $V_J$ the restriction of $V$ to the indices in $J\subseteq[n]$.
\begin{align*}
    &E_{V}\colon&&\text{At least two components of $V$ have the same value} \\
    &E_{V_{I}}\colon&&\text{At least two components of $V_{I}$ have the same value} \\
    &E_{V_{I^\complement}}\colon&&\text{At least two components of $V_{I^\complement}$ have the same value} \\
    &E_{(V_{I},V_{I^\complement})}\colon&&\text{At least one component of $V_{I}$ has the same value as a component of $V_{I^\complement}$}
\end{align*}
We now compute
\begin{align*}
    p&=\Pr[E_{V_I}\cap E^\complement_{V_{I^\complement}}\cap E^\complement_{(V_{I},V_{I^\complement})}] \\
    &= \Pr[E_{V_I}\mathbin| E^\complement_{V_{I^\complement}}\cap E^\complement_{(V_{I},V_{I^\complement})}]\Pr[E^\complement_{V_{I^\complement}}\cap E^\complement_{(V_{I},V_{I^\complement})}] \\
    & \geq \Pr[E_{V_I}\mathbin| E^\complement_{V_{I^\complement}}\cap E^\complement_{(V_{I},V_{I^\complement})}]\Pr[E^\complement_{V}] && \left(E^\complement_{V}\subseteq E^\complement_{V_{I^\complement}}\cap E^\complement_{(V_{I},V_{I^\complement})}\right) \\
    & = \left(1- \Pr[E^\complement_{V_I}\mathbin| E^\complement_{V_{I^\complement}}\cap E^\complement_{(V_{I},V_{I^\complement})}] \right)\Pr[E^\complement_{V}] \\
    & = \left(1- \frac{\Pr[E^\complement_{V_I}\mathbin\cap E^\complement_{V_{I^\complement}}\cap E^\complement_{(V_{I},V_{I^\complement})}]}{\Pr[E^\complement_{(V_{I},V_{I^\complement})}\mathbin|E^\complement_{V_{I^\complement}}]\Pr[E^\complement_{V_{I^\complement}}]} \right)\Pr[E^\complement_{V}] \\
    & = \left(1- \frac{\Pr[E^\complement_{V}]}{\Pr[E^\complement_{(V_{I},V_{I^\complement})}\mathbin |E^\complement_{V_{I^\complement}}]\Pr[E^\complement_{V_{I^\complement}}]} \right)\Pr[E^\complement_{V}] \\
    & = \left(1 - \frac{\prod_{i=0}^{n -1}\left(1-\frac{i}{N}\right)}{\left(1-\frac{n-\lceil \alpha n\rceil}{N} \right)^{\lceil \alpha n\rceil}\prod_{i=0}^{n -\lceil \alpha n \rceil -1}\left(1-\frac{i}{N}\right)}\right)\prod_{i=0}^{n -1}\left(1-\frac{i}{N}\right) \\
    &\approx \left(1- \frac{ \prod_{i=0}^{n -1}\left(e^{-\frac{i}{N}}\right)}{\left( e^{-\frac{n-\lceil \alpha n \rceil}{N}} \right)^{\lceil \alpha n \rceil}\prod_{i=0}^{n -\lceil \alpha n \rceil -1}\left(e^{-\frac{i}{N}}\right)}\right)\prod_{i=0}^{n -1}\left(e^{-\frac{i}{N}}\right) && \left(\text{since }\frac{n}{N}\ll 1\right)\\
    & = \left(1- \frac{ e^{-\frac{1}{N}\sum_{i=0}^{n -1}i}}{e^{-\frac{n \lceil \alpha n \rceil -\lceil \alpha n \rceil^2}{N}}\left(e^{-\frac{1}{N}\sum_{i=0}^{n -\lceil \alpha n \rceil -1}i} \right)}\right)e^{-\frac{1}{N}\sum_{i=0}^{n -1}i} \\
    & = \left(1- \frac{ e^{-\frac{n^2-n}{2N}}}{e^{-\frac{n \lceil \alpha n \rceil -\lceil \alpha n \rceil^2}{N}}\left(e^{-\frac{n^2 + \lceil \alpha n \rceil^2 -2n\lceil \alpha n \rceil -n + \lceil\alpha n \rceil}{2N}} \right)}\right)e^{-\frac{n^2-n}{2N}} \\
    & = \left(1-e^{-\frac{\lceil\alpha n\rceil^2-\lceil\alpha n\rceil}{2N}}\right)e^{-\frac{n^2-n}{2N}} \\
    & \in \Theta(1). && \left(\textrm{since}~N\in\Theta(n^2)\right)
    \qedhere
\end{align*}
\end{proof}

With the groundwork laid out, we now prove the security of protocol $\paocc$:

\begin{theorem}
\label{thm:occ}
    There exists a probability $p\in \Theta(1)$ such that for any integer domain $V$, protocol $\paocc^V$ UC-realizes $\faocc^{V,p}$ with perfect security in the $(\facast,\allowbreak\favss^\FF)$-hybrid model where $\FF$ is the smallest prime field of size at least $\lcm(|V|,\allowbreak n^2)$, in constant rounds and in the presence of an adaptive and malicious t-adversary, provided $t < \frac{n}{3}$.
\end{theorem}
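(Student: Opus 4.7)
The plan is to construct a perfect simulator $\sim$ interacting with $\faocc^{V,p}$ that produces a view for $\adv$ (hence for $\env$) identically distributed to the real execution of $\paocc^V$ in the $(\facast,\favss^\FF)$-hybrid model, and then to argue constant-round termination by inspection. Because $\facast$ and $\favss^\FF$ are ideal, $\sim$ fully controls their interfaces: it observes every $(\sendcmd,\cdot)$ and $(\sharecmd,\cdot)$ issued on behalf of corrupted parties, it sees every fetch request, and it emulates the honest side of each hybrid instance. The simulator mirrors protocol progress into the functionality: when a simulated honest $P_i$ first invokes its $n$ A-VSS sharing instances, $\sim$ submits $(\inputcmd,\sid)$ to $\faocc^{V,p}$ on its behalf; whenever $P_i$ would fetch from any simulated hybrid, $\sim$ forwards an appropriately-delayed $(\fetchcmd,\sid)$ to the functionality. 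Consequently the input-participation and output-release triggers in $\faocc^{V,p}$ are perfectly aligned with those of the simulated protocol.

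The core of the analysis is the joint distribution of the tallies $(v_j)_{j\in[n]}$, where $Z\subseteq[n]$ denotes the set of slot-indices present in every honest party's $G_i$ at the time reconstruction begins. The message-dispersion argument of Feldman and Canetti--Rabin, sharpened by the Gather-style attach/set/ready exchanges in steps~2--7 of $\paocc$, guarantees that $|Z|\geq \alpha n$ for a fixed constant $\alpha>1/3$, independent of adversarial strategy. For each $P_j\in Z$, the attach set $C_j'$ has size $t+1$ and must contain at least one honest dealer whose share $r_{l,j}$ is information-theoretically hidden until reconstruction; hence $v_j=\sum_{P_l\in C_j'}r_{l,j}\bmod m$ is uniform on $[m]$ and, across distinct indices of $Z$, the tallies are mutually independent. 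Applying Lemma~\ref{lem:repetitions} to the reduced vector $(v_j\bmod n^2)_{j\in[n]}$ with $I=Z$ yields a constant $p\in\Theta(1)$ such that, with probability at least $p$, there is at least one repeat in the full vector and every repeat lies in $Z$. On this event, every honest party computes the same index $j^\ast=\min\{j:v_j\equiv v_k\pmod{n^2}\text{ for some }k\neq j\}$ --- well-defined identically across honest parties because $Z\subseteq G_i$ for every honest $i$ and no confounding repeats exist outside $Z$ --- and then broadcasts the same $z_i=v_{j^\ast}\bmod|V|$. A symmetry argument shows that conditioning on the Lemma's event and on $j^\ast$ leaves $v_{j^\ast}$ uniform on $[m]$, so $z_i$ is uniform on $V$ since $|V|\mid m$. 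The Bracha-style termination rule (mode over $n-t$ received $z_j'$ values) then forces every honest party to output this common uniform value, because at least $n-2t>t$ of the received values are the common $z_i$.

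The principal technical hurdle is the \emph{inverse sampling} performed by $\sim$. Once $\sum_j\participated_j\geq t+1$, $\faocc^{V,p}$ reveals the fairness bit $b$ and the prescribed value $y$ to the simulator. The simulator must then deliver opened shares in the simulated $\favss^\FF$ instances so that, after combining with the corrupted-party shares already fixed by $\adv$, the resulting tally vector is distributed \emph{exactly} as in the real-world execution conditioned on $(b,y)$. If $b=1$, $\sim$ samples $(v_j\bmod n^2)_{j\in[n]}$ from the conditional distribution induced by Lemma~\ref{lem:repetitions}'s event, identifies the minimum-repeat slot $j^\ast\in Z$, lifts $v_{j^\ast}$ to a value in $[m]$ congruent to $y$ modulo $|V|$ (uniform over the $m/(n^2|V|/\gcd)$ such values), samples the remaining $v_j$'s uniformly in $[m]$ subject to their fixed mod-$n^2$ residues, and finally samples one honest share per slot uniformly conditional on realizing the specified $v_j$ (other honest shares are uniform). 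If $b=0$, $\sim$ samples the tally vector from the complementary event, computes the per-party outputs $(a_1,\ldots,a_n)$ that the mode rule would yield, and transmits $(\replacecmd,\sid,(a_1,\ldots,a_n))$ to $\faocc^{V,p}$. Adaptive corruptions are handled by deferring share commitments: since every $C_j'$ with $P_j\in Z$ contains at least one honest dealer and $\adv$ corrupts at most $t$ parties over the whole execution, $\sim$ can always reserve one unexposed honest share per slot in $Z$ to absorb the constraint, and can equivocate on already-sampled shares for parties corrupted after the tallies are fixed. Perfect indistinguishability follows because every simulated message is either a uniform sample or a sample from the exact conditional distribution of the real execution. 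Finally, constant round complexity is immediate from inspection of $\paocc^V$: A-VSS is constant-round by Proposition~\ref{thm:avss}, the attach/set/ready/term A-Cast phases each contribute $O(1)$ rounds by Proposition~\ref{thm:acast}, and every wait-predicate fires after a constant number of hybrid outputs.
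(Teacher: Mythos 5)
Your overall architecture matches the paper's proof (simulate the hybrids, reduce everything to the distribution of the tally vector, apply Lemma~\ref{lem:repetitions} to the common core, and use the mode-based termination phase), but the crucial inverse-sampling step has two genuine gaps. First, you conflate the functionality's fairness parameter $p$ with the \emph{exact} real-world probability of the good event. The $p$ you extract from Lemma~\ref{lem:repetitions} is only a lower bound, while in the real execution the event ``at least one repeat, all repeats inside the core set $A$'' occurs with some exact probability $p_1\ge p$ (in the paper, $p_1$ is computed in closed form from $|A|=n-t$). Your simulator produces this event only when $b=1$, i.e., with probability exactly $p$, and samples ``from the complementary event'' whenever $b=0$; unless $p=p_1$ exactly, this skews the repeat-pattern distribution by a constant amount, which the environment can detect from the reconstructed tallies, so perfect (indeed even statistical) indistinguishability fails. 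The paper's simulator fixes this by computing $p_1,p_2,p_3$ and, when $b=0$, still sampling from $E_1$ with the residual probability $q_1=\frac{p_1-p}{1-p}$ (and from $E_2,E_3$ with $\frac{p_2}{1-p},\frac{p_3}{1-p}$), so that $E_1$ occurs with total probability $p+(1-p)q_1=p_1$; this also requires arguing $p\le p_1$, which is where Lemma~\ref{lem:repetitions} is actually invoked.

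Second, your $b=1$ procedure---sample the residues mod $n^2$ first, identify $j^\ast$, then ``lift'' $v_{j^\ast}$ to a value congruent to $y$ modulo $|V|$---breaks down whenever $\gcd(n^2,|V|)>1$ (e.g., the OLE case $|V|=n$): since $m=\lcm(n^2,|V|)$, a value in $[m]$ with prescribed residues modulo both $n^2$ and $|V|$ exists only if those residues agree modulo $\gcd(n^2,|V|)$, and your already-sampled residue at $j^\ast$ need not be compatible with $y$. Patching this by conditioning or readjusting the residue at $j^\ast$ after the fact is exactly the delicate point the paper flags: naive readjustment can merge repeats, create repeats outside $A$, or otherwise distort the repeat pattern. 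The paper's Procedure~1 (step~4.5) avoids this by first sampling $x$ uniformly from the congruence class of $y$ modulo $|V|$ and then re-randomizing the entire set of residues via a random permutation of $[n^2]$ mapping the minimum-index repeated residue to $x\bmod n^2$, which preserves the pattern structure while enforcing consistency with $y$. Until both of these issues are repaired, the claimed perfect simulation does not go through; the remaining ingredients of your argument (uniformity and common fixity of the tallies, the Gather-style core set, output uniformity because $|V|$ divides $m$, and the constant-round count) are sound and essentially identical to the paper's.
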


\begin{proof}
Let $\adv$ be an adversary in the real world. We construct a simulator $\sim$ in the ideal world, such that no environment $\env$ can distinguish whether it is interacting with $\paocc^V$ and $\adv$, or with $\faocc^{V,p}$ and $\sim$, where $p$ is to be determined later.

The simulator internally runs a copy of $\adv$, and plays the roles of $\facast$, $\favss^\FF$, and the parties in a simulated execution of the protocol. All inputs from $\env$ are forwarded to $\adv$, and all outputs from $\adv$ are forwarded to $\env$. Moreover, whenever $\adv$ corrupts a party in the simulation, $\sim$ corrupts the same party in the ideal world by interacting with $\faocc^{V,p}$, and if the corruption was direct (i.e., not via either of the aiding functionalities), then $\sim$ sends $\adv$ the party's state and thereafter follows $\adv$'s instructions for that party. Moreover, $\sim$ manages delays in the ideal functionality based on the delays that $\adv$ sets on the hybrids.

The simulated execution commences when $\sim$ receives the messages $(\leakagecmd, \sid, P_i)$ and $(\revealcmd, \sid, b, y)$ from $\faocc$. During the internal execution, $\sim$ faithfully simulates the behavior of $\facast$. As for $\favss^\FF$, $\sim$ can simulate the sharing phase without explicitly assigning any random secrets to honest parties. However, the actual value of each secret must be determined by $\sim$ before $n-t$ parties participate in the reconstruction phase of the corresponding $\favss^\FF$ instance. It is crucial for $\sim$ to select these secrets in such a way that the output matches that of $\faocc$, while ensuring indistinguishability between the real- and ideal-world views.

We first argue that every tally $v_j$ computed by an honest party $P_i$ is uniformly distributed in $[m]$, and moreover that another honest party $P_h$ cannot compute a \emph{different} tally $v_j'$ for party $P_j$ (although up to $t$ tallies may be ``missing'' from $P_h$'s view). Recall that $P_i$ computes $v_j$ by summing the reconstructed secrets ``attached'' to $P_j$ (i.e., the votes cast by the parties in $C_j'$ for $P_j$). An honest party $P_k\in C_j'$ will have cast a uniform vote $x_{k,j}\in[m]$, but even for a corrupted party a well-defined field element will be reconstructed as its vote, since $P_i$ validated that $C_j'\subseteq C_i$ and $C_i$ contains parties whose secrets were successfully shared to $P_i$. Moreover, the votes of corrupted parties in $C_j'$ are independent of those cast by honest parties in $C_j'$, because the secrets attached to $P_j$ are fixed by the time the first honest party $P_h$ ``accepts'' $P_j$ (i.e., adds $P_j$ to $G_h$)---due to the guarantees of $\favss^{\FF}$---while no honest party starts reconstructing them until it considers $P_j$ as accepted. Since $|C_j'|=t+1$, it must contain at least one honest party, and therefore $P_i$ will compute a uniform tally $v_j\in[m]$. Furthermore, since $P_j$ A-Casts the set $C_j'$, all honest parties will agree on the parties contributing to $P_j$'s tally, and since any honest party $P_h$ computing $v_j$ has $C_j'\subseteq C_h$, by the guarantees of $\favss^{\FF}$ they will all reconstruct the secrets attached to $P_j$ to the same values. Therefore, we can indeed speak of a \emph{fixed} vector of uniform tallies, where each honest party eventually holds a local view of at least $n-t$ coordinates.

By employing an enhanced version of the counting argument in~\cite{feldman1989asynchronous,canetti1993fast} in the same way as~\cite{abraham2023reaching}, we can demonstrate that once the first honest party $P_i$ reaches the stage where $|S_i|\ge n-t$, a subset $A$ of parties, with size $n-t>2n/3$, becomes fixed such that each honest party $P_h$ will (eventually) have $A\subseteq Z_h$. That is, over $2/3$ of the tallies will be visible to all honest parties by the time they are able to generate output. In more detail, assume that the honest party $P_i$  reaches the stage where $s\assign|S_i|\ge n-t$. We can construct an $s \times n$ table $T$, where $T_{j,k}=1$ if and only if $P_j \in S_i$ and $P_k \in R_j'$ (where $R_j'$ was received from $P_j$'s A-Cast).\footnote{Technically the table has $n$ rows, each corresponding to an A-Casted $R$-set, and $P_i$ has a local view of $s$ rows that it has validated against $R_i$.} Each row of the table contains (exactly) $n-t$ entries equal to one, indicating that there are a total of $s(n-t)$ one-entries in $T$. We claim that at least one column must contain at least $t+1$ one-entries. Indeed, suppose to the contrary that each column contains at most $t$ one-entries. Since $n-t\le s$ and $t<n/3$, we have
\[
\frac{4}{9}n^2<(n-t)^2\le s(n-t)\le nt<\frac{1}{3}n^2,
\]
which is a contradiction.

Therefore, we can conclude that some column $k$ in ($P_i$'s view of) the table contains at least $t+1$ one-entries. Now consider another honest party $P_h$ who reaches the stage $|S_h|\ge n-t$ and thereby records $Z_h$. Since $(n-t)+(t+1)>n$, the \kth column in $P_h$'s view of the table must necessarily contain a one-entry. From the specification of the protocol, we can deduce that $P_k\in R_h$, and hence $G_k'\subseteq Z_h$ (where $G_k'$ was received from $P_k$'s A-Cast). In other words, there is a set $A\assign G_k'$ of size $n-t$ such that every honest party $P_h$ will eventually have $A\subseteq Z_h$.

Once the first honest party $P_i$ reaches the condition $|S_i|\ge n-t$, the simulator $\sim$, being aware of message delivery, can easily determine the set $A$. Basically, $\sim$ can form the table $T$ as described above, find a column $k$ with at least $t+1$ ones, and then set $A\assign G_k'$ (or rather the indices of the parties in $G_k'$). Utilizing the set $A$, and the values $b$ and $y$ received from $\faocc$, the simulator is able to sample a vector of tallies that follows the same distribution as in the real-world execution and results in the same output as $\faocc$ when $b=1$. Let us now outline the process for this inverse sampling. We first define the following events, given the set $A$ and a random vector $(w_1,\ldots,w_n)$ where each element $w_i\rsample [m]$.

\begin{itemize}
    \item $E_1$: There is at least one repeat in the vector $(w'_1,\ldots,w'_n)$, where $w'_i = w_i \bmod{n^2}$ for all $i\in [n]$, and all the repeats lie inside $A$.
    \item $E_{1,y}$ for $y\in V$: The vector $(w_1,\ldots,w_n)$ is in $E_1$, and the smallest index $j$ of the repeated elements in $(w'_1,\ldots,w'_n)$, where $w'_i = w_i \bmod{n^2}$ for all $i\in [n]$, contains an element $w_j$ congruent to $y$ modulo $|V|$.
    \item $E_2$: There are no repeats at all in the vector $(w'_1,\ldots,w'_n)$, where $w'_i = w_i \bmod{n^2}$ for all $i\in [n]$.
    \item $E_3$: There is at least one repeat in the vector $(w'_1,\ldots,w'_n)$, where $w'_i = w_i \bmod{n^2}$ for all $i\in [n]$, which is not fully contained inside $A$.
\end{itemize}

The simulator $\sim$ initially computes the precise probabilities of the events $E_1$, $E_2$, and $E_3$.\footnote{These probabilities only need to be computable by the simulator to give a constructive description of $\sim$. Note that an existential argument, which is sufficient to complete the emulation, would not require $\sim$ to calculate those probabilities at all. Nonetheless, we opt for a constructive approach to better illustrate the concept.} Given that $m = \lcm(n^2, |V|)$, sampling vectors uniformly from $[m]^n$ and taking each coordinate modulo $n^2$ is equivalent to uniform sampling from $[n^2]^n$. Therefore, we can compute the aforementioned probabilities by considering vectors exclusively from $[n^2]^n$. Let $l=|A|$. For $1\le k\le l$, denote by $N_k$ the number of ways to choose the elements inside $A$ from any fixed subset of $k$ residues modulo $n^2$ such that \emph{exactly} $k$ distinct residues are chosen (equivalently, the number of strings of length $l$ over an alphabet of size $k$ that use each character at least once). We have the following recursive expression for $N_k$.
\[
N_k=k^l-\sum_{i=1}^{k-1}\binom{k}{i}N_i
\]
Note that $N_1,\ldots,N_l$ are efficiently computable. Moreover, for any $1\le k\le l$, the number of ways to choose the elements in $(w'_1,\ldots,w'_n)$ from $[n^2]$ such that $A$ contains exactly $k$ distinct elements and all repeats (if any) lie inside $A$ is
\[
\binom{n^2}{k}N_k\prod_{j=0}^{n-l-1}(n^2-k-j).
\]
Noting that $k=l$ exactly corresponds to the case that there are no repeats at all, we now compute $p_i\assign\Pr[E_i]$ for $i\in [3]$:
\begin{align*}
    p_1&=\frac{1}{(n^2)^n}\sum_{k=1}^{l-1}\left(\binom{n^2}{k}N_k\prod_{j=0}^{n-l-1}(n^2-k-j)\right)\\
    p_2&=\frac{1}{(n^2)^n}\prod_{j=0}^{n-1}(n^2-j)\\
    p_3&=1-p_1-p_2
\end{align*}

Now we present a sampling strategy utilized by $\sim$ to ensure that the vector of tallies is sampled in a manner that maintains both the observed distribution in the real-world execution and the consistency of outputs with the ideal functionality. This strategy is effective under the condition that the probability of outputting a random common coin in the ideal functionality, denoted as $p$, does not exceed the probability defined above as $p_1$, which is the probability with which the protocol guarantees a random common coin. Looking ahead, $\sim$ is required to perform a task with a probability denoted as $q_1$, which becomes negative if $p > p_1$, thereby rendering the entire process unsound. For any $p \le p_1$, the simulator does the following:
\begin{itemize}
    \item If $\faocc$ outputs $b=1$, uniformly sample from $E_{1,y}$ where $y$ is the value received from $\faocc$.
    \item If $b=0$ is received from $\faocc$:
    \begin{itemize}
        \item With probability $q_1=\frac{p_1-p}{1-p}$, uniformly sample from $E_1$.
        \item With probability $q_2=\frac{p_2}{1-p}$, uniformly sample from $E_2$.
        \item With probability $q_3=\frac{p_3}{1-p}$, uniformly sample from $E_3$.
    \end{itemize}
\end{itemize}

Before delving into the details of the sampling procedures, it is important to understand why this inverse sampling approach effectively eliminates any distinguishability. The simulator samples from $E_{1,y}\subseteq E_1$ when $\faocc$ outputs $b=1$, and from $E_1$ with probability $q_1$ when $\faocc$ outputs $b=0$. Since $\faocc$ samples $b\leftarrow \mathrm{Bernoulli}(p)$, the simulator samples a vector from $E_1$ with probability $p+(1-p)q_1=p_1$, which exactly matches the probability in the real-world execution. Moreover, when $b=1$, the simulator ensures that the element in the sample at the smallest index of a repeated residue is congruent modulo $|V|$ to the value $y$ received from $\faocc$, which guarantees that the simulated view is consistent with the output of $\faocc$. It is important to note that enforcing this constraint does not skew the probability since $\faocc$ samples $y$ uniformly at random from $V$ and $m$ is a multiple of $|V|$. In cases when $b=0$, the simulator influences $\faocc$ to output values that match those in the simulated execution. Additionally, $\sim$ samples from $E_2$ and $E_3$ with probabilities $q_2$ and $q_3$, respectively. Since $\faocc$ outputs $b=0$ with probability $1-p$, the simulator samples from $E_2$ and $E_3$ with respective probabilities $(1-p)q_2=p_2$ and $(1-p)q_3=p_3$, which exactly match the corresponding real-world probabilities. Therefore, this inverse sampling strategy ensures that there are no distinguishability opportunities for $\env$.

Now we explain how each sampling procedure works. The general approach is to first determine the ``repeat pattern,'' which is encoded as a vector $(r_1,\ldots,r_n)\in[n^2]^n$ of residues modulo $n^2$, consistently with the event being sampled from, and then fill in each coordinate of the vector $(w_1,\ldots,w_n)\in[m]^n$ by sampling from the corresponding congruence class (with one caveat, discussed below). This is of course equivalent to sampling the elements in $(w_1,\ldots,w_n)$ such that their residues modulo $n^2$ have the appropriate repeat pattern, but it greatly simplifies the presentation and also enables us to cast the procedure for sampling from $E_{1,y}$ as a variation on the procedure for sampling from $E_1$.

\begin{itemize}
    \item Procedure 1 (sampling from $E_1$ or $E_{1,y}$):
    \begin{enumerate}
        \item Initialize $(w_1,\ldots,w_n)$ and $(r_1,\ldots,r_n)$ to $(\bot,\ldots,\bot)$.
        \item Randomly choose two distinct indices $g$ and $h$ from $A$.\footnote{Note that this is well-defined since $|A|=n-t>2$ when $1\le t<n/3$.} Sample $r_g=r_h\rsample[n^2]$.
        \item For each $i\in A\setminus\{g,h\}$, sample $r_i\rsample[n^2]$.
        \item Fill in the remaining coordinates of $(r_1,\ldots,r_n)$ with random \emph{distinct} residues that have not yet been chosen. That is, for each $j\in [n]\setminus A$, sample $r_j\rsample[n^2]\setminus\cup_{k=1}^n\{r_k\}$.
        \item[\bf 4.5.] \textbf{(Only if sampling from $E_{1,y}$)} Let $S\assign\{j\in[n]\mid r_j=r_k\text{ for some }k\not=j\}$, and sample $x\rsample\{v\in[m]\mid v\equiv y\mod{|V|}\}$. Set $w_{\min\{S\}}\assign x$. Then, randomly choose a permutation $\pi$ over $[n^2]$ such that $\pi(r_{\min\{S\}})=x\bmod{n^2}$, and update $r_i\assign\pi(r_i)$ for each $i\in[n]$.
        \item For each $i\in[n]$ such that $w_i=\bot$, sample $w_i\rsample\{v\in[m]\mid v\equiv r_i\mod{n^2}\}$.
    \end{enumerate}
    \item Procedure 2 (sampling from $E_2$):
    \begin{enumerate}
        \item Initialize $(w_1,\ldots,w_n)$ to $(\bot,\ldots,\bot)$.
        \item Randomly choose $n$ distinct residues $r_1,\ldots,r_n$ from $[n^2]$.
        \item For each $i\in[n]$, sample $w_i\rsample\{v\in[m]\mid v\equiv r_i\mod{n^2}\}$.
    \end{enumerate}
    \item Procedure 3 (sampling from $E_3$):
    \begin{enumerate}
        \item Initialize $(w_1,\ldots,w_n)$ and $(r_1,\ldots,r_n)$ to $(\bot,\ldots,\bot)$.
        \item Randomly choose an index $g$ from $[n]\setminus A$ and an index $h$ from $[n]\setminus \{g\}$. Sample $r_g=r_h\rsample[n^2]$.
        \item For each $i\in [n]\setminus \{g,h\}$, sample $r_i\rsample[n^2]$.
        \item For each $i\in[n]$, sample $w_i\rsample\{v\in[m]\mid v\equiv r_i\mod{n^2}\}$.
    \end{enumerate}
\end{itemize}

The simulator uses Procedure 1 to sample from either $E_1$ or $E_{1,y}$. In the former case, there is no need to ensure that the sample is consistent with any particular output value. Thus, $\sim$ can randomly ``plant'' a repeat inside $A$, sample the rest of the residues appearing within $A$ totally at random, and lastly round out the repeat pattern by selecting unused residues for the indices outside of $A$. This process guarantees that we sample from $E_1$ with the uniform distribution.

The latter case, in which $\sim$ must sample from $E_1$ while ensuring that the element at the smallest index of a repeated residue is congruent to $y$ modulo $|V|$, requires a more complex approach. It is not enough to simply follow Procedure 1 and then readjust the repeated residue with minimum index (as well as the elements in the vector $(w_1,\ldots,w_n)$ from the corresponding congruence class) at the end. The reason is that adjusting values after sampling can potentially skew the distribution of the repeat pattern. If we adjust only the values of existing repeats, there is a chance that the adjusted value may match another repeated value, altering the distribution by decreasing the number of distinct repeated values and increasing the number of times a value is repeated. It is even possible that the adjusted value will collide with a value outside of the designated set $A$! On the other hand, introducing new repeats would increase the number of distinct repeated values and once again deviate from the desired distribution. Addressing these challenges requires careful consideration and decision-making on the part of $\sim$, as determining when to add new repeats and when to readjust existing ones is a delicate task.

Fortunately, it is possible for $\sim$ to carefully re-sample the repeat pattern, preserving the overall \emph{structure} but ensuring consistency with $y$. Since taking the repeated value with minimum index is a global criterion, this two-pass strategy is essentially the best we can hope for. In more detail, $\sim$ starts by determining the (preliminary) repeat pattern as if it were sampling from $E_1$; in this way, $\sim$ obtains a correct distribution of repeats naturally induced by random values. However, $\sim$ now samples a random element $x\in[m]$ that is congruent to $y$ modulo $|V|$, which is used to reassign the congruence class (modulo $n^2$) associated with the smallest index of a repeated value in the old repeat pattern (note that this new congruence class is not uniquely determined by $y$). The set of distinct residues appearing in the repeat pattern is then re-sampled around this constraint; in other words, we apply a random permutation over residues with the property that the appropriate residue is mapped to $x\bmod{n^2}$. The simulator now uses the modified repeat pattern to sample the elements in $(w_1,\ldots,w_n)$, except that the element at the smallest index of a repeated residue should be fixed to $x$. By following this procedure, $\sim$ is guaranteed to sample uniformly from $E_{1,y}$, for any $y\in V$.

Procedure 2 is straightforward: $\sim$ forms the repeat pattern by choosing random residues, ensuring that no value is used more than once. This simple strategy guarantees that the sample is drawn from $E_2$ and maintains a uniform distribution.

Finally, in Procedure 3, the simulator first randomly plants a repeat involving an index outside of $A$ (it does not matter whether the second index is inside $A$ or not). Then, the remaining residues can be chosen randomly without any restrictions. This process guarantees a uniform sample from $E_3$.

That completes the inverse sampling process. With the knowledge of $A$, $b$, and $y$, the simulator is able to sample a vector $(w_1, \ldots, w_n)$ that is indistinguishable from the observations in the real-world execution. Furthermore, the sampled vector is consistent with the output $y$ of the ideal functionality when the fairness bit $b=1$.

After sampling the vector of tallies $(w_1, \ldots, w_n)$, the simulator retains the flexibility to determine the secrets shared by honest parties. This is possible because none of the honest parties engage in the reconstruction phase of any $\favss$ instance until the first honest party $P_i$ satisfies the condition $|S_i|\ge n-t$. Consequently, $\sim$ has sufficient time to make decisions regarding the shared secrets without the need to leak them to the adversary. At this point, $\sim$ can initiate the opening of secrets through the reconstruction phase of $\favss$ in such a way that if an honest party recovers the \jth tally $v_j$, it will correspond to $v_j=w_j$. This can be accomplished by leveraging the fact that each tally consists of secrets from $t+1$ parties, which must include at least some honest parties. By appropriately adjusting the secrets of the honest parties, $\sim$ can ensure that the tally sums up to the desired value. It is crucial to note that, according to the inverse sampling process described above, the vector $(w_1,\ldots,w_n)$ maintains the same distribution as in the real-world execution. Therefore, sampling random secrets while conditioning them on the desired tallies does not introduce any distinguishability in the views.

\paragraph{Correctness.} The above approach allows for generating an output in the simulated execution matched with the output of the ideal functionality and results in a view indistinguishable from the real-world execution if the inverse sampling process works correctly. This is the case because when $\faocc$ samples $b=1$ and $y$, the simulator samples the vector of tallies from $E_{1,y}$ which guarantees the same output $y$ for all the honest parties. On the other hand, when $b=0$ is sampled by $\faocc$, $\sim$ is allowed to influence the output of $\faocc$. The simulator, who observes the simulated execution, can seize this opportunity and exert influence to ensure that the output of $\faocc$ matches the output of the simulated execution. As mentioned above, the inverse sampling process relies on the fact that $p \le p_1$. By a direct utilization of Lemma~\ref{lem:repetitions}, we can show that there is a constant lower bound for $p_1$ for any set $A$ with size $|A|\ge \frac{2}{3}n$. Therefore, there exists some constant $p$ such that $p \le p_1$, which makes the simulation go through correctly.

\paragraph{Termination.} To ensure the termination of the protocol and maintain the indistinguishability of the real and ideal worlds, we need to demonstrate that the protocol will eventually terminate.

For each honest party $P_j$, the secret-sharing of all of its votes $x_{j,1},\ldots,x_{j,n}$ will eventually succeed for any honest party $P_i$. It is certainly the case then that every honest party $P_i$ will eventually pass the threshold $|C_i|\ge t+1$ and A-Cast $C_i^{(t+1)}$.

Whenever an honest party $P_h$ adds $P_j$ to $C_j$, any other honest party $P_i$ will eventually add $P_j$ to $C_i$. This is because $\favss$ guarantees that if the sharing of a secret succeeds for one honest party, then it will eventually succeed for all honest parties. Since every honest party $P_h$ eventually A-Casts $C_h^{(t+1)}$, every honest party $P_i$ will eventually receive at least $n-t$ A-Casts $C_j'$ satisfying $|C_j'|=t+1$ and $C_j'\subseteq C_i$. Thus, every honest party $P_i$ will eventually pass the threshold $|G_i|\ge n-t$ and A-Cast $G_i^{(n-t)}$.

Whenever an honest party $P_h$ adds $P_j$ to $G_h$, any other honest party $P_i$ will eventually add $P_j$ to $G_i$. This is because $P_h$ received some $C_j'$ from $P_j$'s A-Cast, where $|C_j'|=t+1$ and $C_j'\subseteq C_h$, so eventually $P_i$ will receive the same $C_j'$, and following the analysis above it will eventually hold that $C_j'\subseteq C_i$ as well. Since every honest party $P_h$ eventually A-Casts $G_h^{(n-t)}$, every honest party $P_i$ will eventually receive at least $n-t$ A-Casts $G_j'$ satisfying $|G_j'|=n-t$ and $G_j'\subseteq G_i$. Thus, every honest party $P_i$ will eventually pass the threshold $|R_i|\ge n-t$ and A-Cast $R_i^{(n-t)}$.

Whenever an honest party $P_h$ adds $P_j$ to $R_h$, any other honest party $P_i$ will eventually add $P_j$ to $R_i$. This is because $P_h$ received some $G_j'$ from $P_j$'s A-Cast, where $|G_j'|=n-t$ and $G_j'\subseteq G_h$, so eventually $P_i$ will receive the same $G_j'$, and following the analysis above it will eventually hold that $G_j'\subseteq G_i$ as well. Since every honest party $P_h$ eventually A-Casts $R_h^{(n-t)}$, every honest party $P_i$ will eventually receive at least $n-t$ A-Casts $R_j'$ satisfying $|R_j'|=n-t$ and $R_j'\subseteq R_i$. Thus, every honest party $P_i$ will eventually pass the threshold $|S_i|\ge n-t$ and record $Z_i$.

For each honest party $P_i$, every $P_j\in Z_i$ will eventually be added to $G_h$ by any other honest party $P_h$. Thus, all honest parties will eventually start the reconstruction of all the secrets attached to every $P_j\in Z_i$, and $P_i$ will eventually be able to compute each tally $v_j$ that it needs to determine its output $z_i$. In actuality, this reasoning does not take into account the termination of parties; this is resolved via the termination procedure at the end of the protocol. An honest party $P_i$ who is able to compute $z_i$ on its own does not terminate immediately, but rather A-Casts $z_i$ and continues to participate in the protocol (in particular, in the reconstruction of secrets). Eventually $P_i$ will receive at least $n-t$ A-Casts $z_j'$, at which point it can safely terminate with the knowledge than any other honest party $P_h$ will eventually receive those same A-Casts (even if $P_h$ is unable to compute $z_h$). By taking the mode of these $n-t$ values, the majority of which must have been suggested by honest parties, it is guaranteed that all honest parties will output the same value in the case of a successful coin toss.

In summary, the protocol $\paocc$ ensures termination by guaranteeing that each honest party eventually satisfies the required thresholds and proceeds to the necessary phases of $\favss$. This guarantees eventual output generation without any deadlocks, preserving the perfect indistinguishability of the real and ideal worlds.

We now claim that the overall round complexity of $\paocc$ is constant. The protocol consists of a constant number of steps, each involving constant-round operations. Note that by ensuring that the underlying building blocks $\facast$ and $\favss$ have constant-round realizations, we are able to preserve the overall constant round complexity of protocol $\paocc$ even after replacing $\facast$ and $\favss$ with actual protocols realizing them.

We can conclude that the executions of $\paocc^{V}$ with $\adv$ and $\faocc^{V,p}$ with $\sim$ are perfectly indistinguishable to any environment $\env$. This ensures that $\paocc^{V}$ UC-realizes $\faocc^{V,p}$ with perfect security.
\end{proof}
\section{Concurrent A-BA in Expected-Constant Rounds}
\label{sec:concurrent}

Concurrent A-BA plays a significant role in existing asynchronous MPC protocols, including those described in \cite{ben-or1993asynchronous,ben-or1994asynchronous,hirt2005cryptographic,beerliova-trubiniova2007simple,hirt2008asynchronous,cohen2016asynchronous,coretti2016constant,BZL20,LLMMT20}. Its primary purpose is to address the lack of coordination among the parties, which is amplified in the asynchronous setting. Specifically, it enables parties to reach agreement on a common subset of input providers (the ``core set'' \cite{ben-or1993asynchronous}). This capability is essential in various applications within asynchronous networks, where the different delays experienced by each party can lead to discrepancies in their respective views.

In this section, we dive into the important problem of achieving concurrent A-BA in an expected-constant number of rounds. As discussed in the Introduction, Ben-Or and El-Yaniv~\cite{ben-or2003resilient} highlighted the potential issue of running multiple executions of a probabilistic-termination protocol in parallel, which could lead to an increase in the expected number of rounds required for \emph{all} executions to terminate. The concurrent A-BA protocol proposed in~\cite{ben-or2003resilient} relies on A-OLE and multi-valued A-BA, which can be instantiated using our A-OCC protocol from Section~\ref{sec:occ} and the extended A-BA protocol from~\cite{canetti1993fast,mostefaoui2017signature}, respectively. However, during our analysis, we discovered certain issues in their analysis that cast doubt on the expected-constant round complexity of one of their main building blocks and, consequently, their concurrent A-BA protocol. For a more comprehensive presentation of these issues, see Appendix~\ref{app:be}. It is unclear how to address these issues without modifying the protocol itself.

To rectify these concerns, we modify the underlying message distribution mechanism and incorporate an additional layer of message validation. These changes not only resolve the identified issues, but also significantly simplify the protocol. It is worth emphasizing that our revised concurrent A-BA protocol achieves a level of simplicity that is comparable to the synchronous version proposed in~\cite{ben-or2003resilient}. This accomplishment is significant because when designing an asynchronous counterpart to a synchronous protocol, achieving a level of simplicity on par with the synchronous version is often considered the ideal outcome. In the following, we present an ideal functionality for concurrent A-BA, describe our protocol, an ideal functionality for concurrent A-BA, and its required building blocks, and provide a security proof.

\subsection{Concurrent A-BA Ideal Functionality}
\label{sec:concurrent_functionality}

Concurrent A-BA, as the name suggests, refers to a primitive that enables parties to solve $N$ instances of A-BA concurrently. We are primarily interested in the case $N=n$, corresponding to the emulation of $n$ ideal A-BA primitives, commonly used in asynchronous MPC protocols to form the core set and overcome low message dispersion. However, in our study, we consider a more general version that allows for a broader range of values for $N$. In this setting, each party $P_i$ initiates the concurrent A-BA by providing $N$ values, namely $v_{i,1}, \ldots, v_{i,N}$. Subsequently, all parties receive the same set of $N$ output values, denoted as $y_1, \ldots, y_N$. Each individual output value $y_j$ is computed based on the input values $v_{1,j}, \ldots, v_{n,j}$, following the prescribed procedure outlined in the standard A-BA primitive. Specifically, if $n-2t$ input values are identical, that common value (which we require to be unique) is selected as the output; otherwise, the output is determined by the adversary. We capture the task of concurrent A-BA using the ideal functionality $\fcaba$, shown in Figure~\ref{fig:func-caba}. Since we are able to achieve non-intrusion validity (i.e., for each instance, the corresponding output must be either $\bot$ or the corresponding input of an honest party), we present an \emph{intrusion-tolerant} concurrent A-BA functionality.

\begin{figure}[ht!]
\centering
\funcbox{$\fcaba^{V,N}$}{
    The functionality is parameterized by a set $V$ of values and a positive integer $N$ representing the number of concurrent instances, and proceeds as follows. At the first activation, verify that $\sid=(\P,\sid')$, where $\P$ is a player set of size $n$. For each $P_i\in\P$, initialize $v_{i,1},\ldots,v_{i,N}$ to a default value $\bot$, $\participated_i\assign 0$, and delay values $D^{\inputlbl}_i=D^{\outputlbl}_i\assign 1$. Also initialize $y_1,\ldots,y_N$ to $\bot'$ and $a_1,\ldots,a_N$ to $\bot$, and $t\assign \lceil \frac{n}{3} \rceil - 1$.
    \begin{tiret}
        \item Upon receiving $(\delaycmd,\sid,P_i,\type,D)$ from the adversary for $P_i\in\P$, $\type\in\{\inputlbl,\outputlbl\}$, and $D\in\ZZ$ represented in unary notation, update $D^{\type}_i\assign \max(1,D^{\type}_i+D)$ and send $(\delaysetcmd,\sid)$ to the adversary.
        \item Upon receiving $(\inputcmd,\sid,(v_1,\ldots,v_N))$ from $P_i\in \P$ (or the adversary on behalf of corrupted $P_i$), run the Input Submission Procedure and send $(\leakagecmd,\sid,P_i,(v_1,\ldots,v_N))$ to the adversary.
        \item Upon receiving $(\replacecmd, \sid,(v_1,\ldots,v_N))$ from the adversary, record $(a_1,\ldots,a_N)\assign (v_1,\ldots,v_N)$.
        \item Upon receiving $(\fetchcmd, \sid)$ from $P_i\in\P$ (or the adversary on behalf of corrupted $P_i$), run the Input Submission and Output Release Procedures, and send $(\fetchedcmd,\sid,P_i)$ to the adversary and any messages set by the Output Release Procedure to $P_i$.
    \end{tiret}
    \smallskip
    {\bf Input Submission Procedure:}
    If $\participated_i=0$ and $(\inputcmd,\sid,(v_1,\ldots,v_N))$ was previously received from $P_i$, then do:
    \begin{enumerate}
        \item Update $D^{\inputlbl}_i\assign D^{\inputlbl}_i-1$.
        \item If $D^{\inputlbl}_i=0$, then set $\participated_i\assign 1$ and record $(v_{i,1},\ldots,v_{i,N})\assign (v_1,\ldots,v_N)$.
    \end{enumerate}
    {\bf Output Release Procedure:}
    If $\sum_{j=1}^{n}\participated_j\geq n-t$ then do:
    \begin{enumerate}
        \item Update $D^{\outputlbl}_i\assign D^{\outputlbl}_i-1$.
        \item If $D^{\outputlbl}_i=0$, then do the following. For each $j\in [N]$ do: if $y_j= \bot'$, then set $y_j\assign x$ if there exists a unique value $x\in V$ such that $x=v_{k,j}$ for at least $n-2t$ inputs $v_{k,j}$; else if $a_j=\bot$ or $a_j\in V$ was provided by at least one honest party, then set $y_j\assign a_j$; else set $y_j\assign \bot$. Additionally, set $(\outputcmd,\sid,(y_1,\ldots,y_N))$ to be sent to $P_i$.
    \end{enumerate}
    \smallskip
}
\caption{The (intrusion-tolerant) concurrent A-BA functionality.}
\label{fig:func-caba}
\end{figure}

\subsection{The Simplified Concurrent A-BA Protocol}
\label{sec:concurrent_protocol}

\paragraph{Building blocks.}

Our concurrent A-BA protocol relies on A-Cast as the fundamental communication primitive due to its enhanced guarantees compared to basic message distribution mechanisms. The ideal functionality $\facast$, which models the A-Cast primitive, was described in Section~\ref{sec:functionalities}.

As another crucial building block, our protocol incorporates asynchronous oblivious leader election (A-OLE) as a coordination mechanism among the parties. A-OLE enables parties to randomly elect a leader from among themselves. The term ``oblivious'' indicates that parties are unaware of whether or not agreement on a random leader has been achieved. In our concurrent A-BA protocol, similar to the approach described in~\cite{ben-or2003resilient}, there comes a point where all parties suggest outputs, and A-OLE assists them in reaching agreement on the output by adopting the suggestion of the elected leader. To capture the task of A-OLE, we parameterize the A-OCC functionality $\faocc$, given in Section~\ref{sec:OCC_functionality}, by a domain with size equal to the number of parties. This yields an ideal functionality for A-OLE, denoted as $\faole$, which is defined as $\faocc^{[n],p}$ for appropriate $p\in\Theta(1)$. Recall from Theorem~\ref{thm:occ} that $\faole$ can be realized using protocol $\paocc^{[n]}$.

Our concurrent A-BA protocol leverages both binary and multi-valued A-BA. Binary A-BA is employed to achieve agreement on a critical decision within the protocol, namely whether to continue a particular iteration. On the other hand, the use of multi-valued A-BA addresses the inherent obliviousness of the A-OLE primitive by providing a means for parties to reach an agreement on the output, and it doubly serves as an agreement on termination. The ideal functionality for A-BA was described in Section~\ref{sec:functionalities}.

As in~\cite{ben-or2003resilient}, another essential component in our concurrent A-BA protocol is {\em truncated} executions of an A-BA protocol limited to a predefined number of iterations, consequently implying a fixed number of rounds. In the spirit of \cite{cohen2019probabilistic}, we model those executions with the ideal functionality $\ftaba$, defined in Figure~\ref{fig:func-taba}. $\ftaba$ is parameterized with $V$, $p$, and $\itr$, where $V$ denotes the domain, $p$ represents the termination probability in each iteration, and $\itr$ indicates the maximum number of iterations in the execution.

This ideal functionality also encapsulates a ``1-shift'' property for termination, meaning that all honest parties produce the output within two consecutive iterations. We note that the adversary has the discretion to determine which parties discover the output first. Unlike a traditional A-BA functionality that outputs a single value, $\ftaba$ produces a vector of values that includes the output after each iteration of the execution. Modeling the 1-shift property and providing outputs for all iterations is crucial since our concurrent A-BA protocol relies on those properties of truncated executions of A-BA.

\begin{figure}[htbp!]
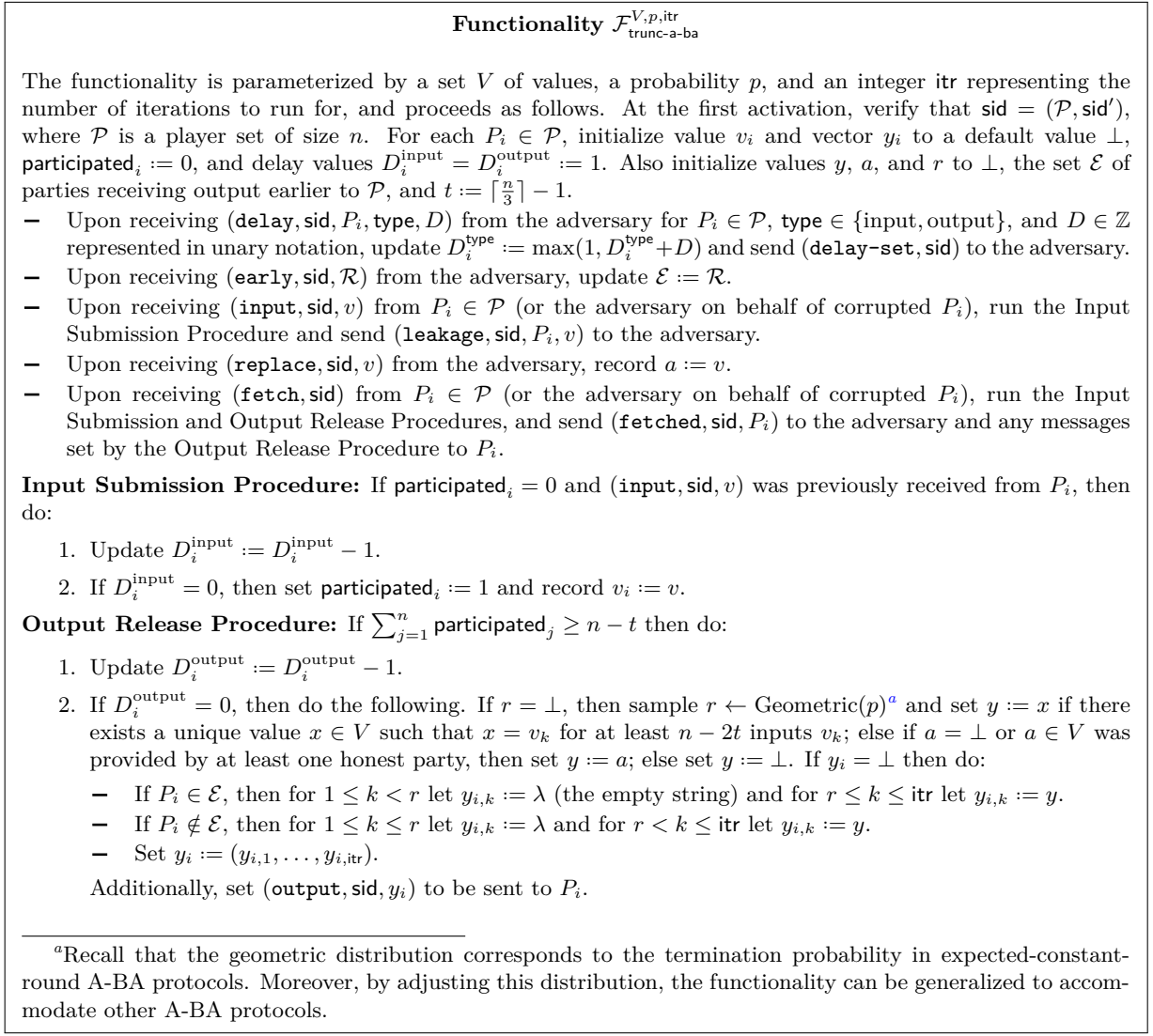

\centering
\funcbox{$\ftaba^{V,p,\itr}$}{
    The functionality is parameterized by a set $V$ of values, a probability $p$, and an integer $\itr$ representing the number of iterations to run for, and proceeds as follows. At the first activation, verify that $\sid=(\P,\sid')$, where $\P$ is a player set of size $n$. For each $P_i\in\P$, initialize value $v_i$ and vector $y_i$ to a default value $\bot$, $\participated_i\assign 0$, and delay values $D^{\inputlbl}_i=D^{\outputlbl}_i\assign 1$. Also initialize values $y$, $a$, and $r$ to $\bot$, the set $\calE$ of parties receiving output earlier to $\P$, and $t\assign \lceil \frac{n}{3} \rceil - 1$.
    \begin{tiret}
        \item Upon receiving $(\delaycmd,\sid,P_i,\type,D)$ from the adversary for $P_i\in\P$, $\type\in\{\inputlbl,\outputlbl\}$, and $D\in\ZZ$ represented in unary notation, update $D^{\type}_i\assign\max(1,D^{\type}_i+D)$ and send $(\delaysetcmd,\sid)$ to the adversary.
        \item Upon receiving $(\earlycmd,\sid,\mathcal{R})$ from the adversary, update $\calE \assign \mathcal{R}$.
        \item Upon receiving $(\inputcmd,\sid,v)$ from $P_i\in \P$ (or the adversary on behalf of corrupted $P_i$), run the Input Submission Procedure and send $(\leakagecmd,\sid,P_i,v)$ to the adversary.
        \item Upon receiving $(\replacecmd, \sid,v)$ from the adversary, record $a\assign v$.
        \item Upon receiving $(\fetchcmd, \sid)$ from $P_i\in\P$ (or the adversary on behalf of corrupted $P_i$), run the Input Submission and Output Release Procedures, and send $(\fetchedcmd,\sid,P_i)$ to the adversary and any messages set by the Output Release Procedure to $P_i$.
    \end{tiret}
    \smallskip
    {\bf Input Submission Procedure:}
    If $\participated_i=0$ and $(\inputcmd,\sid,v)$ was previously received from $P_i$, then do:
    \begin{enumerate}
        \item Update $D^{\inputlbl}_i\assign D^{\inputlbl}_i-1$.
        \item If $D^{\inputlbl}_i=0$, then set $\participated_i\assign 1$ and record $v_i\assign v$.
    \end{enumerate}
    {\bf Output Release Procedure:}
    If $\sum_{j=1}^{n}\participated_j\geq n-t$ then do:
    \begin{enumerate}
        \item Update $D^{\outputlbl}_i\assign D^{\outputlbl}_i-1$.
        \item If $D^{\outputlbl}_i=0$, then do the following. If $r=\bot$, then sample $r\leftarrow \mathrm{Geometric}(p)$\footnote{Recall that the geometric distribution corresponds to the termination probability in expected-constant-round A-BA protocols. Moreover, by adjusting this distribution, the functionality can be generalized to accommodate other A-BA protocols.} and set $y\assign x$ if there exists a unique value $x\in V$ such that $x=v_k$ for at least $n-2t$ inputs $v_k$; else if $a=\bot$ or $a\in V$ was provided by at least one honest party, then set $y\assign a$; else set $y\assign \bot$. If $y_i=\bot$ then do:
        \begin{tiret}
            \item If $P_i\in\calE$, then for $1\le k<r$ let $y_{i,k}\assign \lambda$ (the empty string) and for $r\leq k\leq \itr$ let $y_{i,k}\assign y$.
            \item If $P_i\notin\calE$, then for $1\le k\leq r$ let $y_{i,k}\assign \lambda$ and for $r< k\leq \itr$ let $y_{i,k}\assign y$.
            \item Set $y_i\assign (y_{i,1},\ldots,y_{i,\itr})$.
        \end{tiret}
        Additionally, set $(\outputcmd,\sid,y_i)$ to be sent to $P_i$.
    \end{enumerate}
    \smallskip
}
\caption{The (intrusion-tolerant) truncated A-BA functionality with the 1-shift property.}
\label{fig:func-taba}
\end{figure}

It is worth mentioning that $\ftaba^{V,p,\itr}$ can be implemented by executing any intrusion-tolerant A-BA protocol with the 1-shift property and a termination probability of $p$ in each iteration, precisely for $\itr$ iterations, and concatenating the output of all iterations to get the final output (with $\lambda$ representing iterations without output). Canetti and Rabin's binary A-BA protocol~\cite{canetti1993fast} possesses the desired properties of intrusion tolerance and terminating with a constant probability in each iteration. However, using Bracha's technique for termination~\cite{bracha1987asynchronous} in Canetti and Rabin's binary A-BA protocol does not admit the 1-shift property. The main reason is that parties may take the shortcut and use Bracha termination messages to generate the output in their very early iterations. Fortunately, Bracha's termination procedure is unnecessary in the truncated execution of Canetti and Rabin's binary A-BA protocol. This is primarily due to the fact that all parties will naturally terminate after a fixed number of iterations. With this adjustment, Canetti and Rabin's binary A-BA protocol also successfully attains the 1-shift property. Moreover, Most\'{e}faoui and Raynal's multi-valued A-BA protocol~\cite{mostefaoui2017signature} offers intrusion tolerance, the 1-shift property, and terminating with a constant probability in each iteration if the underlying binary A-BA used in their construction also exhibits these characteristics. Thus, we can formulate the following proposition about realizing $\ftaba^{V,p,\itr}$.

\begin{proposition}
\label{thm:taba}
    For some constant probability $p$, any domain $V$, and any integer $\itr$, $\ftaba^{V,p,\itr}$ can be UC-realized with statistical security in the $\fasmt$-hybrid model, in constant rounds and in the presence of an adaptive and malicious $t$-adversary, provided $t<n/3$.
\end{proposition}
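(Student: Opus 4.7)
The plan is to construct the protocol by suitably modifying Canetti and Rabin's binary A-BA~\cite{canetti1993fast} (which itself uses $\paocc$ from Section~\ref{sec:occ} as an OCC subroutine) and, for non-binary $V$, composing the result with Most\'{e}faoui and Raynal's constant-round binary-to-multi-valued reduction~\cite{mostefaoui2017signature}. The two modifications to Canetti--Rabin are: (i) drop Bracha's termination layer entirely and instead have every party execute exactly $\itr$ iterations, appending $\lambda$ to its local output vector in every iteration before it would naturally decide and the decided value in every iteration after; and (ii) invoke the resulting truncated A-BA inside Most\'{e}faoui--Raynal, which preserves the truncation/1-shift structure because its outer reduction is constant-round and deterministic on top of a single A-BA call. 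All point-to-point communication is realized by instances of $\fasmt$.

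First I would argue the three structural properties used by $\ftaba^{V,p,\itr}$. For \textbf{intrusion tolerance}, in every iteration of Canetti--Rabin a party decides only after seeing $\ge n-2t$ matching pre-votes, so the decision value must have been held (and hence originated from an honest input, by induction on iterations) by at least one honest party; Most\'{e}faoui--Raynal preserves this, yielding an output of either an honest input or $\bot$. For the \textbf{1-shift property}, the key observation is that once \emph{one} honest party decides on $v$ in iteration $r$, the supermajority condition guarantees that every other honest party entering iteration $r+1$ holds $v$ as its current value, and therefore decides $v$ in iteration $r+1$ irrespective of the outcome of the coin at iteration $r+1$. For \textbf{geometric termination}, each iteration succeeds independently with at least the constant agreement probability $p^\ast$ supplied by $\faole/\faocc$, so the first-decision iteration is stochastically dominated by a geometric variable; we pick the parameter $p$ in $\ftaba$ to equal this $p^\ast$ so that the real and ideal distributions line up.

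Next I would build the simulator $\sim$ in the standard way: $\sim$ internally runs the hybrid protocol, simulating $\fasmt$ and the sub-instances of $\paocc$ (equivalently $\faocc$) and any inner A-Cast/A-VSS hybrids, while forwarding inputs and outputs to/from $\ftaba$. The input-delay bookkeeping of $\ftaba$ is driven by when the adversary lets an honest party's first protocol message reach the network, and the output-delay bookkeeping by when the simulated party would first be ready to output; the per-party counters can be maintained exactly as in the proofs of Propositions~\ref{thm:acast} and~\ref{thm:aba}. Because $\sim$ fully controls the internal $\faocc$ instances, it can choose the fairness bit of each iteration's coin, and by setting these choices it can produce \emph{any} realization of a $\mathrm{Geometric}(p^\ast)$ first-success iteration $r^\ast$; this lets $\sim$ synchronize the internal first-decision iteration with the $r$ sampled by $\ftaba$. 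To steer the decided value to match the value $\ftaba$ will later release, $\sim$ uses the fact that on a successful coin $\sim$ learns (and in fact sets) the common coin value before any honest party decides in that iteration, and can thus drive the adversary's replies so that the supermajority settles on the intended $y$. The set $\calE$ of parties receiving output early is populated based on which honest parties achieve the decision condition already in iteration $r$ of the internal simulation, mirroring the definition inside $\ftaba$.

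The main obstacle will be two technical points. First, coupling the simulation's first-decision iteration with the geometric sample produced by $\ftaba$ requires care: $\sim$ must commit to fairness bits \emph{on the fly} as the adversary schedules the coin instances, but because the simulator may freely choose the bits (and, on a successful coin, the value) without ever being caught by $\env$, an inductive argument over iterations shows that the joint distribution of decision iteration and value is identical in both worlds. Second, because $\paocc$ is realized with statistical (not perfect) security and Canetti--Rabin's A-VSS itself carries a negligible error, the overall emulation is only statistical; however, a standard hybrid argument over the $O(\itr)$ subroutine invocations, together with $\itr = \mathrm{poly}(\kappa)$, keeps the simulation error negligible. Everything else---handling adversarial input replacement, enforcing non-intrusion validity in the output stage, and simulating adaptive corruption by exposing the consistent transcript of the corrupted party---follows the template already developed for $\faba$ in Proposition~\ref{thm:aba}, and the round complexity is a constant number of iterations of a constant-round subroutine, hence constant overall.
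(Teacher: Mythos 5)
Your proposal matches the paper's approach essentially exactly: the paper also realizes $\ftaba^{V,p,\itr}$ by truncating Canetti and Rabin's binary A-BA at $\itr$ iterations, explicitly dropping Bracha's termination procedure (precisely because it breaks the 1-shift property, and it is unnecessary once the execution is truncated), recording $\lambda$ for pre-decision iterations, and then applying Most\'{e}faoui and Raynal's constant-round reduction to obtain arbitrary domains $V$ while preserving intrusion tolerance, the 1-shift property, and constant per-iteration termination probability. Your additional simulator sketch goes beyond the level of detail the paper provides for this proposition, but the construction and the structural properties you argue are the same ones the paper relies on.
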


The above statement guarantees statistical security since the A-BA protocols we consider rely on the A-VSS primitive, which can only be realized with statistical security when $t < n/3$. In fact, by working in the $\favss$-hybrid model, we can achieve perfectly secure $\ftaba$. We remark that the intrusion tolerance of $\ftaba$ is not a requirement in our concurrent A-BA protocol; however, employing a non-intrusion-tolerant version of truncated A-BA will naturally lead to a non-intrusion-tolerant concurrent A-BA protocol.

\paragraph{The new concurrent A-BA protocol.}

Our protocol builds on the core ideas presented in~\cite{ben-or2003resilient}. In addition to instantiating the missing OLE building block using our OCC protocol from Section~\ref{sec:occ}, we address the issue in the analysis by redesigning the message distribution phase. Our revised message distribution mechanism not only resolves the issue in the proof but also provides stronger guarantees, which in turn simplifies the final protocol design. In fact, apart from the message distribution phase, the overall structure of our protocol closely resembles the synchronous version of the protocol described in~\cite{ben-or2003resilient}, as illustrated in Figure~\ref{fig:graphic}.

\begin{figure}[ht!]
\centering
\includegraphics{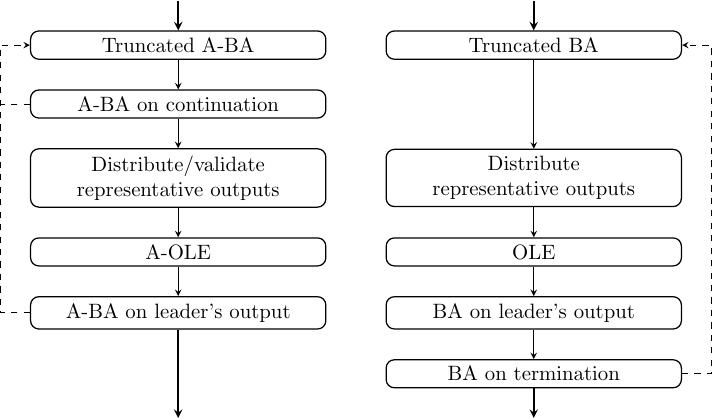}
\caption{Side-by-side comparison of our simplified concurrent A-BA protocol (left) and Ben-Or and El-Yaniv's concurrent (synchronous) BA protocol (right).}
\label{fig:graphic}
\end{figure}

Before diving into the high-level description of our protocol, we highlight the choices we made in the message distribution mechanism and discuss some alternative approaches that fail to meet our requirements. In the eventual-delivery model, at least $n-t$ parties will receive each other's messages if they wait for a sufficient duration, as messages from honest parties will eventually be delivered to one another, but determining the exact waiting time required is not straightforward. One possible approach is to instruct parties to A-Cast the identities of the parties from which they have received messages. After constructing a graph with parties as vertices and adding edges between parties that have reported message receipts from each other, we can look for a clique of size $n-t$; however, finding a maximum clique is known to be NP-complete and also difficult to approximate~\cite{feige1991approximating}.

To overcome this challenge, one possible approach is to investigate alternative structures that offer weaker guarantees regarding message dispersion but can be efficiently identified~\cite{ben-or1993asynchronous,ben-or1994asynchronous,canetti1996studies,patra2011error}. However, it is important to note that this approach does not guarantee the precise message dispersion required for our specific application. In our analysis, it is crucial that \emph{all} honest parties receive messages from a linear fraction of other parties. Even finding a clique of size $n-t$ does not guarantee this level of message dispersion, rendering this approach unsuitable for our protocol. Thus, we adopt another approach that has been used in prior works~\cite{feldman1989asynchronous,canetti1993fast}, and extend it by incorporating a precondition check before the process and introducing a validation layer during the execution. These additions enhance the guarantees, making them more suitable and effective for our specific purpose. We proceed to explain our protocol.

Similar to the (synchronous) protocol described in~\cite{ben-or2003resilient}, in our concurrent A-BA protocol, each party initiates for every A-BA instance a batch of $m$ executions of the A-BA protocol (over the same inputs) for a fixed number of iterations, denoted as $\itr$. If the A-BA protocol has a termination probability of at least a constant value $p$ in each iteration, which is the case for most existing A-BA protocols, suitable values of $m$ and $\itr$ can be determined so that each party obtains at least one output value for each batch. Each party then selects an output for each instance and forms a suggestion for the final output. The next paragraph, which explains the mechanism for distributing suggestions among parties, is our main modification to the protocol. Then, for the remaining part, we can use a similar structure as the synchronous version of the protocol.

Firstly, parties initiate a binary A-BA protocol to determine a specific condition that allows for choosing and validating suggested outputs later in the execution. Based on the outcome of this binary A-BA, parties decide whether to continue or start over. In the case of continuation, parties perform A-Cast operations to distribute their suggestions and wait to receive suggestions from other parties. Each party only accepts an A-Cast message containing a suggested output if the value is consistent with the outputs obtained from its own truncated A-BA executions. This validation step is crucial as it ensures that even corrupted parties provide acceptable (correct) suggestions. The validation is based on the 1-shift property of A-BA that ensures all honest parties terminate within two consecutive iterations. Parties wait to accept A-Casts of suggested outputs from at least $n-t$ parties and then A-Cast the set of all these $n-t$ suggestions along with the identities of the corresponding senders.\footnote{In fact, it suffices to A-Cast the identities alone. We primarily use this observation to simplify the description of the protocol, though it also reduces the communication complexity.} They continue accepting A-Casts of suggestions and sets until they receive at least $n-t$ sets that are fully contained within their accepted suggestions. At this point, a sufficient number of messages have been exchanged, and using a counting argument similar to the one in~\cite{feldman1989asynchronous,canetti1993fast}, it can be deduced that at least $n-2t>n/3$ parties have their suggested outputs received by all honest parties.

After the message distribution phase described above, our protocol proceeds similarly to the synchronous protocol presented in~\cite{ben-or2003resilient}. Specifically, parties execute OLE to elect a random leader to adopt its suggested output. Subsequently, a multi-valued A-BA is performed on the adopted output to address the oblivious nature of OLE. Depending on whether agreement on a non-default output is reached, this results in the termination or restarting of the protocol. The intrusion-tolerance property of multi-valued A-BA is crucial to make sure that the common output is not provided by a corrupted party alone. It is worth noting that the favorable scenario occurs when the leader is among the $n-2t$ parties whose suggested outputs have been accepted by all honest parties. If the leader is elected randomly, this event happens with a probability of $1/3$. In this case, all parties adopt the same output, resulting in termination.

It is worth noting that the set of $n-2t$ parties whose suggested outputs have been accepted by all honest parties may only contain a single honest party. This single honest party can only be elected with a probability of $O(1/n)$ by the OLE. However, due to the validation step in the message-distribution phase, there is no longer a need to ensure that an honest leader is elected, as the suggested values from corrupted parties are considered valid outputs. Refer to Figure~\ref{fig:prot-caba} for a formal description of protocol $\pcaba$.

\begin{figure}[ht!]
\centering
\protbox{$\pcaba^{V,N,m,p,\itr}$}{
    The protocol is parameterized by a set $V$ of possible inputs (for each instance), positive integers $N$ and $m$ representing respectively the number of instances and the number of batched executions to run per instance, a termination probability $p$, and the number $\itr$ of iterations after which to truncate the executions. At the first activation, verify that $\sid=(\mathcal{P},\sid')$ for a player set $\mathcal{P}$ of size $n$. Let $t\assign\lceil\frac{n}{3}\rceil-1$. Party $P_i\in\mathcal{P}$ proceeds as follows. Initialize sets $Q_{i,1}$, $Q_{i,2}$, and $W_i$ to $\emptyset$, flag $\finished$ to 0, and counter $\varphi$ to $1$. Also initialize $S_{i,j}^r\assign \emptyset$, $b_{j,k}\assign0$, $\Vec{C}_h'\assign\bot$, and $y_{i,j}\assign\bot$ for all $h\in[n]$, $j\in[N]$, $k\in[m]$, and $\itr\le r\le\itr+3$.
    \begin{tiret}
        \item Upon receiving input $(\inputcmd,\sid,(v_{i,1},\ldots,v_{i,N}))$ from the environment where $v_{i,1},\ldots,v_{i,N}\in V$, do the following. For each $j\in[N]$ and $k\in[m]$, send $(\inputcmd,\sid_{\varphi}^{j,k},v_{i,j})$ to an instance of $\ftaba^{V,p,\itr+3}$ with SID $\sid_{\varphi}^{j,k}\assign (\sid,\varphi,j,k)$.
        \item Upon receiving input $(\fetchcmd,\sid)$ from the environment, if $\finished=1$ then output $(\outputcmd,\sid,(y_{i,1},\ldots,y_{i,N}))$ to the environment; else do:
        \begin{enumerate}
            \item For each $j\in[N]$ and $k\in[m]$, fetch the output from the instance of $\ftaba^{V,p,\itr+3}$ with SID $\sid_{\varphi}^{j,k}$. Upon receiving back $(\outputcmd,\sid_{\varphi}^{j,k},(z_{i,1},\ldots,z_{i,\itr+3}))$, update $S_{i,j}^r\assign S_{i,j}^r\cup\{z_{i,r}\}\setminus\{\lambda\}$ for $\itr\le r\le\itr+3$, and also set $b_{j,k}\assign 1$.
            \item Wait until $b_{j,k}=1$ for all $j\in[N]$ and $k\in[m]$. Then, let $\continue_i\assign 1$ if for all $j\in[N]$ it holds that $S_{i,j}^\itr\not=\emptyset$, and $\continue_i\assign 0$ otherwise; send $(\inputcmd,\sid_{\varphi}^{\contlbl},\continue_i)$ to an instance of $\faba^{\{0,1\}}$ with SID $\sid_{\varphi}^{\contlbl}\assign (\sid,\varphi,\texttt{cont})$.
            \item Fetch the output from the instance of $\faba^{\{0,1\}}$ with SID $\sid_{\varphi}^{\contlbl}$. Upon receiving back $(\outputcmd,\sid_{\varphi}^{\contlbl},\continue)$, do the following. If $\continue=1$, wait until for all $j\in[N]$ it holds that $S_{i,j}^{\itr+1}\not=\emptyset$; then for each $j\in[N]$ choose $c_{i,j}\in S_{i,j}^{\itr+1}$, and send $(\sendcmd,\sid_{i,\varphi}^{\vectorlbl},(c_{i,1},\ldots,c_{i,N}))$ to an instance of $\facast$ with SID $\sid_{i,\varphi}^{\vectorlbl}\assign (\mathcal{P},P_i,\sid',\varphi,\texttt{vector})$. Otherwise, start over (i.e., increment $\varphi$, re-initialize all other local variables, and run truncated A-BA $mN$ times again).
            \item For each $h\in[n]$, fetch the output from the instance of $\facast$ with SID $\sid_{h,\varphi}^{\vectorlbl}$. Upon receiving back $(\outputcmd,\sid_{h,\varphi}^{\vectorlbl},(c'_{h,1},\ldots,c'_{h,N}))$, record $\Vec{C}_h'\assign(c'_{h,1},\ldots,c'_{h,N})$ and for each $b\in\{1,2\}$ wait until it holds that $c'_{h,j}\in S_{i,j}^{\itr+1+b}$ for all $j\in[N]$; then update $Q_{i,b}\assign Q_{i,b}\cup\{P_h\}$.
            \item Wait until $|Q_{i,1}|\ge n-t$. Then, send $(\sendcmd,\sid_{i,\varphi}^{\setlbl},Q_{i,1}^{(n-t)})$ to an instance of $\facast$ with SID $\sid_{i,\varphi}^{\setlbl}\assign (\mathcal{P},P_i,\sid',\varphi,\setcmd)$.
            \item For each $h\in[n]$, fetch the output from the instance of $\facast$ with SID $\sid_{h,\varphi}^{\setlbl}$. Upon receiving back $(\outputcmd,\sid_{h,\varphi}^{\setlbl},Q_{h,1}')$ where $|Q_{h,1}'|=n-t$, wait until $Q_{h,1}'\subseteq Q_{i,2}$; then update $W_i\assign W_i\cup \{h\}$.
            \item Wait until $|W_i|\geq n-t$. Then, send $(\inputcmd,\sid_{\varphi}^{\electlbl})$ to an instance of $\faole$ with SID $\sid_{\varphi}^{\electlbl}\assign (\sid,\varphi,\texttt{elect})$.
            \item Fetch the output from the instance of $\faole$ with SID $\sid_{\varphi}^{\electlbl}$. Upon receiving back $(\outputcmd,\sid_{\varphi}^{\electlbl},l_i)$, wait until $|W_i|\ge n-t$; then send $(\inputcmd,\sid_{\varphi}^{\vectorlbl},\Vec{C}'_{l_i})$ to an instance of $\faba^{V^N\cup\{\bot\}}$ with SID $\sid_{\varphi}^{\vectorlbl}\assign (\sid,\varphi,\texttt{vector})$ if it holds that $P_{l_i}\in Q_{i,2}$, and send $(\inputcmd,\sid_{\varphi}^{\vectorlbl},\bot)$ otherwise.
            \item Fetch the output from the instance of $\faba^{V^N\cup\{\bot\}}$ with SID $\sid_{\varphi}^{\vectorlbl}$. Upon receiving back $(\outputcmd,\sid_{\varphi}^{\vectorlbl},\Vec{Y})$, if $\Vec{Y}\not=\bot$ then record $(y_{i,1},\ldots,y_{i,N})\assign \Vec{Y}$ and set $\finished\assign 1$; else start over.
        \end{enumerate}
    \end{tiret}
    \smallskip
}
\caption{The concurrent A-BA protocol.}
\label{fig:prot-caba}
\end{figure}

\subsection{Security Proof}
\label{sec:concurrent_proof}

We now consider the security of our concurrent A-BA protocol. Before stating the theorem, it is worth noting that the specific parameters of the hybrid model, which combine the different ideal functionalities, are not explicitly specified in the theorem statement. However, they can be determined from the protocol's parameters and are integral to the overall security guarantees of the protocol. Now, let us state the theorem formally:

\begin{theorem}
\label{thm:caba}
    For any domain $V$, integer $N$, constant $0<p<1$, and constant integer $\itr>1$, setting $m\assign\log_{\frac{1}{1-p}} N$, the protocol $\pcaba^{V,N,m,p,\itr}$ UC-realizes $\fcaba^{V,N}$ with statistical security in the $(\facast,\faba,\ftaba,\faole)$-hybrid model, in expected-constant rounds and in the presence of an adaptive and malicious $t$-adversary, provided $t<n/3$.
\end{theorem}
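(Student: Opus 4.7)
The plan is to construct a simulator $\sim$ that internally emulates each of the hybrids $\facast$, $\faba$, $\ftaba$, and $\faole$ alongside honest-party code in a simulated run of $\pcaba$. Since the protocol reveals nothing about honest inputs beyond what $\fcaba$ already leaks on input submission, $\sim$ can run the entire simulation on-line: it extracts corrupted parties' inputs from the messages they send to the emulated $\ftaba$ instances, forwards them to $\fcaba$, and mirrors every $\delaycmd$ on a hybrid as a corresponding $\delaycmd$ on $\fcaba$ so that the implicitly defined core sets coincide in both worlds. Once all honest parties in the simulation agree on a common output vector $\vec{Y}$, $\sim$ submits $(\replacecmd,\sid,\vec{Y})$ to $\fcaba$ and tunes the per-party output delays so that ideal outputs are released to match the simulated ones. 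Whenever $n-2t$ inputs in some coordinate $j$ already fix $\fcaba$'s native choice, the validity analysis below will ensure $\vec{Y}$ agrees with that choice in coordinate $j$, keeping the replacement consistent.

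Next I would analyze the suggestion-distribution phase. Let $E_{\mathsf{batch}}$ denote the event that, for every instance $j\in[N]$ and every honest party $P_i$, at least one of the $m$ truncated A-BAs yields an output by iteration $\itr$. Since each $\ftaba^{V,p,\itr+3}$ samples its termination round from $\mathrm{Geometric}(p)$, the probability that a fixed batch fails is $(1-p)^{m\itr}$; the choice $m=\log_{1/(1-p)} N$ drives this to $1/N^\itr$, and $\itr\ge 2$ yields $\Pr[E_{\mathsf{batch}}]\ge (1-1/N^2)^N\in\Omega(1)$. Conditioned on $E_{\mathsf{batch}}$, every honest party inputs $1$ to the binary $\faba^{\zo}$, so by validity $\continue=1$ and each honest $P_i$ proceeds with a well-defined suggestion $c_{i,j}\in S_{i,j}^{\itr+1}$. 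The $1$-shift property of $\ftaba$ guarantees that any honest-originated $c_{h,j}$ also appears in $S_{i,j}^{\itr+2}$ for every other honest $P_i$; moreover, any suggestion $c'_{h,j}$ that a corrupted $P_h$ manages to have accepted must equal some value already output by an honest party's truncated A-BA, hence appears in every honest view by $S_{i,j}^{\itr+3}$. This is exactly what the two-stage validation ($Q_{i,1}$ against $S^{\itr+2}$ and $Q_{i,2}$ against $S^{\itr+3}$) is designed to exploit.

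I would then apply the counting argument of Feldman and Canetti--Rabin (as reused in the analysis of $\paocc$) to the cross-reference table induced by the A-Casted $Q$-sets, extracting a set $A$ of size at least $n-t>2n/3$ such that every honest party eventually satisfies $A\subseteq Q_{i,2}$ and agrees on $\vec{C}'_h$ for every $P_h\in A$. The call to $\faole$ elects a uniformly random leader with constant probability, so $\Pr[P_l\in A]\ge|A|/n>2/3$ conditioned on a successful election. In that favorable event every honest $P_i$ feeds the same vector $\vec{C}'_l$ into the multi-valued $\faba^{V^N\cup\{\bot\}}$, which outputs $\vec{C}'_l\ne\bot$ by validity, and all honest parties terminate with the same output in this iteration. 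Otherwise, $\faba$'s intrusion-tolerance guarantees an output of $\bot$ (triggering a clean restart after incrementing $\varphi$) or a vector submitted by an honest party (which, by the suggestion-phase validation, is coordinate-wise legitimate). Multiplying the constant-probability events yields a constant success probability per outer iteration; each outer iteration uses a constant number of expected-constant-round subroutines, so the overall expected round complexity is $O(1)$.

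The main obstacle I anticipate is reconciling the two-stage validation with the counting argument so that the extracted set $A$ consists of parties whose $\vec{C}'_h$ are not only received by every honest party but also \emph{pass validation} in every honest view, even when $P_h$ is corrupted. This rests on the precondition A-BA for $\continue$ forcing every honest $S_{i,j}^{\itr+1}$ to be nonempty before any suggestions are A-Casted, which is what lets the $S^{\itr+2}/S^{\itr+3}$ validation eventually succeed on honest-originated suggestions while soundly rejecting fabricated ones. A secondary subtlety, needed for validity of $\fcaba$, is that when $n-2t$ honest parties share the same input value for instance $j$, that value is forced onto every $S_{i,j}^r$ (being the unique legitimate output of every batch for instance $j$) and hence onto the $j$-th coordinate of every accepted $\vec{C}'_h$; verifying this carefully ensures that $\sim$'s replacement value is always consistent with the functionality's native output, closing the simulation.
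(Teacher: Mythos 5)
Your proposal follows essentially the same route as the paper's proof: a simulator that runs the protocol in the clear on the leaked inputs (there is no private honest state to fake), a constant-probability analysis of the batched truncated A-BAs with $m=\log_{\frac{1}{1-p}}N$, validation of suggestions via the 1-shift property of $\ftaba$, the Feldman/Canetti--Rabin counting argument to fix a common set of validated suggestions before $\faole$ is invoked (and hence before the leader's identity leaks), and termination/correctness via the intrusion tolerance of the multi-valued A-BA together with the observation that the output rule of $\ftaba$ coincides with that of $\fcaba$. Two quantitative slips are worth flagging, though neither endangers the theorem. First, the counting argument for $\pcaba$ only yields a common validated set of size $n-2t$, giving probability greater than $1/3$ (not $2/3$) of electing a leader inside it: unlike $\paocc$, the concurrent protocol has only two rounds of set dissemination (the vectors and the $Q$-sets) and lacks the extra ``ready'' round needed for the Gather-style bound $n-t$; since a constant probability is all that is needed, the argument survives. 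Second, because parties outside $\calE$ only see output strictly after iteration $r$, the event that $S_{i,j}^{\itr}\neq\emptyset$ for \emph{all} honest parties requires $r\le\itr-1$, so the per-batch failure probability is $(1-p)^{m(\itr-1)}$ rather than $(1-p)^{m\itr}$; again this still yields a constant success probability for any constant $\itr>1$ (e.g., about $(1-1/N)^N\approx 1/e$ when $\itr=2$), matching the paper's conclusion.
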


\begin{proof}
Let $\adv$ be an adversary in the real world. We construct a simulator $\sim$ in the ideal world, such that no environment $\env$ can distinguish whether it is interacting with $\pcaba^{V,N,m,p,\itr}$ and $\adv$, or with $\fcaba^{V,N}$ and $\sim$. Here, $V$ and $N$ are given parameters of the problem, and $p$ is the parameter for the $\ftaba$ hybrid, which is determined based on the specific protocol used to realize it. The parameters $m$ and $\itr$ need to be appropriately adjusted based on the other parameters to ensure an expected-constant round complexity for $\pcaba^{V,N,m,p,\itr}$. As discussed below, setting $m\assign\log_{\frac{1}{1-p}} N$ for any constant integer $\itr > 1$ suffices.

The simulator internally runs a copy of $\adv$, and plays the roles of all the hybrids and the parties in a simulated execution of the protocol. All inputs from $\env$ are forwarded to $\adv$, and all outputs from $\adv$ are forwarded to $\env$. Moreover, whenever $\adv$ corrupts a party in the simulation, $\sim$ corrupts the same party in the ideal world by interacting with $\fcaba^{V,N}$, and if the corruption was direct (i.e., not via either of the aiding functionalities), then $\sim$ sends $\adv$ the party's state and thereafter follows $\adv$'s instructions for that party. Moreover, $\sim$ manages delays in the ideal functionality based on the delays that $\adv$ sets on the hybrids.

The simulated execution begins when $\sim$ receives messages of the form $(\leakagecmd,\sid,\cdot,\cdot)$ from $\fcaba^{V,N}$. These messages contain the values that party $P_i$ receives as input from the environment. $\sim$ can derive these input values from the leakage message $(\leakagecmd,\sid,P_i,(v_1,\ldots,v_N))$ by setting $v_{i,j}=v_j$ for all $j\in [N]$. The set $\{v_{1,j},v_{2,j},\ldots,v_{n,j}\}$ represents the inputs for the \jth instance of the (concurrent) A-BA that the parties are supposed to solve.

Upon receiving these leakage messages, $\sim$ can initiate sessions of $\ftaba$ with the actual inputs. For each instance $j\in [N]$, $\sim$ invokes $m$ sessions of $\ftaba$ with the inputs $v_{1,j},v_{2,j},\ldots,v_{n,j}$. After this point, $\sim$ continues the simulation by honestly playing the roles of other ideal functionalities and honest parties involved in the protocol. Since there are no private states that $\sim$ needs to guess or fake in order to provide a consistent view, the rest of the simulation proceeds straightforwardly.

Now, this internal execution generates a transcript that is perfectly indistinguishable from a real-world execution. Therefore, it remains to show that the protocol execution indeed results in the same output as $\fcaba^{V,N}$. To demonstrate this, we first establish that with probability $1$ the execution will eventually terminate. Additionally, we show that if at least $n-2t$ parties have the same input value for any instance (and there is only one such value), then everyone will receive that value as the output; otherwise, the output will be either one of the inputs provided by some honest parties for that instance or a special value~$\bot$.

\paragraph{Termination.}
After initializing some variables, the parties collectively initiate $m$ sessions of $\ftaba$ for each of the $N$ instances. Since the maximum number of corrupted parties is $t$ and $\ftaba$ only requires the participation of $n-t$ parties, all sessions of $\ftaba$ will eventually terminate upon receiving a sufficient number of activations from the environment (i.e., every honest party $P_i$ will eventually set $b_{j,k}\assign 1$ for all $j\in[N]$ and $k\in[m]$).

If, for all $N$ instances, the integer $r$ sampled in any of the $m$ sessions of $\ftaba$ for that instance is at most $\itr-1$ (so that an honest party $P_i$ must eventually receive $z_{i,\itr}\not=\lambda$ from that session), then every honest party $P_i$ eventually sets $\continue_i\assign 1$. Here, $r$ is sampled from the geometric distribution $\mathrm{Geometric}(p)$ (that corresponds to expected-constant-round A-BA as in~\cite{feldman1989asynchronous,canetti1993fast}). We define $E$ as the event where for all instances at least one of the $m$ sessions of $\ftaba$ has $r \leq \itr-1$. The probability of event $E$ occurring can be calculated as follows, using similar logic as presented in~\cite{ben-or2003resilient}.

Define $A_{r_0}$ as the event where $\ftaba$ does not sample $r=r_0$ given $r\ge r_0$. Then
\[
\Pr[A_{r_0}]=1-p.
\]
Therefore, for the event $B_{r_0}$, in which at least one of the $m$ $\ftaba$'s for an instance samples $r=r_0$ given $r\ge r_0$, we have
\[
\Pr[B_{r_0}]=1-\Pr[A_{r_0}]^m=1-\left(1-p\right)^m.
\]
This implies that for the event $C_{r_0}$ in which for some of the $N$ instances, none of the $m$ $\ftaba$'s sample $r=r_0$ given $r\ge r_0$, the following holds:
\[
\Pr[C_{r_0}]=1-\Pr[B_{r_0}]^N=1-\left(1-\left(1-p\right)^m\right)^N
\]
Now we can calculate the probability of $E^\complement$, which represents the event where for some of the $N$ instances, none of the $m$ $\ftaba$'s sample $r \le \itr-1$:
\[
\Pr[E^\complement]=\prod_{r_0=1}^{\itr-1}\Pr[C_{r_0}]=\left( 1-\left(1-\left(1-p\right)^m\right)^N\right)^{\itr-1}
\]
Therefore, the complementary probability is
\[
\Pr[E]=1-\Pr[E^\complement]=1-\left( 1-\left( 1- \left( 1-p \right)^{m}\right)^{N}\right)^{\itr -1}.
\]
By setting $m\assign\log_{\frac{1}{1-p}} N$ it holds that
\[
\Pr[E]=1-\left( 1-\left( 1- \frac{1}{N}\right)^{N}\right)^{\itr -1}.
\]
Then considering $\left( 1- \frac{1}{N}\right)^{N}\approx\frac{1}{e}$ we have
\[
\Pr[E]\approx 1-\left( 1-\frac{1}{e}\right)^{\itr -1}.
\]

So, for a constant value of $\itr>1$, the probability $\Pr[E]$ is a non-zero constant. This implies that, with appropriate parameter choices, every honest party $P_i$ sets $\continue_i\assign 1$ with constant probability. In other words, the A-BA on continuation outputs $\continue=1$ with constant probability.

In any case, since every honest party $P_i$ eventually sets $\continue_i$ and $\faba$ only requires participation from $n-t$ parties, agreement on $\continue$ will eventually be reached. When $\continue=0$, all honest parties will eventually proceed to the next iteration.

On the other hand, when $\continue=1$, at least one honest party $P_g$ must have provided $\continue_g=1$, meaning that it observed $S_{g,j}^{\itr}\not=\emptyset$ for all $j\in[N]$. Note that whenever an honest party $P_g$ adds $z_{g,r}\not=\lambda$ to $S_{g,j}^r$ for some $j\in[N]$ and $\itr\le r\le \itr+2$, any honest party $P_i$ will eventually add $z_{i,r+1}=z_{g,r}$ to $S_{i,j}^{r+1}$, since $\ftaba$ models the 1-shift property. Thus, in case of continuation, every honest party $P_i$ will eventually have $S_{i,j}^{\itr+1}\not=\emptyset$ for all $j\in[N]$ and A-Cast $(c_{i,1},\ldots,c_{i,N})$.

Since every honest party $P_g$ eventually A-Casts a vector $(c_{g,1},\ldots,c_{g,N})$ satisfying $c_{g,j}\in S_{g,j}^{\itr+1}$ for all $j\in[N]$, following the analysis above every honest party $P_i$ will eventually receive at least $n-t$ A-Casts $\Vec{C}_h'=(c_{h,1}',\ldots,c_{h,N}')$ satisfying $c_{h,j}'\in S_{i,j}^{\itr+2}$ for all $j\in[N]$. Thus, every honest party $P_i$ will eventually pass the threshold $|Q_{i,1}|\ge n-t$ and A-Cast $Q_{i,1}^{(n-t)}$.

Whenever an honest party $P_g$ adds $P_h$ to $Q_{g,1}$, any other honest party $P_i$ will eventually add $P_h$ to $Q_{i,2}$. This is because $P_g$ received some $\Vec{C}_h'=(c_{h,1}',\ldots,c_{h,N}')$ from $P_h$'s A-Cast, where $c_{h,j}'\in S_{g,j}^{\itr+2}$ for all $j\in[N]$, so eventually $P_i$ will receive the same $\Vec{C}_h'$, and following the analysis above it will eventually hold that $c_{h,j}'\in S_{i,j}^{\itr+3}$ for all $j\in[N]$. Since every honest party $P_g$ eventually A-Casts $Q_{g,1}^{(n-t)}$, every honest party $P_i$ will eventually receive at least $n-t$ A-Casts $Q_{h,1}'$ satisfying $|Q_{h,1}'|=n-t$ and $Q_{h,1}'\subseteq Q_{i,2}$. Thus, every honest party $P_i$ will eventually pass the threshold $|W_i|\ge n-t$ and participate in $\faole$.

Before proceeding, it is important to note that by using a similar counting argument as in~\cite{feldman1989asynchronous,canetti1993fast}, we can guarantee that messages (suggestions) from at least $n-2t$ parties will eventually be received by all honest parties (by the time of participation in $\faole$). In more detail, consider the first honest party $P_i$ who reaches the stage where $s\assign|W_i|\ge n-t$. We can construct an $s \times n$ table $T$, where $T_{h,l}=1$ if and only if $P_h\in W_i$ and $P_l\in Q_{h,1}'$ (where $Q_{h,1}'$ was received from $P_h$'s A-Cast). Each row of the table contains (exactly) $n-t$ entries equal to one, indicating that there are a total of $s(n-t)$ one-entries in $T$. Let $q$ be the minimum number of columns in $T$ that contain at least $t+1$ one-entries. By considering the worst-case distribution of ones, we have the inequality
\[
qs+(n-q)t\ge s(n-t).
\]
Noting that it suffices to consider the case $s=n-t$ and rearranging terms, we have
\begin{align*}
    q& \ge \frac{(n-t)^2-nt}{n-2t} \\
    & = \frac{(n-t)(n-2t)-t^2}{n-2t} \\
    & = n-t-\frac{t^2}{n-2t} \\
    & \ge  n-2t. && (n-2t>t)
\end{align*}

Therefore, we can conclude that at least $q\ge n-2t$ columns in ($P_i$'s local view of) the table contain at least $t+1$ one-entries. Fix such a column $l$, and consider another honest party $P_g$ who reaches the stage $|W_g|\ge n-t$. Since $(n-t)+(t+1)>n$, the \iith{l} column in $P_g$'s view of the table must necessarily contain a one-entry. This means that $P_g$ received the A-Cast of $Q_{h,1}'$ from some party $P_h$, where $P_l\in Q_{h,1}'\subseteq Q_{g,2}$. In other words, a set of at least $n-2t$ parties will eventually be contained in the $Q_{g,2}$ set of every honest party $P_g$ (by the time that $|W_g|\ge n-t$).

Since all honest parties eventually participate in $\faole$, which only requires participation from $n-t$ parties, every honest party $P_i$ will eventually receive the index of a party $P_{l_i}$. With constant probability, there is agreement on a random leader $P_l$. Conditioned on agreeing on a random leader, with probability at least $\frac{n-2t}{n}>\frac{1}{3}$ the leader $P_l$ is among the parties whose vectors will eventually be delivered to (and validated by) all honest parties. We stress that it is crucial here that a linear number of such parties are fixed \emph{before} any honest party participates in $\faole$, and consequently before the identity of the leader is leaked to the adversary. Since every honest party $P_i$ waits until $|W_i|\ge n-t$, the condition $P_l\in Q_{i,2}$ will hold, and hence all honest parties will input (the same) $\Vec{C}_l'\not=\bot$ to the multi-valued A-BA on the leader's vector. In this case, every honest party $P_i$ will eventually receive $\Vec{Y}=\Vec{C}_l'$ from the A-BA and terminate.

In any case, every honest party $P_i$ eventually has $|W_i|\ge n-t$ and participates in the multi-valued A-BA, so agreement on some $\Vec{Y}$ will eventually be reached. If $\Vec{Y}=\bot$, all honest parties will eventually proceed to the next iteration; otherwise, they will all output $\Vec{Y}$.

Therefore, the execution of the protocol $\pcaba$ does not encounter any deadlocks. Furthermore, by choosing appropriate parameters, specifically $m\assign\log_{\frac{1}{1-p}} N$, the protocol can achieve termination with a constant probability in each iteration. This implies that $\pcaba$ will eventually terminate with probability $1$, and the expected number of iterations is constant. Since the number of rounds in each iteration is also constant, the overall round complexity of the protocol is constant in expectation.

\paragraph{Correctness.}

Now we demonstrate that once the protocol terminates, all honest parties output the same value as the ideal functionality $\fcaba^{V,N}$. This part of the proof requires the intrusion-tolerance property of A-BA, which guarantees that the output is either $\bot$ or a value provided by an honest party.

Since we assumed termination, there was agreement on $\Vec{Y}=(y_1,\ldots,y_N)$. Since $\faba$ models the intrusion-tolerance property, this implies that at least one honest party $P_i$ provided $\Vec{C}_{l_i}'=\Vec{Y}$, where $l_i$ was received from $\faole$ and $P_{l_i}\in Q_{i,2}$, meaning that party $P_{l_i}$ A-Casted a vector $\Vec{C}_{l_i}'=(y_1,\ldots,y_N)$ satisfying $y_j\in S_{i,j}^{\itr+3}$ for all $j\in[N]$. This implies that $y_j$ is a valid and non-empty (not $\lambda$) output from some session of $\ftaba$ for the \jth instance, for each $j\in[N]$. According to the description of $\ftaba$, if there is a unique input value that was provided by at least $n-2t$ parties for an instance, everyone will receive that value as the output for the instance; otherwise, the output will be either one of the inputs provided by some honest parties for that instance or a special value $\bot$. It is clear from inspection that $\fcaba^{V,N}$ generates its output in the same manner.

Therefore, with probability $1$, no environment can distinguish between the execution of $\pcaba^{V,N,m,p,\itr}$ with $\adv$ and $\fcaba^{V,N}$ with $\sim$, assuming expected-PPT machines. However, in order to apply UC composition guarantees, we need to introduce a stopping point (proportional to the security parameter) for our expected-PPT machines and make them strict PPT. In doing so, the distinguishing probability of the environment would become a non-zero value which can be arbitrarily small, depending on the duration for which we allow the machines to run. This ensures that the \emph{strict} PPT protocol $\pcaba^{V,N,m,p,\itr}$ UC-realizes $\fcaba^{V,N}$ with statistical security.
\end{proof}
\section*{Acknowledgements}
Our original motivation for this project was to provide a simulation-based treatment of concurrent A-BA protocols, such as Ben-Or and El-Yaniv's~\cite{ben-or2003resilient}, but the search for building blocks, in particular of an optimally resilient asynchronous OCC protocol became a bit of a ``detective story,'' as many references pointed to an unpublished manuscript by Feldman~\cite{feldman1989asynchronous}, which was nowhere to be found. We thank Michael Ben-Or for providing it to us, which corroborated its in-existence.

Ran Cohen's research is supported in part by NSF grant no.\ 2055568. Juan Garay's research is supported in part by NSF grants no.\ 2001082 and 2055694. Vassilis Zikas's research is supported in part by NSF grant no.\ 2055599 and by Sunday Group.
The authors were also supported by the Algorand Centres of Excellence programme managed by Algorand Foundation. Any opinions, findings, and conclusions or recommendations expressed in this material are those of the author(s) and do not necessarily reflect the views of Algorand Foundation.

{\small
\newcommand{\etalchar}[1]{$^{#1}$}

}

\appendix

\section{Attack on Ben-Or and El-Yaniv's Select Protocol}
\label{app:be}

In this section, we begin by presenting a static adversary that disproves an important claim regarding the expected-constant round complexity for one of the main subroutines in Ben-Or and El-Yaniv's concurrent A-BA protocol~\cite{ben-or2003resilient}. Consequently, the assured expected-constant round complexity of their concurrent A-BA protocol is no longer valid. Furthermore, we demonstrate that when considering adaptive adversaries, attempting to argue even for weaker claims becomes futile. The attacks we present here consider a simple case with only four parties, one of which can be corrupted (respecting the bound $t<n/3$). To provide essential context for our attacks, we present some of the protocols nearly verbatim from~\cite{ben-or2003resilient} in Figures~\ref{fig:prot-be-acast+},~\ref{fig:prot-be-spread}, and~\ref{fig:prot-be-select}.

\begin{figure}[htbp!]
\centering
\protbox{$\pacastp$}{
    {\bf Common inputs:} $n$, $t\le (n-1)/3$; $k$ (the identity of the transmitter)

    {\bf Local input for processor $p_i$:} none

    {\bf Additional local input for processor $p_k$:} $V$

    {\bf Local output for processor $p_i$:} $Value_i$, $Relay_i$

    \medskip

    Epoch 1: ($p_k$ only)
    \begin{itemize}
        \item A-Cast $V$.
    \end{itemize}
    Epoch 2: (Every processor $p_i$)
    \begin{itemize}
        \item Upon accepting a value $v$ from a transmitter $p$, A-Cast $[``commit",p,v]$.
    \end{itemize}
    Epoch 3: (Every processor $p_i$)
    \begin{itemize}
        \item Upon accepting $2t+1$ $[``commit",p',v']$ A-Casts from a set $S$ of processors (transmitters), $Value_i\leftarrow v'$, $Relay_i\leftarrow S$, and return $Value_i, Relay_i$.
    \end{itemize}
}
\caption{The A-Cast\textsuperscript{+} protocol from~\cite[Fig.\ 3]{ben-or2003resilient}.}
\label{fig:prot-be-acast+}
\end{figure}

\begin{figure}[htbp!]
\centering
\protbox{$\pspread$}{
    {\bf Common inputs:} $n$, $t\le (n-1)/3$

    {\bf Local input for processor $p_i$:} $V_i$

    {\bf Local output for processor $p_i$:} $\langle X_1,X_2,\ldots,X_n \rangle$

    \medskip

    Epoch 1:
    \begin{itemize}
        \item A-Cast\textsuperscript{+} $V_i$.
    \end{itemize}
    Epoch 2:
    \begin{itemize}
        \item Wait until $2t+1$ values $V_{i_1},\ldots,V_{i_{2t+1}}$ accepted (from $p_{i_1},\ldots,p_{i_{2t+1}}$).
        \item A-Cast $\langle (p_{i_1},V_{i_1}),\ldots,(p_{i_{2t+1}},V_{i_{2t+1}}) \rangle$.

        Participate in the A-Cast of a vector $u=\langle (p_{j_1},U_{j_1}),\ldots,(p_{j_{2t+1}},U_{j_{2t+1}}) \rangle$ only if for every $1\le k\le 2t+1$, $U_{j_k}$ accepted from $p_{j_k}$.
    \end{itemize}
    Epoch 3:
    \begin{itemize}
        \item Wait until $t+1$ vectors $u_{l_1},\ldots,u_{l_{2t+1}}$ [sic] accepted (from $p_{l_1},\ldots,p_{l_{t+1}}$) such that for every $1\le k \le t+1$ if $u_{l_k}=\langle (p_{k_1},U_{k_1}),\ldots,(p_{k_{2t+1}},U_{k_{2t+1}}) \rangle$, and for every $1\le r \le 2t+1$, $U_{k_r}$ accepted from $p_{k_r}$'s A-Cast\textsuperscript{+}.
        \item For every $1\le l \le n$:
        \begin{itemize}
            \item If a value $V$ was accepted from $p_l$ via A-Cast\textsuperscript{+} then $X_l\leftarrow (V,Relay_{i,l})$ (where $(V,Relay_{i,l})$ is the output of A-Cast\textsuperscript{+}).
            \item Else $X_l\assign \emptyset$.
        \end{itemize}
        \item Return $\langle X_1,X_2,\ldots,X_n \rangle$.
    \end{itemize}
}
\caption{The Spread protocol from~\cite[Fig.\ 4]{ben-or2003resilient}.}
\label{fig:prot-be-spread}
\end{figure}

\begin{figure}[htbp!]
\centering
\protbox{$\pselect$}{
    {\bf Common inputs:} $n$, $t\le (n-1)/3$

    {\bf Local input for processor $p_i$:} $V_i$ - an arbitrary value; $Pred_i(x)$ - a predicate

    {\bf Local output for processor $p_i$:} $D_i$

    \medskip

    Epoch 1:
    \begin{itemize}
        \item $U_i\leftarrow V_i$ and run Spread on input $U_i$ and let $View_i$ denote the local output.
    \end{itemize}
    Epoch 2:
    \begin{itemize}
        \item Run A-OLE. Let $k$ be the output denoting the identity of the leader, $p_k$.
    \end{itemize}
    Epoch 3:
    \begin{itemize}
        \item $U\leftarrow View_i[k]$ (i.e., the value ``related'' to $p_k$ upon termination of Spread).
        \item If $U$ was accepted during the run of Spread and $\models Pred_i(U)$, then $R_i\leftarrow U$ and $U_i\leftarrow [``deflected", U, Relay_i]$, where $Relay_i$ is obtained by Spread (via A-Cast\textsuperscript{+}) for the accepted broadcast of the leader $p_k$.
        \item Else $R_i\leftarrow \emptyset$.
        \item Run A-BA on input $R_i$. Let $D_i$ denote the local output.
        \item If $D_i\neq \emptyset$ then return $D_i$.
    \end{itemize}
    Epoch 4:
    \begin{itemize}
        \item A-Cast $View_i$ ($View_i$ defined in Epoch 1).

        Participate in admissible A-Casts only.
    \end{itemize}
    Epoch 5:
    \begin{itemize}
        \item Wait until $2t+1$ views, $View_{i_1},\ldots,View_{i_{2t+1}}$ accepted.
        \item A-Cast $U_i$.
    \end{itemize}
    Epoch 6:
    \begin{itemize}
        \item Wait until $2t+1$ values accepted.
        \item If a $[``deflected", U', Relay]$-message was accepted and $p_k\in Relay$ ($p_k$ is the leader from Epoch 2) and there are $t+1$ processors in $Relay$ which have a non-null value related to $p_k$ in their accepted Views with the same value $U'$ (if several such messages were accepted with different $U'$ values take an arbitrary one), then $R_i\leftarrow U'$.
        \item Else $R_i\assign \emptyset$.
    \end{itemize}
    Epoch 7:
    \begin{itemize}
        \item Run A-BA on input $R_i$ and let $D_i$ denote the local output.
        \item If $D_i\neq \emptyset$ then return $D_i$.
        \item Else $U_i\leftarrow V_i$ and go to Epoch 1.
    \end{itemize}
}
\caption{The Select protocol from~\cite[Fig.\ 5]{ben-or2003resilient}.}
\label{fig:prot-be-select}
\end{figure}

Next, we highlight that certain aspects of the attacks do not necessitate any corruption and can be executed solely by imposing appropriate delays. These aspects are common to both the static and adaptive attacks; thus, they form the initial focus of our analysis before we dive into the separate subsections for the static and adaptive attacks. Claims~\ref{claim:be1} and~\ref{claim:be2} formulate capabilities of adversaries only imposing delays maliciously.

\begin{claim}
\label{claim:be1}
    In any invocation of the $\pacastp$ protocol, the adversary can arbitrarily assign different sets of participants of size $n-t$ to the $Relay$ sets of honest parties' outputs by imposing appropriate delays. No corruption is necessary for this manipulation.
\end{claim}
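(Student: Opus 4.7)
The plan is to exploit the fact that in $\pacastp$ the set $Relay_i$ is defined as precisely the set $S$ of identities of the first $2t+1 = n-t$ commit A-Casts that the honest party $p_i$ accepts in Epoch~3, and that in the $\facast$ functionality (Figure~\ref{fig:func-acast}) the adversary controls a per-receiver output delay counter $D^{\outputlbl}_i$ via the $\delaycmd$ interface. Therefore, for each instance of A-Cast carrying a commit message, the adversary can independently schedule when each honest party receives the output, subject only to the constraint that every delay is finite.

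First, I would observe that if the transmitter $p_k$'s initial A-Cast in Epoch~1 never produces output at any honest party, then no honest party ever A-Casts a commit message, no honest party reaches Epoch~3, and the claim is vacuous. Otherwise, by the agreement property of A-Cast, every honest party eventually accepts the same value $V$ from $p_k$ and thus A-Casts $[\text{``commit''},p_k,V]$ in Epoch~2. Second, I would fix an arbitrary target assignment $\{R_i\}_{i \in H}$ of subsets of size $n-t$ of the player set, one per honest party $p_i$ (here $H$ denotes the honest players; since no corruption is required, we may take $H$ to be all of $[n]$). For each ordered pair $(p_j, p_i)$ of an honest A-Caster and an honest receiver of the commit A-Cast, the adversary invokes $\delaycmd$ on the corresponding $\facast$ instance with a large delay targeted at $p_i$ whenever $j \notin R_i$, and with no extra delay whenever $j \in R_i$. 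By staggering these delays (e.g., setting the ``outside-$R_i$'' delays to exceed the maximum delivery time needed for the ``inside-$R_i$'' deliveries), the adversary forces $p_i$ to accept all $2t+1$ commits from $R_i$ strictly before any commit from outside $R_i$. Consequently, the first $2t+1$ accepted commits at $p_i$ come from $R_i$, and the protocol sets $Relay_i = R_i$.

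The only step that requires any care is verifying that the per-receiver delivery schedule is realizable in the $\facast$ hybrid without violating the eventual-delivery guarantee. This follows immediately by inspection of $\facast$: the command $(\delaycmd,\sid,P_i,\outputlbl,D)$ updates only $D^{\outputlbl}_i$, so delays can be set independently for each receiver, and since every $D$ is a finite integer (encoded in unary), all outputs are still delivered eventually; in particular, $p_i$ will eventually reach $|S|=2t+1$ and terminate with its protocol-computed $Relay_i$. Since the entire construction uses only the $\delaycmd$ interface exposed to the adversary by the hybrid functionalities, no party needs to be corrupted, which establishes the claim.
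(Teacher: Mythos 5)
Your proposal is correct and matches the paper's own (much terser) argument: the paper likewise lets all parties accept the Epoch-1 A-Cast and then has the adversary deliver the commit A-Casts from the target set to each honest party before any others, so that the first $2t+1$ accepted commits determine $Relay_i$. Your additional verification that this scheduling is realizable via the per-receiver $\delaycmd$ interface of $\facast$ without violating eventual delivery is a more detailed rendering of the same idea, not a different approach.
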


\begin{proof}
To assign a set $S$ of participants to the $Relay$ set of an honest party $p_i$, the adversary can allow all the honest parties to accept the A-Cast initiated in Epoch 1 of $\pacastp$. However, the adversary selectively delivers $[``commit",\cdot,\cdot]$ messages from parties in set $S$ to $p_i$ before delivering any other $[``commit",\cdot,\cdot]$ messages.
\end{proof}

\begin{claim}
\label{claim:be2}
    Let $P=\{p_1,p_2,p_3,p_4\}$ be the set of participants, $v_i$ for $i\in[4]$ be the input of $p_i$, and $Relay_{i,j}$ for $i,j\in[4]$ be arbitrary subsets of $P$ with size $3$. The adversary can manipulate the output of the $\pspread$ protocol as follows solely through message scheduling. No corruption is necessary.
    \begin{align*}
        p_1\colon & \langle (v_1,Relay_{1,1}),(v_2,Relay_{1,2}),(v_3,Relay_{1,3}),(v_4,Relay_{1,4}) \rangle \\
        p_2\colon & \langle (v_1,Relay_{2,1}),(v_2,Relay_{2,2}),(v_3,Relay_{2,3}),(v_4,Relay_{2,4}) \rangle \\
        p_3\colon & \langle (v_1,Relay_{3,1}),(v_3,Relay_{3,3}),(v_4,Relay_{3,4}) \rangle \\
        p_4\colon & \langle (v_2,Relay_{4,2}),(v_3,Relay_{4,3}),(v_4,Relay_{4,4}) \rangle
    \end{align*}
\end{claim}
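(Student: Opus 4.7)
The plan is to construct an explicit adversarial schedule that, using only message delays and no corruption, drives $\pspread$ to the prescribed outputs. The construction naturally decomposes into four nested scheduling tasks: controlling which $\pacastp$ acceptances occur at each party, which $Relay$ sets accompany them, which Epoch~2 vector each party A-Casts, and which vector A-Casts each party observes before returning from Epoch~3.

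First, I would establish the $\pacastp$ acceptance pattern. By Claim~\ref{claim:be1} the adversary can already install each prescribed $Relay_{i,j}$ on every acceptance we wish to occur. What remains is \emph{suppression}: for the finite duration of the $\pspread$ run, $p_3$ must fail to accept $V_2$ and $p_4$ must fail to accept $V_1$. This is immediate in the eventual-delivery model, since the adversary can withhold enough of the underlying Bracha commit messages to those two parties until the outer protocol has already terminated.

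Next, I would choreograph the order of $\pacastp$ acceptances so that each party's Epoch~2 vector A-Cast contains the ``right'' three components: $p_1$ and $p_3$ accept $V_1,V_3,V_4$ as their first three values and therefore A-Cast $v^{(1)} = \langle (p_1,V_1),(p_3,V_3),(p_4,V_4)\rangle$, while $p_2$ and $p_4$ accept $V_2,V_3,V_4$ first and A-Cast $v^{(2)} = \langle (p_2,V_2),(p_3,V_3),(p_4,V_4)\rangle$. Because an honest party participates in a vector A-Cast only once it has $\pacastp$-accepted every component, the participant set of the A-Casts of $v^{(1)}$ is exactly $\{p_1,p_2,p_3\}$ and that of $v^{(2)}$ is $\{p_1,p_2,p_4\}$; each set has size $2t+1=3$, which is precisely Bracha's threshold for $n=4,t=1$. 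The adversary can therefore schedule echoes and readies among these three participants so that each of them accepts the A-Cast, while starving the non-participant of the messages it would need to accept on its own.

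For Epoch~3 I would deliver to $p_3$ the two A-Casts of $v^{(1)}$ (from $p_1$ and $p_3$) and nothing else: every component of $v^{(1)}$ lies in $p_3$'s accepted set $\{V_1,V_3,V_4\}$, so $p_3$'s validation predicate is met with $t+1=2$ vectors, and $p_3$ returns with $X_l = (V_l, Relay_{3,l})$ for $l\in\{1,3,4\}$ and $X_2 = \emptyset$, as required. Party $p_4$ is handled symmetrically via the two A-Casts of $v^{(2)}$ from $p_2$ and $p_4$. For $p_1$ and $p_2$, whose outputs must contain all four coordinates, I would first deliver the remaining $\pacastp$ acceptance ($V_2$ to $p_1$ and $V_1$ to $p_2$) and only afterwards release any two admissible vectors to each; both parties then validate everything and return with their full four-coordinate outputs. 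The main obstacle is the feasibility check in the previous paragraph: one must verify that Bracha A-Cast with only $2t+1=3$ honest participants still delivers an accepted broadcast at each of them, which hinges on the tightness $n-t=2t+1$ in the $n=4,t=1$ regime.
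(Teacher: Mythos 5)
Your construction is correct and matches the paper's proof essentially step for step: the same suppression of $V_2$ at $p_3$ and $V_1$ at $p_4$, the same acceptance orderings yielding the two Epoch-2 vectors $\langle(p_1,v_1),(p_3,v_3),(p_4,v_4)\rangle$ and $\langle(p_2,v_2),(p_3,v_3),(p_4,v_4)\rangle$ with participant sets $\{p_1,p_2,p_3\}$ and $\{p_1,p_2,p_4\}$, and the same use of Claim~\ref{claim:be1} for the $Relay$ sets; the Bracha-threshold point you flag ($n-t=2t+1=3$ participants suffice for acceptance) indeed holds and is implicitly relied upon by the paper as well. Your explicit ordering for $p_1$ and $p_2$ (deliver the missing value before releasing the Epoch-2 vectors) is a slightly more careful spelling-out of what the paper's schedule implicitly assumes, but the argument is the same.
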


\begin{proof}
According to the description of $\pspread$, for any $i,j\in[4]$, the \jth component of the output of $p_i$ is either $\bot$ or is the output of $\pacastp$ accepted from $p_j$ (from Epoch 1). Since in each party's output all the available values corresponding to other parties are correct, the adversary does not need to make any changes to them. Moreover, based on Claim~\ref{claim:be1}, the adversary can set the $Relay$ set in the output of $\pacastp$ to any arbitrary subset of participants as long as it has the correct size, which is 3. Since all the $Relay_{i,j}$ sets are of size 3, the adversary will not face any challenges setting them.
Now we only need to demonstrate how the adversary can include and exclude inputs from different parties in each party's output to achieve the desired event. The adversary can accomplish this by carefully designing the order in which the parties accept the $\pacastp$ messages from Epoch 1. The following steps outline the message delivery order:
\begin{itemize}
    \item The adversary delivers the $\pacastp$ messages to $p_1$ in the order: $p_1, p_3, p_4, p_2$.
    \item The adversary delivers the $\pacastp$ messages to $p_2$ in the order: $p_2, p_3, p_4, p_1$.
    \item The adversary delivers the $\pacastp$ messages to $p_3$ in the order: $p_1, p_3, p_4$, and delays the $\pacastp$ message from $p_2$ until $p_3$ terminates.
    \item The adversary delivers the $\pacastp$ messages to $p_4$ in the order: $p_2, p_3, p_4$, and delays the $\pacastp$ message from $p_1$ until $p_4$ terminates.
\end{itemize}
By using this scheduling and delivering all the A-Cast\textsuperscript{+} messages initiated in Epoch 2, the following outcomes occur:
\begin{itemize}
    \item Parties $p_1$ and $p_3$ accept the vector $\langle(p_1,v_1),(p_3,v_3),(p_4,v_4) \rangle$ twice (once from $p_1$ and once from $p_3$ by the participation of $p_1, p_2, p_3$).
    \item Parties $p_2$ and $p_4$ accept the vector $\langle(p_2,v_2),(p_3,v_3),(p_4,v_4) \rangle$ twice (once from $p_2$ and once from $p_4$ by the participation of $p_1, p_2, p_4$).
\end{itemize}
Therefore, all parties terminate with the given output as stated in the claim.
\end{proof}

\subsection{Static Attack}

$\pselect$ may involve multiple iterations, where parties either agree on a value and terminate, or restart the protocol. In the proof of the $\pselect$ protocol, an event $E$ is defined, which is claimed to occur with a constant probability. It is argued that when $E$ happens in an iteration, all parties terminate. However, we demonstrate that this claim is not valid, resulting in uncertainty regarding the expected-constant round complexity guarantee of the $\pselect$ protocol.

Assuming the preconditions for the $\pselect$ protocol hold (i.e., for every non-faulty parties $p_i$ and $p_j$, $\models Pred_j(V_i)$ and $V_i \neq \emptyset$), we will demonstrate that the argument regarding the round complexity is not valid. Let the static adversary $\adv$ corrupt $p_1$ before the protocol begins. According to Claim~\ref{claim:be2}, $\adv$ can make parties terminate $\pspread$ in Epoch 1 of $\pselect$ with the following outputs where $Relay_{1,2}=Relay_{4,2}=\{p_1,p_2,p_3\}$ (no corruption is required):
\begin{align*}
    p_1\colon & \langle (v_1,Relay_{1,1}),(v_2,Relay_{1,2}),(v_3,Relay_{1,3}),(v_4,Relay_{1,4}) \rangle \\
    p_2\colon & \langle (v_1,Relay_{2,1}),(v_2,Relay_{2,2}),(v_3,Relay_{2,3}),(v_4,Relay_{2,4}) \rangle \\
    p_3\colon & \langle (v_1,Relay_{3,1}),(v_3,Relay_{3,3}),(v_4,Relay_{3,4}) \rangle \\
    p_4\colon & \langle (v_2,Relay_{4,2}),(v_3,Relay_{4,3}),(v_4,Relay_{4,4}) \rangle \\
\end{align*}

\begin{remark}
    Based on the properties of the $\pspread$ protocol, after the termination of the first honest party, there exist two subsets of participants, $P_1$ and $P_2$, satisfying the following conditions: $P_1$ only contains honest parties, $|P_1|=t+1$, $|P_2|=2t+1$, and all the values from parties in $P_2$ are included in the output vector of all the parties in $P_1$.
\end{remark}

Now let us assume that $\adv$ allows $p_4$ to be the first to terminate $\pspread$ in Epoch 1 of the $\pselect$ protocol. In this case, $P_1$ and $P_2$ could be $\{p_2,p_4\}$ and $\{p_2,p_3,p_4\}$, respectively.

In Epoch 2 of the $\pselect$ protocol, a leader $p_k$ is randomly chosen. The event $E$ is defined as $p_k\in P_2\cap (P-T)$ and it is claimed that if $E$ occurs, the protocol terminates with the correct output. Since $p_k$ is chosen randomly and $|P_2\cap (P-T)|=t+1$, the probability of the event $E$ is at least $\frac{1}{3}$. Therefore, this claim suggests an expected-constant number of iterations (or asynchronous rounds).

However, we claim that when $p_k=p_2$, which satisfies $E$ (since $p_2$ is honest and is in $P_2$), the protocol will start over instead of terminating. In the following, we explain the adversary's behavior in Epoch 3 to Epoch 7 of the $\pselect$ protocol and provide a proof for our claim.

In Epoch 3 of $\pselect$, $\adv$ can delay $p_2$ and assign a malicious value $v'_2\neq v_2$ to $R_1$. However, $\adv$ allows $p_1$ to follow the protocol for the rest of its actions. As a result, the following state is observed:
\begin{itemize}
    \item $p_1$: $R_1=v'_2$ and $U_1=(v_2,\{p_1,p_2,p_3\})$
    \item $p_2$: Delayed
    \item $p_3$: $R_3=\bot$ and $U_3=v_3$
    \item $p_4$: $R_4=v_2$ and $U_4=(v_2,\{p_1,p_2,p_3\})$
\end{itemize}

Consequently, the parties initiate the A-BA protocol with the inputs $(v'_2, \text{delayed}, \bot, v_2)$. However, since there is no majority in the input values, we have no guarantee on the output of this A-BA protocol. In fact, as formulated in the A-BA ideal functionality, the adversary $\adv$ can manipulate the protocol execution to make it output $\bot$ and prevent the termination of $\pselect$. Therefore, it is necessary to proceed to the next epoch, as the protocol cannot terminate at this stage.

In Epoch 4 of $\pselect$, $\adv$ allows $p_1$ to follow the protocol. It should be noted that all the views in this epoch are admissible to everyone, as the components of all the views are either $\bot$ or the correct values. Therefore, all the views can be received by the honest parties. The only control that $\adv$ has is over the order in which the views are received by the parties in the next epoch.

In Epoch 5 of $\pselect$, $\adv$ allows $p_1$ to follow the protocol and strategically manages delays to achieve the following:
\begin{itemize}
    \item Delivers $view_1, view_3, view_4$ to $p_1$ as its first $2t+1$ views.
    \item Delays the acceptance of any view for $p_2$ until the A-BA in Epoch 7 of $\pselect$ is terminated.
    \item Delivers $view_1, view_3, view_4$ to $p_3$ as its first $2t+1$ views.
    \item Delivers $view_1, view_3, view_4$ to $p_4$ as its first $2t+1$ views.
\end{itemize}
Then, parties A-Cast values as follows based on the output they received from the $\pspread$ protocol in Epoch 1 of $\pselect$:
\begin{itemize}
    \item $p_1$: $U_1 = (v_2, \{p_1, p_2, p_3\})$
    \item $p_2$: Still waiting from the previous epoch
    \item $p_3$: $U_3 = v_3$
    \item $p_4$: $U_4 = (v_2, \{p_1, p_2, p_3\})$
\end{itemize}
In Epoch 6 of $\pselect$, $\adv$ lets $p_1$ follow the protocol, keeps $p_2$ waiting, and allows the remaining parties to receive each other's A-Cast messages. As a result, the following situation arises:
\begin{itemize}
\item $p_1, p_3, p_4$: Receive $(v_2, \{p_1, p_2, p_3\})$ from $p_1, p_4$, and $v_3$ from $p_3$.
\item $p_2$: Still waiting from previous epochs
\end{itemize}
However, the ``if'' condition in Epoch 6 is not satisfied because only $p_1$ among $\{p_1, p_2, p_3\}$ has an accepted view that includes a value associated with $p_2$. Therefore, the following assignments are made:
\begin{itemize}
\item $p_1, p_3, p_4$: Assign $R_1 = R_3 = R_4 = \bot$.
\item $p_2$: Still waiting from previous epochs
\end{itemize}
In Epoch 7 of $\pselect$, $\adv$ allows $p_1$ to follow the protocol. Consequently, parties initiate the A-BA protocol with the following inputs:
\begin{itemize}
    \item $p_1, p_3, p_4$: $\bot$
    \item $p_2$: Still waiting from previous epochs
\end{itemize}
As a result, parties complete the A-BA protocol and obtain the output $\bot$. Since the output is $\bot$, they have to start over, completing our attack.

\subsection{Adaptive Attack}

In the static attack, we demonstrated that the claim ``for all possibilities for the set $P_2$, if the leader is an honest party in $P_2$, then the protocol terminates with the correct value'' is not valid. We provided specific choices of the leader within certain possibilities for $P_2$ where the claim does not hold. This shows that the claim is incorrect in those cases.

In the adaptive attack, we further illustrate the difficulty in finding an argument for the weaker claim ``there exists a possibility for $P_2$ such that if the leader is an honest party in $P_2$, then the protocol terminates with the correct values.'' We use the same setup with four parties, namely $p_1$, $p_2$, $p_3$, and $p_4$, and demonstrate that if either $p_1$ or $p_2$ is chosen as the leader, the adversary $\adv$ can manipulate the protocol to start over instead of terminating. Since any legitimate set $P_2$ consists of three parties, it must include at least one of $p_1$ and $p_2$. Therefore, in any possible set $P_2$, there is an honest party whose selection as the leader does not guarantee termination. This shows the hopelessness of finding a valid argument for the weaker claim.

Assuming the preconditions for the $\pselect$ protocol hold, we now consider adaptive corruption in this subsection. The adversary $\adv$ does not corrupt any party before the leader is selected. If the chosen leader is $p_2$, then $\adv$ follows the description of the static attack by corrupting $p_1$ and executing the attack as previously described.

However, if party $p_1$ is chosen as the leader, we can leverage symmetry and modify the adversary's description accordingly. We rename the parties as follows: $p'_1 \assign p_2$, $p'_2 \assign p_1$, $p'_3 \assign p_4$, and $p'_4 \assign p_3$. The adversary $\adv$ corrupts $p'_1$ and runs the same static adversary code on the participant set $\{p'_1, p'_2, p'_3, p'_4\}$. Furthermore, it is important to note that in the static attack, we previously specified that $Relay_{1,2} = Relay_{4,2} \assign \{p_1, p_2, p_3\}$, while allowing $Relay_{2,1}$ and $Relay_{3,1}$ to be arbitrary since they were not essential to the attack and could take any value. However, in order to achieve full symmetry in the static attack and apply the aforementioned approach, $\adv$ should now set $Relay_{2,1} = Relay_{3,1} \assign \{p_1, p_2, p_4\}$ in the description of the static attack.

With these adjustments, our adaptive adversary is complete. 
\section{The Asynchronous Turpin-Coan Extension}
\label{app:tc}

Here, we investigate when the generic extension of binary to multi-valued synchronous BA (for $t<n/3$) given by Turpin and Coan~\cite{turpin1984extending} works in the asynchronous setting with eventual delivery. It turns out that an asynchronous version of the extension---with appropriate modifications---is secure when $t<n/5$, but is provably insecure for any $t\ge n/5$ regardless of which binary A-BA protocol is used. We remark that while our negative result is not exactly a lower bound, as the attack that we present is on a specific extension protocol, the fact that we lose two additive factors of $t$ in resiliency (and even three, with a more \naive approach) gives some evidence that more sophisticated techniques are needed to maintain optimal resiliency, as in~\cite{mostefaoui2017signature}. Since our primary goal is to show that the Turpin-Coan extension cannot be used to get optimally resilient, multi-valued A-BA in expected-constant rounds, we choose to work with simpler, property-based definitions in this section.

We start by reviewing the original extension in~\cite{turpin1984extending}, which requires just two additional rounds. Parties first distribute their inputs amongst one another, over P2P channels. An honest party, based on how many of the received values disagree with his own, considers himself either ``perplexed'' or ``content,'' and announces this information. If enough parties claim to be perplexed, the honest party becomes ``alert.'' The parties now run a binary BA protocol to agree on this last state (effectively, to determine if they all started with the same input), and depending on the outcome they either output a default value, or are able to recover a common value from their local transcripts of the protocol.

\begin{figure}[ht!]
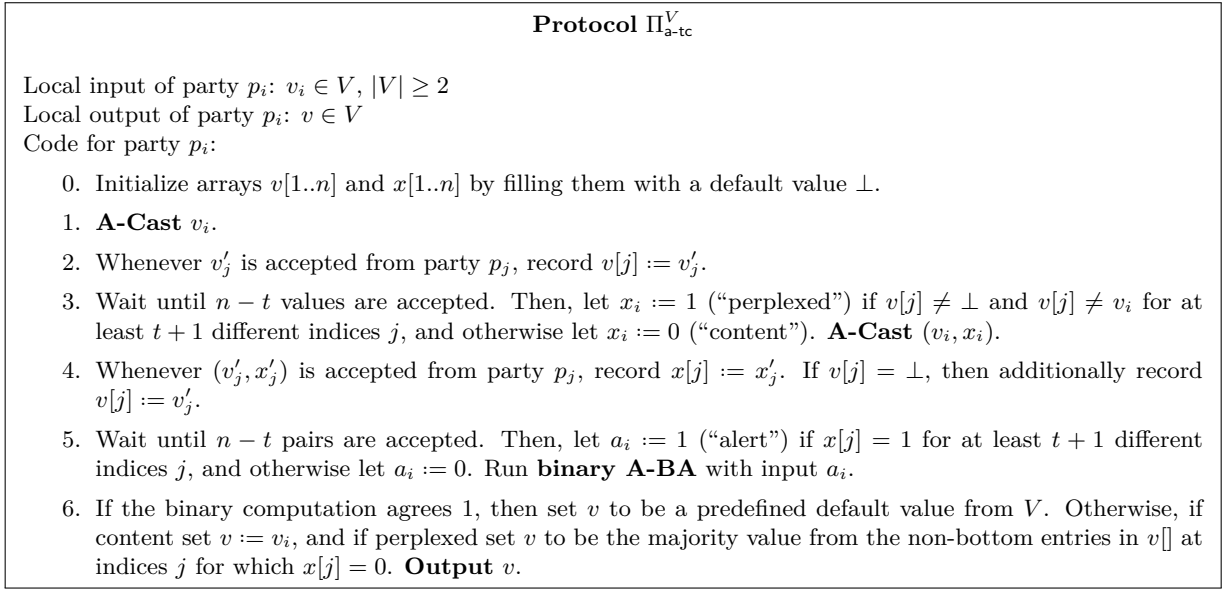

\centering
\protbox{$\patc^V$}{
    Local input of party $p_i$: $v_i\in V$, $|V|\ge2$

    Local output of party $p_i$: $v\in V$

    Code for party $p_i$:
    \begin{enumerate}
        \item[0.] Initialize arrays $v[1..n]$ and $x[1..n]$ by filling them with a default value $\bot$.
        \item \textbf{A-Cast} $v_i$.
        \item Whenever $v'_j$ is accepted from party $p_j$, record $v[j]\assign v'_j$.
        \item Wait until $n-t$ values are accepted. Then, let $x_i\assign1$ (``perplexed'') if $v[j]\not=\bot$ and $v[j]\not=v_i$ for at least $t+1$ different indices $j$, and otherwise let $x_i\assign0$ (``content''). \textbf{A-Cast} $(v_i,x_i)$.
        \item Whenever $(v'_j,x'_j)$ is accepted from party $p_j$, record $x[j]\assign x'_j$. If $v[j]=\bot$, then additionally record $v[j]\assign v'_j$.
        \item Wait until $n-t$ pairs are accepted. Then, let $a_i\assign1$ (``alert'') if $x[j]=1$ for at least $t+1$ different indices $j$, and otherwise let $a_i\assign0$. Run \textbf{binary A-BA} with input $a_i$.
        \item If the binary computation agrees $1$, then set $v$ to be a predefined default value from $V$. Otherwise, if content set $v\assign v_i$, and if perplexed set $v$ to be the majority value from the non-bottom entries in $v[]$ at indices $j$ for which $x[j]=0$. \textbf{Output} $v$.
    \end{enumerate}
}
\caption{The asynchronous Turpin-Coan extension.}
\label{fig:prot-atc}
\end{figure}

Our asynchronous version of this extension, $\patc$, is shown in Figure~\ref{fig:prot-atc}. We simply replace the basic message distribution mechanism with A-Cast, and account for the fact that adversarial A-Cast's may never terminate (i.e., honest parties should only expect to receive $n-t$ values). To avoid a somewhat artificial attack wherein the adversary can manipulate delays so that $t$ entries are missing in an honest party's $v[]$ array and a \emph{different} set of $t$ entries are missing in the $x[]$ array, we require parties to re-send their input value in the second A-Cast. Note that this additional modification cannot \emph{decrease} security, since honest parties will always use their actual input. In the following proposition, we prove the security of $\patc$ for $t<n/5$. We emphasize that in the asynchronous setting we must impose relaxed termination requirements. However, since the modified extension is still deterministic and constant-round, any termination guarantees provided by the underlying binary A-BA protocol should be preserved; accordingly, we only prove the agreement and validity properties.

\begin{proposition}
    For any input domain $V$, protocol $\patc^V$ securely implements multi-valued A-BA from A-Cast and binary A-BA, in constant rounds and in the presence of an adaptive and malicious $t$-adversary, provided $t<n/5$.
\end{proposition}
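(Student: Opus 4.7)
The plan is to separately establish validity and agreement, since termination is immediate from the constant number of A-Cast and binary A-BA invocations (each of which terminates for honest parties under the assumed resiliency). For validity, I would use a direct counting argument: if every honest party starts with the same input $v$, then in each honest party's first batch of $n-t$ accepted A-Casts, at least $n-2t$ come from honest senders who A-Cast $v$ (since there are $n-t$ honest senders and at most $t$ of them can be missing from any particular prefix of $n-t$ accepted values). Thus fewer than $t+1$ disagreements occur, every honest party sets $x_i = 0$, and the analogous count on the $x[]$ arrays forces $a_i = 0$ for every honest party. By validity of the binary A-BA the output is $0$, and every honest party, being content, outputs its own input $v$.

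For agreement, branch on the binary A-BA output. If it is $1$, every honest party outputs the predefined default and we are done. If it is $0$, then by validity of the binary A-BA at least one honest party $p_h$ must have input $a_h = 0$, which means $p_h$'s $x[]$ array contains at most $t$ ones, and so at least $n-3t$ honest \emph{content} parties exist. The key structural step is to show that all honest content parties share a common value $v^*$: if two content honest parties held different values $v \ne v'$, then by the content condition each would witness a set of $\ge n-2t$ parties whose first A-Cast carries its value, and by A-Cast agreement these witness sets are globally well-defined and disjoint, forcing $2(n-2t) \le n$, i.e., $t \ge n/4$. Thus for $t < n/4$ the value $v^*$ is unique, and every honest content party outputs~$v^*$.

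It remains to show that every perplexed honest party's plurality computation also yields $v^*$. Letting $H_c$ denote the set of honest content parties (all with value $v^*$), we have $|H_c| \ge n-3t$. A perplexed honest $p_k$ takes the plurality over $C_k = \{j : x[j] = 0\}$ with its current $v[]$/$x[]$, having already accepted at least $n-t$ pairs at step~5, so $|H_c \cap R_k| \ge (n-3t)+(n-t)-n = n-4t$. Each such index $j$ contributes $v[j] = v^*$ (an honest $p_j$ A-Casts the same value in both rounds, so there is no ambiguity in how $v[j]$ gets set) and $x[j] = 0$ (since $p_j$ is content), so the $v^*$-tally in $C_k$ is at least $n-4t$. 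Conversely, any $j \in C_k$ with $v[j] \ne v^*$ must be corrupted (honest indices in $C_k$ correspond to honest content parties, which all vote $v^*$), bounding the non-$v^*$ tally by $t$. The plurality is therefore $v^*$ exactly when $n - 4t > t$, i.e., $t < n/5$, matching the claimed bound.

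The main obstacle will be the careful asynchronous accounting in the perplexed case. One subtlety is that $v[j]$ may be set either by $p_j$'s first or second A-Cast (which could disagree for corrupted $p_j$), so I must verify that this ambiguity never costs honest ``votes'' for $v^*$: honest senders A-Cast the same value in both rounds, so $v[j] = v_j$ unambiguously for $j \in H$. A second subtlety is that the adversary controls which $n-t$ pairs make up $R_k$; here the step-5 wait condition $|R_k| \ge n-t$ must be used to uniformly lower-bound $|H_c \cap R_k|$. Both points dovetail with the final plurality inequality $n-4t > t$, from which the threshold $t < n/5$ emerges tightly.
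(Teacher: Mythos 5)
Your proposal is correct and follows essentially the same route as the paper's proof: validity by a direct count, and agreement by casing on the binary A-BA output, showing all honest content parties share a single value $v^*$ and that every perplexed honest party recovers it by a majority/plurality count resting on the threshold $n-4t>t$. Your small variations (deriving the existence of at least $n-3t$ honest content parties directly from binary A-BA validity rather than by the paper's contrapositive argument, and proving uniqueness of the content value via disjoint witness sets rather than a majority-of-honest-parties count) are cosmetic rephrasings of the same counting argument.
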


\begin{proof}
The round complexity is easily established by inspection (considering A-Cast and binary A-BA to be ideal primitives). We now prove correctness:

\paragraph{Validity.} If all honest parties begin with the same initial value $v$, then no honest party is perplexed and all honest parties are not alert. The binary computation agrees $0$ and all honest parties (which are content) output $v$.

\paragraph{Agreement.} There are two cases: the binary computation agrees either $1$ or $0$. In the former case, all honest parties output a common default value. For the latter case, we show that all content parties have the same initial value, and that this value is correctly deduced by all the perplexed parties.

Note that each content party's input appears in at least $n-t-t=n-2t$ positions in its $v[]$ array. Since at most $t$ of these positions correspond to corrupted parties, we have that each content party has the same initial value as at least $n-3t$ honest parties, which represent a majority of the honest parties when $n>5t$. Of course there cannot be two distinct majorities, so all content parties must have the same initial value.

Furthermore, since the binary computation agrees $0$, there must be at least $2t+1$ honest content parties, for otherwise there would be at least $n-3t$ honest perplexed parties and all honest parties would be alert (at least $n-4t\ge t+1$ of the corresponding entries in any honest party's $x[]$ array would not be $\bot$) and agreement would be on $1$. Each perplexed party has $x[j]=0$ or $x[j]=\bot$ for all honest content parties and possibly for some corrupted parties. Even in the worst case, when $x[j]=\bot$ for $t$ of the honest content parties, there are still at least $2t+1-t=t+1$ honest content parties left but at most $t$ corrupted parties, so the content parties are a majority of those for which $x[j]=0$ and $v[j]\not=\bot$.
\end{proof}

It is not difficult to see that when the predefined default value comes from outside of $V$, the resulting multi-valued A-BA protocol is actually \emph{intrusion-tolerant} (see Section~\ref{sec:results}). We now show that $\patc$ is insecure when $t\ge n/5$ (i.e., that the analysis above is tight). Our attack does not make any assumptions regarding the underlying binary A-BA protocol, other than that it satisfies the properties of A-BA. Moreover, we require only static corruptions, and somewhat surprisingly the attack considers only binary inputs (i.e., using the extension when $|V|=2$ actually breaks the security of the binary protocol for $n/5\le t<n/3$).

\begin{proposition}
    For all input domains $V$, protocol $\patc^V$ does NOT securely implement multi-valued A-BA from A-Cast and binary A-BA, in the presence of a static and malicious $t$-adversary, for any $t\ge n/5$.
\end{proposition}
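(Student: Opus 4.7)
The plan is to exhibit an explicit static adversary that drives the protocol to a violation of agreement, using only binary inputs. The core idea is to force the binary A-BA to agree on $0$, so that every honest party reaches the content/perplexed branch of step 6, and then to arrange that some perplexed honest party computes a majority value (over the indices $j$ with $x[j] = 0$ in its view) that differs from the content honest parties' common input. The security proof for $t < n/5$ crucially relied on the fact that, within the $x[j] = 0$ coordinates of any honest party's view, the content honest entries strictly outnumber the at most $t$ corrupted entries; once $t \geq n/5$, the adversary has enough corrupted ``votes'' to tilt (or in the boundary case tie) this count.

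Fix $n,t$ with $t \geq n/5$ and $t < n/3$. The attack is described primarily for the strict regime $n < 5t$; the boundary $n = 5t$ is addressed by the same construction with the critical parameter $C = 2t$, which produces a tied majority that the protocol cannot resolve in an implementation-independent way. Choose an integer $C$ with $n - 3t \leq C \leq 2t - 1$, a nonempty range whenever $n < 5t$. Partition the $n - t$ honest parties into a set $\mathcal{A}$ of size $C$ with input $0$ and a set $\mathcal{B}$ of size $P = n - t - C$ with input $1$, and statically corrupt the remaining $t$ parties. The adversary has its corrupted parties (i) A-Cast $0$ in step 1, scheduling Bracha's echo/ready deliveries so that every party in $\mathcal{A}$ accepts this A-Cast promptly while no party in $\mathcal{B}$ accepts it before reaching step 6; (ii) A-Cast the pair $(1,0)$ in step 4; and (iii) input $a = 0$ to the binary A-BA.

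Next, the adversary schedules step 3 acceptance so that each $p \in \mathcal{A}$ sees $C + t$ zeros and $n - 2t - C \leq t$ ones in $v[\,]$, making $p$ content, while each $q \in \mathcal{B}$ accepts only the $n - t$ honest step 1 A-Casts and sees $C \geq t + 1$ disagreeing zeros, hence becoming perplexed. In step 5, each $p \in \mathcal{A}$ uses its $t$ skips on perplexed pairs, keeping at most $P - t \leq t$ perplexed entries in its view (so $a_p = 0$); each $q \in \mathcal{B}$ may well be alert, but crucially uses its $t$ skips entirely on $\mathcal{A}$'s step 4 pairs, accepting only $C - t$ content parties together with all $t$ corrupted pairs. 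Since at least $C + t \geq n - 2t$ binary-A-BA inputs equal $0$ (namely the $C$ content honest plus all $t$ corrupted), its validity property forces the binary output to $0$.

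In step 6, each $p \in \mathcal{A}$ outputs $v_p = 0$. For $q \in \mathcal{B}$, the indices $j$ with $x[j] = 0$ in $q$'s view are exactly the $C - t$ accepted content parties, for which $v[j] = 0$, together with the $t$ corrupted parties, for which $v[j] = 1$ (since $v[j]$ was $\bot$ after step 3 and got set by the step 4 pair's value component). The majority over this multiset of $C - t$ zeros and $t$ ones is a strict $1$ by $C \leq 2t - 1$, so $q$ outputs $1$, violating agreement with $\mathcal{A}$. The main obstacle is to verify realizability of the multi-layered asynchronous schedule: specifically, that a Bracha A-Cast from a corrupted sender can be delayed arbitrarily at a chosen subset of honest parties (this follows because eventual delivery is the only model constraint and is vacuous over any finite prefix of the execution), and that the staggered step 4 deliveries are consistent with the round-robin fetch semantics of $\fasmt$. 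Both reduce to standard adversarial scheduling in eventual-delivery networks, and the attack depends on no specific property of the underlying binary A-BA beyond agreement and validity.
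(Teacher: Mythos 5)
Your attack is a genuinely different construction from the paper's (which fixes $n=5t$, splits the honest parties into three groups with inputs $1,1,0$, and deliberately engineers a \emph{tie} in the perplexed majority computation), and for $n<5t$ it is actually cleaner in that it produces a strict agreement violation rather than an unresolvable tie. However, there is a genuine gap in how you force the outcome of the binary A-BA. You write that ``at least $C+t\ge n-2t$ binary-A-BA inputs equal $0$ \ldots its validity property forces the binary output to $0$.'' This is not the validity property of property-based binary A-BA, which is the model the appendix explicitly works in: standard validity only pins down the output when \emph{all honest parties} input the same value. What you are invoking is the $n-2t$-plurality rule baked into the paper's $\faba$ \emph{functionality}, but the appendix deliberately avoids that formalization. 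In your schedule, honest inputs to the binary A-BA are split---$\mathcal{A}$ (size $C$) inputs $0$ while parties in $\mathcal{B}$ may be alert and input $1$---so for a generic binary A-BA satisfying only agreement, validity, and termination, the adversary has no guarantee of forcing the output to $0$, and the protocol could simply default and preserve agreement.

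The paper avoids this by arranging that a set of $n-t=4t$ parties, all inputting $0$ ($A\cup B\cup T$), can run and terminate the binary A-BA without ever hearing from the remaining $t$ parties; from the perspective of the honest parties in $A\cup B$, this execution is indistinguishable from one in which the $t$ delayed parties are corrupt and silent, and in \emph{that} execution all honest parties input $0$, so validity fires. Your construction cannot use this argument whenever $C+t<n-t$, i.e.\ $C<n-2t$; and since you require $C\le 2t-1$, the indistinguishability route is available only when $n\le 4t-1$, leaving the regime $4t\le n\le 5t$ uncovered. To repair the proof you would either need to adopt the paper's ``all zero-inputters form a quorum of size $n-t$'' design, or explicitly restrict the claim to the $\faba$-hybrid model where the adversary's $\replacecmd$ power can be invoked when the plurality condition is not met (a weaker statement than the paper's). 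A further, more minor point: you use $C\ge t+1$ (to make $\mathcal{B}$ perplexed and to break the tie for uniqueness) but only impose $n-3t\le C\le 2t-1$; the additional lower bound $\max(n-3t,t+1)$ should be stated, and the range is then nonempty for $n<5t$ and $t\ge 2$.
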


\begin{proof}
It suffices to consider the case $n=5t$. We construct a distribution of inputs, a static adversary $\adv$, and a message scheduling such that agreement is broken.

Let $\{p_1,\ldots,p_{5t}\}$ be the set of parties. Define four sets $A\assign\{p_1,\ldots,p_{2t}\}$, $B\assign\{p_{2t+1},\ldots,p_{3t}\}$, $T\assign\{p_{3t+1},\ldots,p_{4t}\}$, and $C\assign\{p_{4t+1},\ldots,p_{5t}\}$. The parties in $T$ are statically corrupted by $\adv$ and instructed to start the protocol with input $0$. Suppose the parties in $A$ and $B$ start with input $1$, and parties in $C$ with input $0$. Now consider a message scheduling in which: 1) both A-Casts from parties in $C$ get delayed for parties in $A$, 2) the A-Casted input values from $p_1,\ldots,p_{t}\in A$ get delayed for parties in $B$ and $C$, 3) the A-Casted pairs from parties in $C$ get delayed for parties in $B$, and 4) the A-Casted pairs from $p_{t+1},\ldots,p_{2t}\in A$ get delayed for parties in $C$.

Observe that all parties in $A$ will A-Cast that they are content. $\adv$ also instructs parties in $T$ to maliciously A-Cast that they are content. On the other hand, the parties in both $B$ and $C$ will A-Cast that they are perplexed. Then, no party in $A$ or $B$ will be alert, whereas all parties in $C$ will be alert. We summarize all this information in Figure~\ref{fig:attack}.

\begin{figure}[ht!]
\centering
\begin{tabular}{|c||c|c|c|c|c|c}
    \cline{1-6}
     & $A_1$ & $A_2$ & $B$ & $\color{red}T$ & $C$ & \\
    \cline{1-6}
    \multirow{2}{1em}{\centering$A$} & $1$ & $1$ & $1$ & $\color{red}0$ & \cellcolor{lightgray} \\
    & $(1,\text{c})$ & $(1,\text{c})$ & $(1,\text{p})$ & $\color{red}(0,\text{c})$ & \cellcolor{lightgray} & not alert \\
    \cline{1-6}
    \multirow{2}{1em}{\centering$B$} & \cellcolor{lightgray} & $1$ & $1$ & $\color{red}0$ & $0$ \\
    & $(1,\text{c})$ & $(1,\text{c})$ & $(1,\text{p})$ & $\color{red}(0,\text{c})$ & \cellcolor{lightgray} & not alert \\
    \cline{1-6}
    \multirow{2}{1em}{\centering$C$} & \cellcolor{lightgray} & $1$ & $1$ & $\color{red}0$ & $0$ \\
    & $(1,\text{c})$ & \cellcolor{lightgray} & $(1,\text{p})$ & $\color{red}(0,\text{c})$ & $(0,\text{p})$ & alert \\
    \cline{1-6}
\end{tabular}
\caption{Summary of the A-Casts received by parties in $A$, $B$, and $C$, and their resulting alert status. The content and perplexed values are denoted by ``c'' and ``p,'' respectively. Note that each column corresponds to a set of $t$ parties ($A_1$ and $A_2$ denote $\{p_1,\ldots,p_t\}$ and $\{p_{t+1},\ldots,p_{2t}\}$, respectively).}
\label{fig:attack}
\end{figure}

Consequently, if $\adv$ instructs parties in $T$ to run the binary A-BA protocol with input $0$ (not alert), and delays messages from parties in $C$ during the computation, then agreement will be on $0$. Therefore, parties in $A$ will output their initial value $1$, while the parties in $B$ and $C$ will try to deduce the correct output from their arrays. All parties in $B$ can do this successfully ($2t$ of the $3t$ relevant entries in the $v[]$ array will be $1$). However, for parties in $C$, $t$ of the $2t$ relevant entries will be $1$ while the other $t$ will be $0$.
\end{proof}

It is possible to show that without the small modification in which parties A-Cast their input twice, the protocol is secure if and only if $t<n/6$. We have a method to extend binary to multi-valued A-BA in constant rounds for $t<n/4$, which is not much more complicated than the synchronous extension in~\cite{turpin1984extending} (although it requires \emph{two} invocations of the binary protocol). However, as discussed in Section~\ref{sec:functionalities}, to re-gain optimal resiliency $t<n/3$ we use the asynchronous extension of Most\'{e}faoui and Raynal~\cite{mostefaoui2017signature}. We leave open the possibility of obtaining expected-constant-round multi-valued A-BA via other approaches, such as substituting our multi-valued OCC in the binary A-BA protocol from~\cite{canetti1993fast} (and making appropriate changes). In any case, we prefer the solution in~\cite{mostefaoui2017signature} as it is a \emph{black-box} reduction.

\end{document}